\documentclass[reqno,a4paper, oneside, 11 pt]{amsart}

\usepackage[dvipsnames,svgnames]{xcolor}
\usepackage{graphicx} 
\usepackage{amsthm}
\usepackage{bm}
\usepackage{accents}
\usepackage{xcolor}
\usepackage{amsmath}
\usepackage[all]{xy} \SelectTips{eu}{}
\usepackage{latexsym,amsfonts,amssymb}
\usepackage{hyperref}
\hypersetup{colorlinks=true, linkcolor=blue}
\usepackage{cite}
\hypersetup{citecolor=red}
\usepackage{leftidx}
\usepackage[dvipsnames,svgnames,table]{xcolor}
\usepackage{nicefrac}
\usepackage{array}
\usepackage{mathrsfs}
\usepackage{rotating}
\usepackage{tikz-cd}
\usepackage{stmaryrd}
\usepackage{thmtools}
\usepackage{setspace}
\usepackage{geometry}
\usepackage{float}
\usepackage[utf8]{inputenc}
\usepackage[english]{babel}
\usepackage{framed}
\usepackage[dvipsnames]{xcolor}
\usepackage{tcolorbox}
\usepackage{bigints}
\usepackage{stackengine}
\DeclareMathAlphabet{\mathcal}{OMS}{cmsy}{m}{n}

\date{\today}

\newtheorem{theorem}{Theorem}[section]

\newtheorem{lemma}[theorem]{Lemma}
\newtheorem{proposition}[theorem]{Proposition}

\theoremstyle{definition}
\newtheorem{definition}[theorem]{Definition}

\newtheorem{remark}[theorem]{Remark}

\newtheorem{example}[theorem]{Example}
\AtBeginEnvironment{example}{%
  \pushQED{\qed}%
}
\AtEndEnvironment{example}{\popQED\endexample}

\newcommand{\DS}{\mathcal{D}^s}
\newcommand{\DSP}{\mathcal{D}^{s}_{+}}
\newcommand{\DSM}{\mathcal{D}^{s}_{\mu}}
\newcommand{\pl}{\bm{\varphi}}
\newcommand{\F}{\mathsf{F}}
\newcommand{\A}{\mathsf{A}}

\newcommand{\frontaffiliation}{%
  \vspace{0.2em}
  \begin{center}
    \footnotesize

    \textsuperscript{1}\,Dipartimento di Matematica ``Tullio Levi-Civita'',
    Università degli Studi di Padova,\\
    via Trieste 63, 35131 Padova, Italy\\
\textsuperscript{2}\,Gruppo Nazionale per la Fisica Matematica, Istituto Nazionale di Alta Matematica\\ ``Francesco Severi'', Sezione di Padova, Italy\\ 
      {\texttt{francesca.berlinghieri@math.unipd.it}};\quad{\texttt{giulio.giusteri@unipd.it}}
 \\
  \end{center}%
}
\makeatletter
\apptocmd{\@setauthors}{\frontaffiliation}{}{}
\makeatother

\begin{document}
\title[Inertial motion of incompressible continua]{Inertial motion of incompressible continua}
\author[Berlinghieri and Giusteri]{%
  Francesca Berlinghieri\textsuperscript{1,2},
  Giulio G. Giusteri\textsuperscript{1,2}}

\begin{abstract}
We study the inertial motion of incompressible continua within a geometric and variational framework, extending the classical Arnold--Ebin--Marsden theory from the group of volume-preserving diffeomorphisms of a fixed domain to configuration spaces of deformations with variable image.
In the latter case, the lack of a group structure requires proving some results that are instead immediate in the classical setting.
We show that the orientation-preserving deformations with suitable regularity constitute a Hilbert manifold and that volume-preserving deformations form a submanifold with tangent vectors that are mapped onto divergence-free vector fields by the Lagrangian-to-Eulerian-picture correspondence. 
In so doing, we also present the geometric structure corresponding to compressible continua.
The kinetic energy of the continuum gives rise to both a Lagrangian action, from which the equations of inertial motion are deduced, and a metric, with geodesics that are identified precisely by inertial motions.
While the inertial motion can coincide with a physical one only for incompressible perfect fluids, it can be used to provide a natural parametrization of the configuration manifold for generic continua.
We obtain a general result of local-in-time existence of solutions for the geodesic flow equation and
we present explicit examples showing that, for the same initial data, the geodesic followed by the continuum in the compressible case can leave the manifold of admissible deformations in finite time, while the corresponding incompressible geodesic exists for all times.\\[0.5em]
\textbf{Keywords:} Continuum kinematics, manifolds of mappings, volume-preserving deformations, inertial motion.\\[0.5em]
\textbf{Mathematics subject classification:} 58D15, 74A05, 76B03.
\end{abstract}

\maketitle

\section{Introduction and main results}
The motion of incompressible continua has long provided a fertile ground for the interplay between analysis, geometry, and mechanics. 
The geometric interpretation of the motion of incompressible perfect fluids in a fixed domain goes back to Arnold \cite{arnold1966geometrie}, who described the motions of such fluids as geodesic curves on the group of volume-preserving diffeomorphisms, that is inherently infinite-dimensional. This geometric formulation connects the analytical structure of the equations of motion with the differential geometry of the underlying configuration space and has since become a cornerstone in the modern understanding of hydrodynamics. Indeed, the study of infinite-dimensional manifolds of mappings, and in particular of
groups of diffeomorphisms, has a long and well-established history. Foundational
results on the differential structure of such spaces were developed by Leslie \cite{leslie1967differential}, and further extended by Ebin \cite{ebin1970manifold} and Omori \cite{omori1970group}, who introduced the
framework of ILH (inverse limit Hilbert) Lie groups. An alternative framework for global analysis on infinite-dimensional spaces is provided by the Convenient setting of Kriegl and Michor \cite{kriegl1997convenient}.

Ebin and Marsden \cite{ebin1970groups} further developed Arnold's approach and established that Euler equations for an incompressible perfect fluid in a fixed domain are geodesic equations on the manifold of volume-preserving diffeomorphisms of Sobolev class, that is a Hilbert manifold and also a topological group, that can be endowed with a Riemannian manifold structure induced by the right-invariant metric associated to the kinetic energy, and whose tangent vectors are represented by divergence-free velocity fields (see also \cite{ebin2015groups} for a more recent perspective on these ideas). This structure allows one to recast the analytical problem of fluid motion into a geometric problem of finding geodesics, to which the methods of global analysis can be applied.

While the Arnold--Ebin--Marsden approach was developed for the study of incompressible perfect fluids that flow within a fixed domain, it is important to understand whether analogous geometric and variational principles can be applied to a more general framework, where the configuration of the body is described by a generic volume-preserving deformation of a material domain $\Omega_0\subset\mathbb{R}^3$ into the ambient space, rather than a diffeomorphism of a fixed domain onto itself. 
In contrast with the classical framework, this configuration space does not carry a natural group structure, since compositions of admissible deformations are not well-defined. 
A closely related covariant derivation of the equations of motion for continua with variable current configurations was analyzed, in order to describe the dynamics of residually-stressed elastic bodies, by Kupferman, Olami and Segev \cite{Segev2017}. 

The main aim of the present work is to investigate, within the Sobolev setting, the geodesic flow on the configuration manifold of continua undergoing generic volume-preserving deformations, thereby extending the Arnold--Ebin--Marsden approach to a situation that can be relevant both in fluid and solid mechanics.
In such a broader setting, the Lagrangian picture provides a natural description of the motion in coordinates on the material domain that remains fixed in time, while the domain of the Eulerian formulation may constantly change. While the classical works deal with diffeomorphisms on Riemannian manifolds, we only discuss the case of Euclidean manifolds to simplify the presentation, but we expect that all results can be easily generalized to the Riemannian case.


The mathematical structure underlying this problem exhibits both geometric and analytic challenges. 
On one hand, one needs to characterize the manifold of admissible deformations, showing that it inherits a differentiable structure compatible with the Sobolev topology of the underlying space of mappings. To this end, we consider a material manifold $\Omega_0 \subset \mathbb{R}^3$, that is a bounded connected domain with Lipschitz boundary, and $s\ge 3$ integer, and define the configuration space of Sobolev orientation-preserving deformations
\begin{equation*}
\begin{aligned}
   \DSP := \{ &\pl \in H^s(\Omega_0, \mathbb{R}^3) \mid \pl \text{ is a bijection onto } \pl(\Omega_0),\\
   &\pl^{-1} \in C^1(\pl(\overline{\Omega}_0), \mathbb{R}^3) \text{ and } \det\nabla\pl>0\}.
\end{aligned}
\end{equation*}
In Section \ref{sec:manfomap}, we show that $\DSP$ is open in $H^s(\Omega_0, \mathbb{R}^3)$ and, consequently, $\DSP$ is a Hilbert manifold modeled on $H^s(\Omega_0, \mathbb{R}^3)$. 
We then prove that, despite the absence of a group structure, the stability of Sobolev regularity under inversion persists.
Namely, we show that
\begin{equation*}
\begin{aligned}
\DSP
\subseteq
\Big\{
&\pl \in H^s(\Omega_0, \mathbb{R}^3) \mid \pl \text{ is a bijection onto } \pl(\Omega_0) \text{, }\\
&\det\nabla\pl(\bm{X})>0 \text{ and } \pl^{-1}\in H^s(\pl(\Omega_0),\mathbb R^3)
\Big\},
\end{aligned}
\end{equation*}
which implies a gain in regularity of inverse mappings that is key in connecting the regularity of Eulerian fields to that of Lagrangian fields.


In Section \ref{sec:flow_equation}, we derive the Lagrangian form of the inertial motion of an incompressible continuum, that represents the geodesic flow on 
\[
\DSM
=
\{\bm{\varphi} \in \DSP \mid \det\nabla \bm{\varphi} =1\}\,.
\]
While in the case of an incompressible perfect fluid, the inertial motion coincides with the real motion in the absence of external forces or tractions, we stress that it can be used even for elastic solids to provide a (local) parametrization of the manifold of admissible deformations.

Section \ref{sec:geodesics} is devoted to proving that $\mathcal{D}^s_{\mu}$ is a submanifold of $\mathcal{D}^s_+$, with
\[
T_{\bm{\varphi}}\mathcal{D}^s_\mu=\{\bm{v}\circ\bm{\varphi}\in H^s(\Omega_0,\mathbb{R}^3)\mid \operatorname{div}\bm{v} = 0\}.
\]
The latter characterization highlights the usefulness of Eulerian fields in providing an easier description of incompressibility.
The geometric identification of inertial motions as geodesics opens the way to a local well-posedness analysis, along the lines of the works of Ebin and Marsden \cite{ebin1970groups} and Ebin \cite{ebin2015groups}. 

The global-in-time existence of geodesic flows for incompressible continua remains an open question. While in the case of compressible continua it is easy to exhibit examples in which the geodesic flow on $\DSP$ hits the manifold boundary in a finite time, the incompressibility constraint prevents that type of singularity and allows to prove, in specific cases, that the geodesic exists for all times. In Section \ref{sec:examples}, we present examples illustrating both scenarios.

In summary, the results developed in this work show that, in the case of an incompressible continuum that can deform arbitrarily in space, we can apply the geometric features of the classical Arnold--Ebin--Marsden framework, in spite of the absence of an underlying group structure. 
In our setting, the Sobolev regularity of inverse deformations allows to pass
consistently between the Lagrangian and Eulerian descriptions, characterize the
tangent spaces to $\DSM$ in terms of divergence-free Eulerian fields, and identify
inertial motions with geodesics of the kinetic-energy metric. The resulting
geodesic flow admits a locally well-posed formulation and could therefore be exploited to provide a natural parametrization of the configuration manifold for generic continua, beyond the classical setting of incompressible perfect fluids. 
We plan to investigate in future works the possible use of such a construction in the context of nonlinear elasticity.

\section{Preliminaries on manifolds of mappings}\label{sec:manfomap}

It is well known that the vector space $H^s(\Omega_0, \mathbb{R}^3)$ of Sobolev functions with domain $\Omega_0\subset \mathbb{R}^3$ has a natural structure of infinite-dimensional Hilbert manifold modeled on itself. 
For the purposes of this work, we restrict our attention to the space of Sobolev functions over $\Omega_0$ that are invertible onto their image and with regular inverse. In this section, we show that this space is an open subset of $H^s(\Omega_0, \mathbb{R}^3)$ and thus inherits the structure of differentiable manifold modeled on $H^s(\Omega_0, \mathbb{R}^3)$. Moreover, we prove that Sobolev regularity is transferred to the inverse mappings. This property will play a key role in relating Lagrangian and Eulerian fields in the subsequent analysis.

\begin{definition} Let $\Omega_0 \subset \mathbb{R}^3$ be a bounded connected domain with Lipschitz boundary and let $s$ be an integer with $s \geq 3$. We define
\[
\DS := \{ \pl \in H^s(\Omega_0, \mathbb{R}^3) \mid \pl \text{ is a bijection onto } \pl(\Omega_0) \text{ and } \pl^{-1} \in C^1(\pl(\overline{\Omega}_0), \mathbb{R}^3) \}.
\]
For any $\pl\in\DS$, we introduce the notation 
\begin{equation}
\label{def grad}
\F:=\nabla\pl
\end{equation}
and then define
\begin{equation}
\label{DSP}
\DSP:=\Big\{\pl\in \DS:\ \det\F(\bm{X})>0\ \text{for all }\bm{X}\in\overline{\Omega}_0\Big\}.
\end{equation}
\end{definition}

\begin{proposition}\label{prop:Ds1}
Let $\Omega_0 \subset \mathbb{R}^3$ be a bounded connected domain with Lipschitz boundary and let $s$ be an integer with $s \geq 3$. 
Then $\DS$ coincides with the intersection of $H^s(\Omega_0, \mathbb{R}^3)$ and the space of $C^1$ embeddings from $\overline{\Omega}_0$ into $\mathbb{R}^3$, namely
\[
\DS = H^s(\Omega_0, \mathbb{R}^3) \;\cap\; \operatorname{Emb}^1(\overline{\Omega}_0,\mathbb{R}^3).
\]
\end{proposition}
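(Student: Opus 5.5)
We prove the two inclusions separately, with the Sobolev embedding as the only analytic input. Since $\Omega_0$ is a bounded Lipschitz domain and $s\ge 3>\tfrac32+1$, the embedding $H^s(\Omega_0,\mathbb{R}^3)\hookrightarrow C^{1}(\overline{\Omega}_0,\mathbb{R}^3)$ holds; in fact the target is $C^{1,\alpha}$. So every $\pl\in H^s$ has a representative that is $C^1$ up to the boundary, with $\F=\nabla\pl$ continuous on $\overline{\Omega}_0$. Throughout, $\operatorname{Emb}^1(\overline{\Omega}_0,\mathbb{R}^3)$ means the $C^1$ maps on $\overline{\Omega}_0$ that are injective immersions and homeomorphisms onto their image. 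Here $\overline{\Omega}_0$ is compact, so injectivity plus continuity already gives a homeomorphism onto the image.

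For the inclusion $\DS\subseteq H^s\cap\operatorname{Emb}^1$, take $\pl\in\DS$. By the embedding, $\pl\in C^1(\overline{\Omega}_0)$.

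First we check injectivity on all of $\overline{\Omega}_0$. The hypothesis $\pl^{-1}\in C^1(\pl(\overline{\Omega}_0))$ is read as saying that $\pl$ is a bijection $\overline{\Omega}_0\to\pl(\overline{\Omega}_0)$ whose inverse extends as stated. If instead only injectivity on $\Omega_0$ is given, we argue via continuity of the extended inverse $\bm\psi$. Suppose $\pl(\bm X_1)=\pl(\bm X_2)$ with $\bm X_1\neq\bm X_2$ in $\overline{\Omega}_0$. Approximating $\bm X_i$ by interior points and using $\bm\psi\circ\pl=\mathrm{id}$ on $\Omega_0$ together with continuity of $\bm\psi$, we get $\bm X_1=\bm\psi(\pl(\bm X_1))=\bm\psi(\pl(\bm X_2))=\bm X_2$, a contradiction.

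Next we check the immersion property. Differentiating $\pl^{-1}\circ\pl=\mathrm{id}$ by the chain rule gives $\nabla(\pl^{-1})(\pl(\bm X))\,\F(\bm X)=I$. Hence $\F(\bm X)$ is invertible for every $\bm X\in\overline{\Omega}_0$. At boundary points the same identity follows from the interior one by continuity of both factors. Together with compactness, this shows $\pl\in\operatorname{Emb}^1$.

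For the reverse inclusion, take $\pl\in H^s\cap\operatorname{Emb}^1$. Then $\pl$ is a bijection onto its image and $\pl^{-1}$ is continuous. On the open set $\pl(\Omega_0)$, the inverse function theorem applied at each interior point, using $\det\F\neq0$, shows that $\pl(\Omega_0)$ is open. It also shows that $\pl^{-1}$ is $C^1$ there with $\nabla\pl^{-1}=\F^{-1}\circ\pl^{-1}$.

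The main obstacle is the boundary, namely showing $\pl^{-1}\in C^1(\pl(\overline{\Omega}_0))$ in the sense of $C^1$ up to the closure of the image. The plan is as follows.
\begin{enumerate}
\item Since $\F$ is continuous and invertible on the compact set $\overline{\Omega}_0$, the map $\F^{-1}$ is uniformly continuous and bounded.
\item Since $\pl^{-1}$ is a homeomorphism between compact sets, it is uniformly continuous.
\item Hence $\F^{-1}\circ\pl^{-1}$ extends continuously to $\pl(\overline{\Omega}_0)$.
\item This continuous extension is the derivative of $\pl^{-1}$ up to the boundary.
\end{enumerate}

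If a Whitney-type notion of $C^1$ on closed sets is needed for the last step, we would argue differently. Extend $\pl$ to a $C^1$ map $\tilde{\pl}$ on a neighbourhood of $\overline{\Omega}_0$, using a Sobolev extension operator for Lipschitz domains. Injectivity and invertibility of $\nabla\tilde{\pl}$ persist on a small neighbourhood, by compactness and the standard argument that a $C^1$ map that is injective with invertible derivative on a compact set is injective on a neighbourhood. Applying the inverse function theorem to $\tilde{\pl}$ then produces a $C^1$ inverse on an open set containing $\pl(\overline{\Omega}_0)$, whose restriction is $\pl^{-1}$. This gives $\pl\in\DS$.
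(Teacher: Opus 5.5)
Your proof is correct and follows essentially the same route as the paper. Both use the embedding $H^s(\Omega_0,\mathbb{R}^3)\hookrightarrow C^1(\overline{\Omega}_0,\mathbb{R}^3)$, the chain rule on $\pl^{-1}\circ\pl=\mathrm{Id}$ to get invertibility of $\F$ on $\overline{\Omega}_0$, and the inverse function theorem for the reverse inclusion.

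You differ only in the care taken at the boundary. You derive injectivity on all of $\overline{\Omega}_0$ from continuity of the inverse. For regularity of the inverse up to the boundary, your second plan extends $\pl$ to a neighbourhood on which it stays injective with invertible gradient. This supplies the justification the paper leaves to a citation. Only that second plan is fully argued: step 4 of your first plan simply asserts the point that needs proving.
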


\begin{proof}
Let us denote 
\[
\mathcal{E} := H^s(\Omega_0, \mathbb{R}^3) \;\cap\; \operatorname{Emb}^1(\overline{\Omega}_0,\mathbb{R}^3).
\]
We first prove the inclusion $\DS \subseteq \mathcal{E}$. Let $\pl \in \DS$. Then we have $\pl \in H^s(\Omega_0, \mathbb{R}^3)$ and $\pl^{-1} \in C^1(\pl(\overline{\Omega}_0), \mathbb{R}^3)$. By the Sobolev embedding theorem for $s > 5/2$, we have
\[
H^s(\Omega_0, \mathbb{R}^3) \hookrightarrow C^1(\overline{\Omega}_0, \mathbb{R}^3),\]
so $\pl\in C^1(\overline{\Omega}_0, \mathbb{R}^3)$.  
Moreover, since $\pl^{-1} \circ \pl = \mathrm{Id}$, the chain rule gives
\[
(\nabla\pl^{-1})(\pl(\bm{X}))\F(\bm{X}) = \mathsf{I} \quad \forall \bm{X} \in \overline{\Omega}_0,
\]
with $\mathsf{I}$ the identity matrix, so $\F(\bm{X})$ is invertible at any point of $\overline{\Omega}_0$. Combined with the injectivity of $\pl$, this shows that $\pl$ is a $C^1$ embedding. Hence $\pl \in \mathcal{E}$.

Now, let us prove the other inclusion $\mathcal{E} \subseteq \DS$. Let $\pl \in \mathcal{E}$. Then $\pl \in H^s(\Omega_0, \mathbb{R}^3)$ and $\pl$ is a $C^1$ embedding, i.e., $\pl$ is injective, $\F(\bm{X})$ is invertible for all $\bm{X} \in \overline{\Omega}_0$, and $\pl$ is a homeomorphism onto its image. Thus $\pl^{-1}\in C^0(\pl(\overline{\Omega}_0), \mathbb{R}^3)$. By the $C^1$ inverse function theorem, $\pl^{-1}$ is locally $C^1$. Since $\pl$ is by definition globally invertible and with continuous inverse, $\pl^{-1}$ extends to a $C^1$ map on all of $\pl(\overline{\Omega}_0)$ (see \cite{lee2013smooth}). Therefore, $\pl \in \DS$.
\end{proof}

\begin{proposition}
\label{closures}
Let $\Omega_0 \subset \mathbb{R}^3$ be a bounded connected domain with Lipschitz boundary and let 
$\pl \in H^s(\Omega_0,\mathbb{R}^3)$ with $s \geq 3$ integer. If we set $\Omega=\pl(\Omega_0)$, then
\[
\overline{\Omega} = \pl({\overline{\Omega}}_0).
\]
\end{proposition}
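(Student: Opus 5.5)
The statement reduces to two inclusions, both relying on the continuity of $\pl$ up to the boundary. By the Sobolev embedding $H^s(\Omega_0,\mathbb{R}^3)\hookrightarrow C^1(\overline{\Omega}_0,\mathbb{R}^3)$ for $s>5/2$, already used in Proposition~\ref{prop:Ds1}, $\pl$ has a representative that is continuous on $\overline{\Omega}_0$. We always work with this representative, so that $\pl(\overline{\Omega}_0)$ is meaningful.

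\emph{Inclusion $\pl(\overline{\Omega}_0)\subseteq\overline{\Omega}$.} Take $\bm{X}\in\overline{\Omega}_0$ and choose a sequence $\bm{X}_n\in\Omega_0$ with $\bm{X}_n\to\bm{X}$. Continuity of $\pl$ on $\overline{\Omega}_0$ gives $\pl(\bm{X}_n)\to\pl(\bm{X})$. Since $\pl(\bm{X}_n)\in\Omega$, the limit $\pl(\bm{X})$ lies in $\overline{\Omega}$. Equivalently, $\pl(\overline{\Omega_0})\subseteq\overline{\pl(\Omega_0)}$ holds for any continuous map.

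\emph{Inclusion $\overline{\Omega}\subseteq\pl(\overline{\Omega}_0)$.} Since $\Omega_0$ is bounded, $\overline{\Omega}_0$ is compact. A continuous map sends compact sets to compact sets, so $\pl(\overline{\Omega}_0)$ is compact and hence closed in $\mathbb{R}^3$. It contains $\pl(\Omega_0)=\Omega$, so it contains $\overline{\Omega}$.

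There is no real obstacle here. The only delicate point is the interpretation: $\pl$ is a priori an equivalence class in $H^s$, so one must fix the continuous representative given by the embedding. Boundedness of $\Omega_0$ is what provides compactness. The Lipschitz boundary hypothesis is needed only to justify the Sobolev embedding up to the boundary, through the existence of an extension operator.
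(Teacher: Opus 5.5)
Your proof is correct and follows essentially the same route as the paper. The Sobolev embedding gives continuity on $\overline{\Omega}_0$, which yields $\pl(\overline{\Omega}_0)\subseteq\overline{\Omega}$; compactness of $\overline{\Omega}_0$ makes $\pl(\overline{\Omega}_0)$ a closed set containing $\Omega$, which gives the reverse inclusion (your remark on fixing the continuous representative is a useful clarification the paper leaves implicit).
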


\begin{proof}
Since $s>5/2$ and $\Omega_0$ has Lipschitz boundary, the Sobolev embedding theorem 
guarantees that $\pl$ is continuous on ${\overline{\Omega}}_0$.
The inclusion $\pl({\overline{\Omega}}_0) \subseteq \overline{\pl(\Omega_0)}$ follows directly from the continuity of $\pl$ (see \cite{munkres2000}).

Since ${\overline{\Omega}}_0$ is compact and $\pl$ is 
continuous, the image $\pl({\overline{\Omega}}_0)$ is compact, and hence closed. Moreover, 
$\pl(\Omega_0)\subseteq \pl({\overline{\Omega}}_0)$. As $\pl({\overline{\Omega}}_0)$ is a closed set containing 
$\pl(\Omega_0)$, by definition of closure it follows that
\(
\overline{\pl(\Omega_0)} \subseteq \pl({\overline{\Omega}}_0).
\)
\end{proof}

\begin{proposition}
Let $\Omega_0 \subset \mathbb{R}^3$ be a bounded connected domain with Lipschitz boundary and let $s$ be an integer with $s \geq 3$. Then $\DS$ 
is open in $H^s(\Omega_0, \mathbb{R}^3)$.
\end{proposition}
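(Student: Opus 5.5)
By Proposition \ref{prop:Ds1}, $\DS = H^s(\Omega_0,\mathbb{R}^3)\cap \operatorname{Emb}^1(\overline{\Omega}_0,\mathbb{R}^3)$. The plan is to reduce the claim to openness of $C^1$ embeddings of the compact set $\overline{\Omega}_0$ in the $C^1$ topology. The continuous Sobolev embedding $H^s(\Omega_0,\mathbb{R}^3)\hookrightarrow C^1(\overline{\Omega}_0,\mathbb{R}^3)$ (valid for $s>5/2$ on Lipschitz domains) will then transfer openness. Concretely, I fix $\pl\in\DS$ and look for $\delta>0$ such that every $\bm{\psi}\in C^1(\overline{\Omega}_0,\mathbb{R}^3)$ with $\|\bm{\psi}-\pl\|_{C^1}<\delta$ is an injective immersion of $\overline{\Omega}_0$. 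Then $\|\bm{\psi}-\pl\|_{H^s}<\delta/C$ suffices, where $C$ is the embedding constant. Since $\overline{\Omega}_0$ is compact, an injective continuous map is a homeomorphism onto its image, so such a $\bm{\psi}$ is a $C^1$ embedding and lies in $\DS$.

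\textbf{Immersion part.} The map $\bm{X}\mapsto\det\F(\bm{X})$ is continuous and nonvanishing on the compact set $\overline{\Omega}_0$, so $|\det\F|\ge m>0$. Since $\det$ is locally Lipschitz on bounded sets of matrices, a small uniform perturbation of $\F$ keeps $|\det\nabla\bm{\psi}|\ge m/2$. A convenient quantitative form is
\[
\|\F(\bm{X})^{-1}\|\le K \quad\text{for all } \bm{X}\in\overline{\Omega}_0,
\]
so that $\|\nabla\bm{\psi}-\F\|_\infty<1/(2K)$ gives invertibility of $\nabla\bm{\psi}$ everywhere.

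\textbf{Injectivity part (main obstacle).} Since $\overline{\Omega}_0$ need not be convex, I cannot simply integrate $\nabla\bm{\psi}$ along segments. I will split pairs of points into near and far pairs.

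For near pairs, I use that $\pl^{-1}$ is $C^1$ on the compact set $\pl(\overline{\Omega}_0)$, so $\pl$ is bi-Lipschitz: $|\pl(\bm{X})-\pl(\bm{Y})|\ge c|\bm{X}-\bm{Y}|$. Here I intend to extend $\pl^{-1}$ to a $C^1$ map on a neighbourhood (Whitney extension, or simply the definition of $C^1$ on a closed set) to get a global Lipschitz bound. Then I write
\[
\bm{\psi}(\bm{X})-\bm{\psi}(\bm{Y}) = \pl(\bm{X})-\pl(\bm{Y}) + (\bm{\psi}-\pl)(\bm{X})-(\bm{\psi}-\pl)(\bm{Y}).
\]
The difference $\bm{h}=\bm{\psi}-\pl$ has small $C^1$ norm, and I need $|\bm{h}(\bm{X})-\bm{h}(\bm{Y})|\le L_\Omega\|\nabla \bm{h}\|_\infty|\bm{X}-\bm{Y}|$. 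This holds because a bounded Lipschitz domain is quasiconvex: any two points are joined by a path in $\overline{\Omega}_0$ of length at most $L_\Omega|\bm{X}-\bm{Y}|$. Choosing $\|\nabla\bm h\|_\infty<c/(2L_\Omega)$ then gives $|\bm{\psi}(\bm{X})-\bm{\psi}(\bm{Y})|\ge (c/2)|\bm{X}-\bm{Y}|>0$ for all $\bm{X}\neq\bm{Y}$.

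This single estimate in fact handles near and far pairs simultaneously, so a separate compactness argument on $\{|\bm{X}-\bm{Y}|\ge r\}$ is unnecessary. If quasiconvexity is to be avoided, I would instead argue by contradiction: take sequences $\bm{\psi}_n\to\pl$ in $C^1$ and $\bm{X}_n\ne\bm{Y}_n$ with $\bm{\psi}_n(\bm{X}_n)=\bm{\psi}_n(\bm{Y}_n)$, pass to limits, and use the injectivity of $\pl$ together with the local inverse function theorem near the limit point to get a contradiction.

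Finally, orientation-preservation is not needed here. Openness in $H^s$ follows from the continuity of the inclusion into $C^1$.
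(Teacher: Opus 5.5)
Your proof is correct. It shares the paper's reduction: both write $\DS = H^s(\Omega_0,\mathbb{R}^3)\cap\operatorname{Emb}^1(\overline{\Omega}_0,\mathbb{R}^3)$ via Proposition~\ref{prop:Ds1} and pull back an open set under the continuous inclusion $H^s\hookrightarrow C^1(\overline{\Omega}_0,\mathbb{R}^3)$. The difference is the key lemma. The paper simply cites the textbook fact (Hirsch) that $C^1$ embeddings of a compact domain are open in the $C^1$ topology. You prove it by hand: a uniform bound $\|\F^{-1}\|\le K$ handles the immersion part, and a perturbation-stable bi-Lipschitz estimate $|\bm{\psi}(\bm{X})-\bm{\psi}(\bm{Y})|\ge (c/2)|\bm{X}-\bm{Y}|$, drawn from the quasiconvexity of bounded connected Lipschitz domains, handles injectivity.

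What your route buys is that it applies directly to $\overline{\Omega}_0$ with merely Lipschitz boundary. A closed cube, for instance, is not a $C^1$ submanifold with boundary of $\mathbb{R}^3$, so the cited theorem, which is formulated for $C^r$ manifolds, does not literally cover it. You also get an explicit radius $\delta$ in terms of $K$, $c$, $L_\Omega$ and the embedding constant, plus a uniform bi-Lipschitz constant on the whole neighbourhood. The paper's route buys only brevity.

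One small refinement: to obtain the lower constant $c$ for $\pl$ you do not need a Whitney extension of $\pl^{-1}$. If $C^1(\pl(\overline{\Omega}_0))$ only means $C^1$ up to the boundary, that extension would require checking Whitney's compatibility condition on $\pl(\overline{\Omega}_0)$. Instead, the quasiconvexity you already use gives $|\pl(\bm{Y})-\pl(\bm{X})-\F(\bm{X})(\bm{Y}-\bm{X})|\le L_\Omega\,\omega(L_\Omega|\bm{X}-\bm{Y}|)\,|\bm{X}-\bm{Y}|$, where $\omega$ is the modulus of continuity of $\F$ on $\overline{\Omega}_0$. This handles pairs with $|\bm{X}-\bm{Y}|<r_0$. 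Injectivity of $\pl$ and compactness of $\{|\bm{X}-\bm{Y}|\ge r_0\}$ handle the remaining pairs.
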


\begin{proof}
By Proposition~\ref{prop:Ds1}, we know that
\[
\DS = H^s(\Omega_0, \mathbb{R}^3) \;\cap\; \operatorname{Emb}^1(\overline{\Omega}_0,\mathbb{R}^3).
\]
It is a standard result that the set of $C^1$ embeddings of a compact domain $M$ into $\mathbb{R}^3$ is an open subset of $C^1(M, \mathbb{R}^3)$ in the $C^1$-topology (see \cite{hirsch1976differential}). Since $\overline{\Omega}_0$ is closed and bounded in $\mathbb{R}^3$, and hence compact, we thus have that $\operatorname{Emb}^1(\overline{\Omega}_0,\mathbb{R}^3)$ is open in $C^1(\overline{\Omega}_0, \mathbb{R}^3)$.

Since $s>5/2$, the Sobolev embedding theorem gives a continuous inclusion
\[
H^s(\Omega_0, \mathbb{R}^3) \hookrightarrow C^1(\overline{\Omega}_0, \mathbb{R}^3).
\]
Hence, if $\operatorname{Emb}^1(\overline{\Omega}_0,\mathbb{R}^3)$ is open in $C^1(\overline{\Omega}_0, \mathbb{R}^3)$, its preimage under the continuous inclusion is open in $H^s(\Omega_0, \mathbb{R}^3)$. But this preimage is precisely
\[
\DS = H^s(\Omega_0, \mathbb{R}^3) \cap \operatorname{Emb}^1(\overline{\Omega}_0,\mathbb{R}^3),
\]
and therefore $\DS$ is open in $H^s(\Omega_0, \mathbb{R}^3)$.
\end{proof}

\begin{theorem}
\label{thm: open}
Let $\Omega_0 \subset \mathbb{R}^3$ be a bounded connected domain with Lipschitz boundary and let $s$ be an integer with $s \geq 3$. Then $\DS$
is a Hilbert manifold modeled on $H^s(\Omega_0, \mathbb{R}^3)$. In particular, $\DS$ is a (Hilbert) submanifold of $H^s(\Omega_0, \mathbb{R}^3)$.
\end{theorem}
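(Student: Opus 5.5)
The plan is to deduce the statement directly from the openness of $\DS$ proved in the preceding Proposition, using the general fact that an open subset of a Hilbert space carries a canonical Hilbert manifold structure modeled on that space.

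First, I recall that $H^s(\Omega_0,\mathbb{R}^3)$ is a separable Hilbert space with the inner product
\[
\langle \bm{u},\bm{w}\rangle_{H^s}=\sum_{|\alpha|\le s}\int_{\Omega_0}\partial^\alpha\bm{u}\cdot\partial^\alpha\bm{w}\,d\bm{X},
\]
so it is trivially a Hilbert manifold modeled on itself. The atlas consists of the single chart given by the identity map.

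Next, since $\DS$ is open in $H^s(\Omega_0,\mathbb{R}^3)$, I equip it with the atlas consisting of the single chart $(\DS,\iota)$, where $\iota:\DS\hookrightarrow H^s(\Omega_0,\mathbb{R}^3)$ is the inclusion. The map $\iota$ is a homeomorphism onto the open set $\iota(\DS)=\DS$. With only one chart there are no transition maps to check for smoothness, so this atlas defines a $C^\infty$ Hilbert manifold structure modeled on $H^s(\Omega_0,\mathbb{R}^3)$. The topology induced by this structure is the subspace topology, which is Hausdorff and, if needed, second countable and paracompact, since these properties pass from the metric space $H^s$ to its subsets. At each point the tangent space is canonically identified as $T_{\pl}\DS\cong H^s(\Omega_0,\mathbb{R}^3)$.

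Finally, for the submanifold assertion, I use that an open subset $U$ of a manifold $M$ is an (open, codimension-zero) submanifold. In the submanifold-chart definition, the chart $\mathrm{Id}:H^s\to H^s$ restricted to $\DS$ maps $\DS$ onto $\DS\cap(H^s\times\{0\})$ with the trivial splitting $H^s=H^s\oplus\{0\}$, and the splitting is closed. The inclusion is therefore a smooth embedding with injective, split differential equal to the identity.

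There is no real obstacle here: all the analytic content sits in the earlier openness result, and in Proposition~\ref{prop:Ds1}, which relies on the Sobolev embedding $H^s\hookrightarrow C^1$ for $s>5/2$. The only point requiring a little care is to state explicitly which notion of (sub)manifold is used. Openness then makes every definition hold, with the chart maps being restrictions of the identity.
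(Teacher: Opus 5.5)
Your proposal is correct and follows essentially the same route as the paper: both use the openness of $\DS$ in $H^s(\Omega_0,\mathbb{R}^3)$ to equip it with the single chart given by the identity (inclusion) map. Your extra remarks on the induced topology and the codimension-zero submanifold chart only make explicit what the paper cites as a standard fact.
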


\begin{proof}
It is a standard fact in the theory of differentiable manifolds that any open subset of a Hilbert space naturally inherits the structure of a Hilbert manifold modeled on the same Hilbert space (see \cite{lee2013smooth}).
From the previous proposition, we know that $\DS$ is open in $H^s(\Omega_0, \mathbb{R}^3)$. Since $H^s(\Omega_0, \mathbb{R}^3)$ is a Hilbert space, the openness of $\DS \subset H^s(\Omega_0, \mathbb{R}^3)$ implies that the single chart $(\DS, Id_{\DS})$ defines a differentiable structure on $\DS$. Therefore, $\DS$ carries the structure of a Hilbert manifold modeled on $H^s(\Omega_0, \mathbb{R}^3)$.
\end{proof}

As a direct consequence of the result above, the openness property is preserved under the additional orientation–preserving constraint.

\begin{theorem}
\label{orientation-preserving open}
Let $\Omega_0\subset\mathbb R^3$ be a bounded connected domain with Lipschitz boundary and let $s$ be an integer with $s \geq 3$. Then the set $\DSP$, defined in \eqref{DSP}, is open in $H^s(\Omega_0,\mathbb R^3)$, and hence a submanifold modeled on $H^s(\Omega_0,\mathbb R^3)$.
\end{theorem}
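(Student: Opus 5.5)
The plan is to show that $\DSP$ is open inside $\DS$. Since $\DS$ is open in $H^s(\Omega_0,\mathbb R^3)$ by the preceding proposition, openness in $H^s(\Omega_0,\mathbb R^3)$ then follows. The manifold statement comes for free, by the same argument as in Theorem~\ref{thm: open}: an open subset of a Hilbert space is a Hilbert manifold with the single chart given by the inclusion.

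Fix $\pl\in\DSP$. The function $\bm X\mapsto\det\F(\bm X)$ is continuous on the compact set $\overline{\Omega}_0$, because the Sobolev embedding $H^s\hookrightarrow C^1(\overline{\Omega}_0)$ holds for $s\ge 3>5/2$ on Lipschitz domains. It is positive there, so it attains a minimum $m>0$.

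Next I use that the determinant is a polynomial of degree $3$ in the matrix entries, hence locally Lipschitz on bounded sets of matrices. Concretely, there is a constant $C$, depending on $\|\F\|_{C^0}$, such that
\[
|\det \mathsf A-\det \mathsf B|\le C\,\|\mathsf A-\mathsf B\|
\]
whenever $\|\mathsf A-\mathsf B\|\le 1$ and $\|\mathsf B\|\le\|\F\|_{C^0}$. Now let $\bm\psi\in\DS$ be any other map. By the embedding constant $K$ we have
\[
\|\nabla\bm\psi-\F\|_{C^0}\le K\|\bm\psi-\pl\|_{H^s}.
\]
Choose $\delta>0$ small enough that $K\delta\le 1$ and $CK\delta<m$, and also small enough that the $H^s$-ball of radius $\delta$ about $\pl$ lies in $\DS$. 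Every $\bm\psi$ in that ball then satisfies $\det\nabla\bm\psi(\bm X)>m-CK\delta>0$ for all $\bm X\in\overline{\Omega}_0$, so $\bm\psi\in\DSP$. Equivalently, $\DSP$ is the intersection of $\DS$ with the preimage of the open set $\{f\in C^0(\overline{\Omega}_0):\min f>0\}$ under the continuous map $\pl\mapsto\det\nabla\pl$, taken from $H^s$ through $C^1$ into $C^0$. That is a cleaner way to phrase the same argument.

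The only point needing care, and the one I regard as the main obstacle, is that the positivity must be uniform on the closed set $\overline{\Omega}_0$ rather than merely pointwise. This is handled by compactness together with the $C^1$ Sobolev embedding up to the boundary, which is valid for Lipschitz $\Omega_0$.
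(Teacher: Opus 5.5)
Your proposal is correct and follows essentially the same route as the paper. The paper writes $\DSP=\DS\cap\Phi^{-1}(\mathcal U)$, where $\Phi(\pl)=\det\F$ is continuous from $H^s(\Omega_0,\mathbb R^3)$ into $C^0(\overline{\Omega}_0,\mathbb R)$ via the $C^1$ Sobolev embedding, and $\mathcal U$ is the set of strictly positive continuous functions, open by the same compactness and minimum argument you use; this is exactly your ``cleaner'' reformulation, and your explicit local Lipschitz bound on the determinant simply unpacks the continuity of $\Phi$.
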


\begin{proof}
Fix $\pl\in \DSP$. Since $s>\frac52$, thanks to the continuous Sobolev embedding
\begin{equation}
\label{sobolev embedding}
H^s(\Omega_0,\mathbb R^3)\hookrightarrow C^1(\overline{\Omega}_0,\mathbb R^3),
\end{equation}
the map $\pl\mapsto\F$ is continuous from $H^s(\Omega_0,\mathbb{R}^3)$ to
$C^0(\overline{\Omega}_0,\operatorname{Mat}_{3}(\mathbb{R}))$, where $\operatorname{Mat}_{3}(\mathbb{R})$ is the space of $3\times 3$ matrices with real coefficients.
We set
\[
\mathcal U:=\Big\{f\in C^0(\overline{\Omega}_0, \mathbb{R}) : f(\bm{X})>0\ \text{for all } \bm{X}\in\overline{\Omega}_0\Big\}
\]
and note that $\mathcal U$ is open in $C^0(\overline{\Omega}_0, \mathbb{R})$ equipped with the
$\|\cdot\|_{L^\infty}$ norm. Indeed, if $f\in\mathcal U$, then by compactness
there exists $m:=\min_{\overline{\Omega}_0} f>0$. If $g\in C^0(\overline{\Omega}_0, \mathbb{R})$
satisfies $\|g-f\|_{L^\infty}<m/2$, then, for every $\bm{X}\in\overline{\Omega}_0$,
\[
g(\bm{X})=f(\bm{X})+(g(\bm{X})-f(\bm{X})),
\]
and therefore
\[
g(\bm{X})\ge f(\bm{X})-|g(\bm{X})-f(\bm{X})|
\ge m-\frac{m}{2}
=\frac{m}{2}>0,
\]
so $g\in\mathcal U$.

Next, consider the map
\[
\Phi:H^s(\Omega_0,\mathbb R^3)\to C^0(\overline{\Omega}_0, \mathbb{R}),\qquad
\Phi(\pl):=\det\F.
\]
Since the determinant is a polynomial in the matrix entries, the map
$\A\mapsto \det \A$ is continuous from
$C^0(\overline{\Omega}_0, \operatorname{Mat}_{3}(\mathbb{R}))$ to $C^0(\overline{\Omega}_0,\mathbb{R})$.
By composition with the continuous embedding \eqref{sobolev embedding}, it follows
that $\Phi$ is continuous from $H^s(\Omega_0, \mathbb{R}^3)$ to $C^0(\overline{\Omega}_0, \mathbb{R})$.

Therefore $\Phi^{-1}(\mathcal U)$ is open in $H^s(\Omega_0,\mathbb R^3)$.
Finally, observe that
\[
\Phi^{-1}(\mathcal U)=\Big\{\\\pl\in H^s(\Omega_0,\mathbb R^3):\ \det\F(\bm{X})>0
\ \text{for all }\bm{X}\in\overline{\Omega}_0\Big\},
\]
hence $\DSP=\DS\cap \Phi^{-1}(\mathcal U)$ is open in $H^s(\Omega_0,\mathbb R^3)$
because $\DS$ is open in $H^s(\Omega_0,\mathbb R^3)$ by Theorem \ref{thm: open}.
\end{proof}

Although for every $\pl\in\DSP$, we have that $\pl^{-1}\in C^1(\overline{\Omega},\mathbb{R}^3)$ by definition, the transfer of regularity between Lagrangian and Eulerian vector fields, as we shall see in the next section, requires the inverse map to possess the same Sobolev regularity as $\pl$. In the classical fixed-domain setting, this stability under inversion is part of the standard theory of diffeomorphism groups. Here, however, $\DSP$ does not carry a group structure and the domain of $\pl^{-1}$ depends on $\pl$ itself. The following theorem shows that this regularity property is nevertheless preserved in the present setting. The proof is rather technical, but it is essentially based on a natural recursive analysis of the higher-order derivatives of the inverse map.

\begin{theorem}
\label{regularity inverse}
Let $\Omega_0\subset\mathbb R^3$ be a bounded connected domain with Lipschitz boundary and
let $s$ be an integer with $s \geq 3$. Then, for every $\pl$ in $\DSP$, we have that $\pl^{-1}$ gains regularity and belongs to $H^s(\pl(\Omega_0),\mathbb R^3)$.

\end{theorem}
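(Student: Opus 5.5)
Set $\Omega:=\pl(\Omega_0)$ and $\bm\psi:=\pl^{-1}$. The plan is an induction on the order of derivatives, driven by the identity
\[
\nabla\bm\psi=\F^{-1}\circ\bm\psi=\frac{\operatorname{cof}(\F)^{T}}{\det\F}\circ\bm\psi ,
\]
which follows from the chain rule as in the proof of Proposition~\ref{prop:Ds1}. By Proposition~\ref{closures} and compactness of $\overline{\Omega}_0$, $\det\F$ is bounded below by some $m>0$. By Proposition~\ref{prop:Ds1}, $\bm\psi$ is $C^1$ on $\overline{\Omega}$, so $\nabla\bm\psi$ is bounded and $\det\nabla\bm\psi=1/(\det\F\circ\bm\psi)$ is bounded above and below. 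The change-of-variables formula then gives equivalence of $L^2$ norms:
\[
\int_\Omega |g\circ\bm\psi|^2\,d\bm x=\int_{\Omega_0}|g|^2\det\F\,d\bm X,
\]
so $g\in L^2(\Omega_0)$ if and only if $g\circ\bm\psi\in L^2(\Omega)$. In particular $\bm\psi\in L^2$ and $\nabla\bm\psi\in L^\infty\subset L^2$, because $\Omega$ is bounded.

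\emph{Induction.} Suppose $\nabla\bm\psi\in H^{k-1}(\Omega)$ for some $1\le k\le s-1$; we want $\nabla\bm\psi\in H^{k}(\Omega)$. Differentiating the identity $k$ times with Faà di Bruno/Leibniz gives, schematically,
\[
D^{k}\nabla\bm\psi=\sum (D^{j}G)\circ\bm\psi\;\cdot\; D^{\alpha_1}\nabla\bm\psi\cdots D^{\alpha_j}\nabla\bm\psi,
\qquad G:=\F^{-1},\quad 1\le j\le k,\quad \textstyle\sum_i \alpha_i=k-j .
\]
Here $G$ is a smooth rational function of $\F$ with nonvanishing denominator. Since $\F\in H^{s-1}(\Omega_0)$, $s-1\ge2>3/2$, and $H^{s-1}$ is an algebra on Lipschitz domains, the standard composition (Moser) estimates give $G\in H^{s-1}(\Omega_0)$. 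Hence $D^jG\in H^{s-1-j}(\Omega_0)$, and by the change of variables $(D^jG)\circ\bm\psi$ has the corresponding integrability on $\Omega$ at order zero. The other factors involve derivatives of $\nabla\bm\psi$ of order $\le k-1$, which are covered by the induction hypothesis. Each product then lies in $L^2$ by Hölder together with Gagliardo–Nirenberg/Sobolev embeddings in three dimensions. The count works because $s\ge3$: the top factor carries at most $s-1$ derivatives in total and the remaining factors are bounded or in $L^4$/$L^6$. This is the same bookkeeping as the classical proof that $H^s$ diffeomorphisms are closed under inversion (Ebin–Marsden, or Inci–Kappeler–Topalov). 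Iterating up to $k=s-1$ gives $\nabla\bm\psi\in H^{s-1}(\Omega)$, and therefore $\bm\psi\in H^s(\Omega)$.

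\emph{Main obstacle.} The delicate point is that $\Omega=\pl(\Omega_0)$ is only the image of a Lipschitz domain under a $C^1$ map. We must still justify the Sobolev embeddings and product estimates on $\Omega$, and also that composition with $\bm\psi$ maps $H^{r}(\Omega_0)$ into $H^{r}(\Omega)$ for the intermediate orders $r\le s-1$, not just $r=0$. I would avoid working intrinsically on $\Omega$: pull every quantity back to $\Omega_0$, where the Lipschitz-domain Sobolev theory is available. For example, $D^{k}\nabla\bm\psi\circ\pl$ is an explicit polynomial in $\F^{-1}$ and $D^{\le k}\F$, so it suffices to show that this polynomial is in $L^2(\Omega_0)$ and then invoke the $L^2$ equivalence above. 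This reduces everything to products of derivatives of $\F\in H^{s-1}(\Omega_0)$ with the bounded factor $\F^{-1}\in H^{s-1}\cap L^\infty$. Their $L^2$-integrability follows from the Sobolev embeddings $H^{s-1-a}(\Omega_0)\hookrightarrow L^{p}$ and Hölder, since the orders $a_i$ appearing sum to at most $s-1$. Weak derivatives on $\Omega$ commute with this pull-back because $\bm\psi$ is a $C^1$ diffeomorphism with bounded Jacobian.
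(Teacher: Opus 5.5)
Your final plan (pull everything back to $\Omega_0$, write $\nabla^{k+1}\pl^{-1}\circ\pl$ as a polynomial in $\F^{-1}$ and $\nabla^{j}\F$ ($j\le k$) of total derivative weight $k$, then conclude with H\"older, Sobolev embeddings on $\Omega_0$ and the bounded-Jacobian change of variables) is essentially the paper's proof. The paper derives exactly this weighted pointwise bound inductively and then checks the three possible cases $k_{s-1}=1$, $k_{s-2}=1$, or only $j\le s-3$, so commit to that version rather than the intrinsic Fa\`a di Bruno induction on $\Omega$, which does not close with only the hypothesis $\nabla\pl^{-1}\in H^{k-1}(\Omega)$ (for $s=3$ the term $(DG)\circ\pl^{-1}\cdot D\nabla\pl^{-1}$ is only $L^6\cdot L^2$); your remark that weak derivatives commute with the $C^1$ pull-back usefully justifies a chain-rule step the paper uses without comment.
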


\begin{proof}

Let $\pl\in \DSP$. Since $s>\frac52$, $\pl$ is a $C^1$-diffeomorphism between
$\Omega_0$ and $\Omega$.
Moreover, the positivity of the Jacobian determinant implies that
$\det\F$ is bounded away from zero on the compact set
$\overline{\Omega}_0$. Consequently,
$\F^{-1}$ is continuous and bounded on $\overline{\Omega}_0$. In fact, since $\pl\in C^1(\overline{\Omega}_0,\mathbb{R}^3)$, its gradient $\F\in C^{0}(\overline{\Omega}_0,\operatorname{Mat}_3(\mathbb{R}))$.
The function $\bm{X}\mapsto \det \F(\bm{X})$ is continuous, being a polynomial in the
entries of $\F(\bm{X})$. By assumption $\det \F(\bm{X})>0$ for all $\bm{X}\in\overline{\Omega}_0$.
Hence, by compactness, it has a strictly positive minimum
\[
m:=\min_{X\in\overline{\Omega}_0} \det \F(\bm{X}) >0,
\]
so that
\[
\det \F(\bm{X})\ge m>0 \qquad \forall\,\bm{X}\in\overline{\Omega}_0.
\]
Hence, the function
\[
\bm{X}\longmapsto \frac{1}{\det \F(\bm{X})}
\]
is continuous and bounded on $\overline{\Omega}_0$. Moreover, since $\F(\cdot)$ is continuous and each entry of the cofactor matrix $\operatorname{cof} \F$ is a polynomial in the entries of $\F$, the map
\[
\bm{X}\longmapsto \operatorname{cof} \F(\bm{X})
\]
is continuous, and hence bounded, on the compact set $\overline{\Omega}_0$. For every $\bm{X}\in\overline{\Omega}_0$ we have $\det \F(\bm{X})\neq 0$, hence $\F(\bm{X})$ is invertible
and the function
\[
\bm{X}\longmapsto \F(\bm{X})^{-1} = \frac{(\operatorname{cof} \F(\bm{X}))^{\top}}{\det \F(\bm{X})}
\]
belongs to $C^0(\overline{\Omega}_0,\operatorname{Mat}_3(\mathbb R))$, in light of the observations above. Since $\F^{-1}$ is continuous on the compact set $\overline{\Omega}_0$,
it is bounded, that is
\begin{equation*}
\|\F^{-1}\|_{L^\infty(\Omega_0)} = \sup_{\bm{X}\in\overline{\Omega}_0} |\F^{-1}(\bm{X})| <\infty.
\end{equation*}
Note also that $\pl^{-1}\in L^2(\Omega,\mathbb{R}^3)$, since $\pl^{-1}\in C^1(\overline{\Omega}, \mathbb{R}^3)$ and hence $\pl^{-1}\in L^{\infty}(\Omega, \mathbb{R}^3)$, and
$\Omega$ is a bounded domain.
We now want to prove that $\pl^{-1}\in H^s(\Omega,\mathbb{R}^3)$ by estimating its derivatives.

\smallskip
\noindent
\underline{\emph{Step 1: first derivative.}}
Since $\pl^{-1} \circ \pl = \mathrm{Id}$, the chain rule yields
\begin{equation}
\label{first derivative}
(\nabla_{\bm{x}}\pl^{-1})(\pl(\bm{X}))\F(\bm{X}) = \mathsf{I} \quad \forall \bm{X} \in \overline{\Omega}_0.
\end{equation}
Equation \eqref{first derivative} leads to the identity
\begin{equation}
\label{gradient inverse}
\nabla_{\bm{x}}\pl^{-1}(\bm{x})
=
\F^{-1}(\pl^{-1}(\bm{x})),
\end{equation}
where $\bm{x}:=\pl(\bm{X})$. Exploiting the previous identity and a change of variables, we obtain
\begin{equation*}
\|\nabla_{\bm{x}}\pl^{-1}\|_{L^2(\Omega)} = \int_{\Omega}|\nabla_{\bm{x}}\pl^{-1}(\bm{x})|^2 d\bm{x} = \int_{\Omega_0}J(\bm{X})|\F^{-1}(\bm{X})|^2d\bm{X},
\end{equation*}
where $J:=\det\F$. Since, as previously observed, both $J$ and $\F^{-1}$ are bounded in $\Omega_0$, we conclude that $\nabla_{\bm{x}}\pl^{-1}\in L^2(\Omega,\mathbb{R}^3)$.

Note that one can also trivially conclude that $\nabla_{\bm{x}}\pl^{-1}\in L^2(\Omega,\mathbb{R}^3)$ by observing that $\nabla_{\bm{x}}\pl^{-1}\in C^0(\overline{\Omega},\mathbb{R}^3)$, and hence $\nabla_{\bm{x}}\pl^{-1}$ is bounded in the bounded domain $\Omega$. However, the estimate of the derivatives of $\pl^{-1}$ arising from identity \eqref{first derivative} will be crucial for the higher-order gradients.

\smallskip
\noindent
\underline{\emph{Step 2: second derivative.}}
We start from the identity \eqref{first derivative}, i.e., in components,
\begin{equation}
\label{first derivative_comp}
\frac{\partial X_i}{\partial x_k}(\pl(\bm{X}))\,\frac{\partial x_k}{\partial X_j}(\bm{X}) = \delta_{ij},
\qquad i,j=1,2,3,
\end{equation}
where, with a slight abuse, we introduced the notation $X_i = \varphi^{-1}_i$ and $x_k = \varphi_k$. Differentiating \eqref{first derivative_comp} with respect to $X_h$, one gets
\begin{equation}
\label{second derivative_comp}
\frac{\partial}{\partial X_h}\!\left(\frac{\partial X_i}{\partial x_k}(\pl(\bm{X}))\right)\frac{\partial x_k}{\partial X_j}(\bm{X})
\;+\;
\frac{\partial X_i}{\partial x_k}(\pl(\bm{X}))\,\frac{\partial^2 x_k}{\partial X_h\partial X_j}(\bm{X})
=0.
\end{equation}
Since $\bm{x}=\pl(\bm{X})$, by the chain rule the first term of the previous equation becomes
\begin{equation}
\label{chain rule first term}
\frac{\partial}{\partial X_h}\!\left(\frac{\partial X_i}{\partial x_k}(\pl(\bm{X}))\right)
=
\frac{\partial^2 X_i}{\partial x_\ell\,\partial x_k}(\pl(\bm{X}))\,
\frac{\partial x_\ell}{\partial X_h}(\bm{X}).
\end{equation}
Plugging \eqref{chain rule first term} into \eqref{second derivative_comp} we obtain
\begin{equation}
\label{second derivative_components}
\frac{\partial^2 X_i}{\partial x_\ell\,\partial x_k}(\pl(\bm{X}))\,
\frac{\partial x_\ell}{\partial X_h}(\bm{X})\,
\frac{\partial x_k}{\partial X_j}(\bm{X})
\;+\;
\frac{\partial X_i}{\partial x_k}(\pl(\bm{X}))\,\frac{\partial^2 x_k}{\partial X_h\partial X_j}(\bm{X})
=0.
\end{equation}

Recalling that
\[
F_{kj}(\bm{X})=\frac{\partial x_k}{\partial X_j}(\bm{X}),\quad (\F^{-1})_{ik}(\bm{X})=\frac{\partial X_i}{\partial x_k}(\pl(\bm{X})),\quad (\partial_{X_h}F)_{kj}(\bm{X})=\frac{\partial^2 x_k}{\partial X_h\partial X_j}(\bm{X}),
\]
equation \eqref{second derivative_components} becomes
\begin{equation}
\label{derivative F}
\frac{\partial^2 X_i}{\partial x_\ell\,\partial x_k}(\pl(\bm{X}))\,F_{\ell h}(\bm{X})\,F_{kj}(\bm{X})
\;+\;
(\F^{-1})_{ik}(\bm{X})\,(\partial_{X_h}F)_{kj}(\bm{X})
=0.
\end{equation}
Multiplying \eqref{derivative F} on the right by $(\F^{-1})_{jm}(\bm{X})$ and using $F_{kj}(\F^{-1})_{jm}=\delta_{km}$ yields
\begin{equation}
\label{second gradient inverse}
\frac{\partial^2 X_i}{\partial x_\ell\,\partial x_m}(\pl(\bm{X}))\,F_{\ell h}(\bm{X})
=
-(\F^{-1})_{ik}(\bm{X})\,(\partial_{X_h}F)_{kj}(\bm{X})\,(\F^{-1})_{jm}(\bm{X}).
\end{equation}
Finally, multiplying \eqref{second gradient inverse} by $(\F^{-1})_{h\ell}(\bm{X})$ gives the explicit expression for $\nabla^2\pl^{-1}$ evaluated along $\pl$:
\begin{equation}
\label{Hessian inverse formula}
\left[\nabla_{\bm{x}}^2\pl^{-1}(\pl(\bm{X}))\right]_{im\ell}=
-(\F^{-1})_{ik}(\bm{X})\,(\nabla_{\bm{X}}F)_{kjh}(\bm{X})\,(\F^{-1})_{jm}(\bm{X})\,(\F^{-1})_{h\ell}(\bm{X}).
\end{equation}

We endow $n$th-order tensors with the Frobenius norm
\[
|\mathcal{T}|^2 := \sum_{i_1,\ldots,i_n=1}^3 T_{i_1\cdots i_n}^2.
\]
From \eqref{Hessian inverse formula} and the submultiplicativity of the Frobenius norm, one has
\begin{equation}
\label{submultipl}
\big|\nabla_{\bm{x}}^2\pl^{-1}(\pl(\bm{X}))\big|
\;\le\;|\F^{-1}(\bm{X})|^3\,|\nabla_{\bm{X}} \F(\bm{X})|.
\end{equation}
In particular, exploiting the previous pointwise estimate and a change of variables, one gets
\begin{equation*}
\begin{aligned}
\|\nabla_{\bm{x}}^2\pl^{-1}\|_{L^2(\Omega)}^2
&=
\int_{\Omega}\big|\nabla_{\bm{x}}^2\pl^{-1}(\bm{x})\big|^2\,d\bm{x}
=
\int_{\Omega_0}J(\bm{X})\big|\nabla_{\bm{x}}^2\pl^{-1}(\pl(\bm{X}))\big|^2\,d\bm{X}\\
&\le \int_{\Omega_0}J(\bm{X})|\F^{-1}(\bm{X})|^6\,|\nabla_{\bm{X}}\F(\bm{X})|^2\,d\bm{X}.
\end{aligned}
\end{equation*}
Since, as previously observed, both $J$ and $\F^{-1}$ are bounded in $\Omega_0$, we conclude that
\begin{equation*}
\|\nabla_{\bm{x}}^2\pl^{-1}\|^2_{L^2(\Omega)}
\;\le\;\|J\|_{L^{\infty}(\Omega_0)}\;\|\F^{-1}\|_{L^\infty(\Omega_0)}^6\,\|\nabla_{\bm{X}}\F\|^2_{L^2(\Omega_0)},
\end{equation*}
hence $\nabla_{\bm{x}}^2\pl^{-1}\in L^2(\Omega,\mathbb{R}^{27})$, as $\nabla_{\bm{X}}\F \in L^2(\Omega_0,\mathbb{R}^{27})$ because $\pl\in H^s(\Omega_0,\mathbb{R}^3)$.

\smallskip
\noindent
\underline{\emph{Step 3: third derivative.}}
Let
\begin{align*}
A_{ik}(\bm{X})&:=(\F^{-1})_{ik}(\bm{X})=\frac{\partial X_i}{\partial x_k}(\pl(\bm{X})),\\
H_{im\ell}(X)&:=\left[\nabla_{\bm{x}}^2\pl^{-1}(\pl(\bm{X}))\right]_{im\ell}=\frac{\partial^2 X_i}{\partial x_m\,\partial x_\ell}(\pl(\bm{X})).
\end{align*}
From the identity \eqref{Hessian inverse formula} one has
\begin{equation}
\label{Hess_inverse_comp}
H_{im\ell}(\bm{X})
=
-\,A_{ik}(\bm{X})\,(\nabla_{\bm{X}}F)_{kjh}(\bm{X})\,A_{jm}(\bm{X})\,A_{h\ell}(\bm{X}).
\end{equation}
By the chain rule we obtain
\begin{equation}
\label{eq:chain_third}
\frac{\partial H_{im\ell}}{\partial X_p}(\bm{X})
=
\frac{\partial^3 X_i}{\partial x_s\,\partial x_m\,\partial x_\ell}(\pl(\bm{X}))\,
\frac{\partial x_s}{\partial X_p}(\bm{X})
=
\left[\nabla_{\bm{x}}^3\pl^{-1}(\pl(\bm{X}))\right]_{im\ell s}(\bm{X})\,F_{sp}(\bm{X}).
\end{equation}
Multiplying \eqref{eq:chain_third} by $A_{pr}(\bm{X})$ and using $F_{sp}A_{pr}=\delta_{sr}$, we obtain
\begin{equation}
\label{isolated third gradient}
\left[\nabla_{\bm{x}}^3\pl^{-1}(\pl(\bm{X}))\right]_{im\ell r}=\frac{\partial H_{im\ell}}{\partial X_p}(\bm{X})\,A_{pr}(\bm{X}).
\end{equation}
Differentiating \eqref{Hess_inverse_comp} with respect to $X_p$, one gets
\begin{align}
\frac{\partial H_{im\ell}}{\partial X_p}
&=
-\frac{\partial A_{ik}}{\partial X_p}\,(\nabla_{\bm{X}}F)_{kjh}\,A_{jm}\,A_{h\ell}
-A_{ik}\,\frac{\partial}{\partial X_p}(\nabla_{\bm{X}}F)_{kjh}\,A_{jm}\,A_{h\ell}\nonumber\\
&\quad
-A_{ik}\,(\nabla_{\bm{X}}F)_{kjh}\,\frac{\partial A_{jm}}{\partial X_p}\,A_{h\ell}
-A_{ik}\,(\nabla_{\bm{X}}F)_{kjh}\,A_{jm}\,\frac{\partial A_{h\ell}}{\partial X_p}.
\label{eq:dH_expanded}
\end{align}
Note that
\[
\frac{\partial}{\partial X_p}(\nabla_{\bm{X}}F)_{kjh}=\frac{\partial^2 F_{kj}}{\partial X_p\partial X_h}=(\nabla_{\bm{X}}^2F)_{kjhp}.
\]
Moreover, exploiting \eqref{chain rule first term} and \eqref{second gradient inverse} for the derivatives of $\A:=\F^{-1}$, we have
\begin{equation}
\label{derivative A}
\frac{\partial A_{ik}}{\partial X_p}
=
- A_{ia}\,\frac{\partial F_{ab}}{\partial X_p}\,A_{bk}
=
- A_{ia}\,(\nabla_{\bm{X}}F)_{abp}\,A_{bk}.
\end{equation}
Plugging \eqref{derivative A} into \eqref{eq:dH_expanded} and \eqref{isolated third gradient} yields an explicit expression for $\nabla_{\bm{x}}^3\pl^{-1}(\pl(\bm{X}))$:
\begin{equation}
\label{third gradient inverse}
\begin{aligned}
&\left[\nabla_{\bm{x}}^3\pl^{-1}(\pl(\bm{X}))\right]_{im\ell r}\\
&=
(\F^{-1})_{ia}\,(\nabla_{\bm{X}}F)_{abp}\,(\F^{-1})_{bk}\,
(\nabla_{\bm{X}}F)_{kjh}\,(\F^{-1})_{jm}\,(\F^{-1})_{h\ell}\,(\F^{-1})_{pr}\\
&- (\F^{-1})_{ik}\,(\nabla_{\bm{X}}^2F)_{kjhp}\,(\F^{-1})_{jm}\,(\F^{-1})_{h\ell}\,(\F^{-1})_{pr}\\
&+ (\F^{-1})_{ik}\,(\nabla_{\bm{X}}F)_{kjh}\,
(\F^{-1})_{ja}\,(\nabla_{\bm{X}}F)_{abp}\,(\F^{-1})_{bm}\,(\F^{-1})_{h\ell}\,(\F^{-1})_{pr}\\
&+ (\F^{-1})_{ik}\,(\nabla_{\bm{X}}F)_{kjh}\,(\F^{-1})_{jm}\,
(\F^{-1})_{ha}\,(\nabla_{\bm{X}}F)_{abp}\,(\F^{-1})_{b\ell}\,(\F^{-1})_{pr}.
\end{aligned}
\end{equation}
In order to find an explicit pointwise bound for $|\nabla_{\bm{x}}^3\pl^{-1}(\pl(\bm{X}))|$, we estimate each term in \eqref{third gradient inverse} exploiting the submultiplicative property of the Frobenius norm. One then gets the following pointwise bound
\begin{equation}
\label{pointwise_third}
\big|\nabla_{\bm{x}}^3\pl^{-1}(\pl(\bm{X}))\big|
\le
|\F^{-1}(\bm{X})|^4\,|\nabla_{\bm{X}}^2\F(\bm{X})|
+
3\,|\F^{-1}(\bm{X})|^5\,|\nabla_{\bm{X}}\F(\bm{X})|^2.
\end{equation}
Then, using \eqref{pointwise_third} and the fact that, for any $a$, $b$ real numbers, we have the inequality $(a+b)^2\leq 2a^2+2b^2$, we obtain
\begin{align*}
& \|\nabla_{\bm{x}}^3\pl^{-1}\|_{L^2(\Omega)}^2 =
\int_{\Omega_0}J(\bm{X})\big|\nabla_{\bm{x}}^3\pl^{-1}(\pl(\bm{X}))\big|^2\,d\bm{X} \nonumber\\
&\le 2\int_{\Omega_0}J|\F^{-1}|^8\,|\nabla_{\bm{X}}^2\F|^2\,d\bm{X}
+
18\int_{\Omega_0}J|\F^{-1}|^{10}\,|\nabla_{\bm{X}}\F|^4\,d\bm{X}.
\end{align*}
In particular, since $\F^{-1}\in L^\infty(\Omega_0,\mathbb{R}^9)$, we have
\begin{equation}
\begin{aligned}
\label{estimate third derivative}
\|\nabla_{\bm{x}}^3\pl^{-1}\|_{L^2(\Omega)}^2 &\leq  \;\|J\|_{L^{\infty}(\Omega_0)}\;\|\F^{-1}\|_{L^\infty(\Omega_0)}^8\,\|\nabla_{\bm{X}}^2\F\|_{L^2(\Omega_0)}^2\\
&+\;\|J\|_{L^{\infty}(\Omega_0)}\;\|\F^{-1}\|_{L^\infty(\Omega_0)}^{10}\,\|\nabla_{\bm{X}}\F\|_{L^4(\Omega_0)}^4.
\end{aligned}
\end{equation}
Since the fourth-order tensor $\nabla_{\bm{X}}^2\F\in L^2(\Omega_0,\mathbb{R}^{81})$, as $\pl\in H^s(\Omega_0,\mathbb{R}^3)$, the first term in \eqref{estimate third derivative} is bounded. For the second term, note that
\[
\nabla_{\bm{X}}\F\in H^{s-2}(\Omega_0,\mathbb{R}^{27})
\]
and $s-2\geq 1$, as $s\geq 3$. Hence
\[
\nabla_{\bm{X}}\F\in H^{1}(\Omega_0,\mathbb{R}^{27})
\]
Moreover, by the Sobolev embedding theorem in dimension three,
\[
H^1(\Omega_0, \mathbb{R}^{27}) \hookrightarrow L^6(\Omega_0, \mathbb{R}^{27}),
\]
and since $\Omega_0$ is bounded, the inclusion $L^6(\Omega_0, \mathbb{R}^{27}) \subseteq L^4(\Omega_0, \mathbb{R}^{27})$
also holds. Consequently, there exists a constant $C>0$ such that
\[
\|\nabla_{\bm{X}}\F\|_{L^4(\Omega_0)}
\le
C \|\nabla_{\bm{X}}\F\|_{H^1(\Omega_0)},
\]
so also the second term in \eqref{estimate third derivative} is bounded. Thus, $\nabla^3\pl^{-1}\in L^2(\Omega,\mathbb{R}^{81})$.
\smallskip

\noindent
\underline{\emph{Step 4: higher-order derivatives.}} For $s>3$, define the $(s+1)$-th order tensor
\[
\mathbb{T}_s(\bm{X}) := \nabla_x^{\,s} \pl^{-1}(\pl(\bm{X})).
\]
Note that one has the following estimate on the Frobenius norm of $\mathbb{T}_s(\bm{X})$:
\begin{equation}
\label{estimate gradient s}
|\mathbb{T}_s(\bm{X})|
\le
C_s
\sum_{\substack{k_1,\dots,k_{s-1} \ge 0\\
\sum_{j=1}^{s-1} j k_j = s-1}}
|\A(\bm{X})|^{\,s+K}
\prod_{j=1}^{s-1}
|\nabla_{\bm{X}}^{\,j} \F(\bm{X})|^{k_j},
\end{equation}
where $K := \sum_{j=1}^{s-1} k_j$ and $C_s>0$ is a constant. For $s=1,2,3$, the inequality \eqref{estimate gradient s} corresponds exactly to \eqref{gradient inverse}, \eqref{submultipl} and \eqref{pointwise_third}. Arguing by induction, we can assume now that \eqref{estimate gradient s} holds true for $s\geq 4$, and we can show that it holds also for $s+1$. In fact, recall that the chain rule yields
\begin{equation}
\label{chain rule s+1 derivative}
\mathbb{T}_{s+1}(\bm{X})=\nabla_{\bm{x}}^{s+1}\pl^{-1}(\pl(\bm{X}))=\left[\nabla_{\bm{X}}\mathbb{T}_s(\bm{X})\right]\A(\bm{X}).
\end{equation}
Recalling the inductive assumption \eqref{estimate gradient s}, when we differentiate $\mathbb{T}_s(\bm{X})$ we can encounter two cases:
\smallskip

\textbf{Case (i):} $\frac{\partial}{\partial\bm{X}}$ acts on a factor $\nabla^j_{\bm{X}}\F$.

Then
\[
\frac{\partial}{\partial\bm{X}} (\nabla^j_{\bm{X}}\F)= \nabla^{j+1}_{\bm{X}}\F.
\]
Hence in the corresponding term in \eqref{estimate gradient s} the exponent $k_j$ decreases by one and $k_{j+1}$ increases by one.
Therefore the weight increases by one:
\[
\sum_{j=1}^{s} j\,k_j
=
\left(\sum_{j=1}^{s-1} j\,k_j\right) + 1
=
(s-1)+1
=
s.
\]
In particular, derivatives up to order $s$ of $\F$ may now appear.

\smallskip

\textbf{Case (ii):} $\frac{\partial}{\partial\bm{X}}$ acts on a factor $\A$.

Using \eqref{derivative A}, each such differentiation produces an additional factor $\nabla_{\bm{X}}\F$
and an additional factor $\A$.
Hence $k_1$ increases by one and therefore the total weight is again increased by one:
\[
\sum_{j=1}^{s} j\,k_j = (s-1)+1 = s.
\]

With this in mind, recalling \eqref{chain rule s+1 derivative} and the inductive assumption \eqref{estimate gradient s}, we obtain the bound
\[
|\mathbb{T}_{s+1}(\bm{X})|
\le
C_{s+1}
\sum_{\substack{k_1,\dots,k_s \ge 0\\
\sum_{j=1}^{s} j\,k_j = s}}
|\A(\bm{X})|^{\,s+1+K'}
\prod_{j=1}^{s}
|\nabla_{\bm{X}}^{j}\F(\bm{X})|^{k_j},
\]
where $K' := \sum_{j=1}^{s} k_j$ and $C_{s+1}>0$ is a constant.

Now that \eqref{estimate gradient s} is proven, we can exploit it to estimate the $L^2$-norm of $\nabla^{s}_{\bm{x}}\pl^{-1}$. Using the change of variables $\bm{x}=\pl(\bm{X})$, and the fact that the Jacobian determinant $J$ is bounded, we have
\begin{equation*}
\|\nabla_{\bm{x}}^{s}\pl^{-1}\|_{L^2(\Omega)}^2
\le
\|J\|_{L^\infty(\Omega_0)}
\|\nabla_x^{s}\pl^{-1}(\pl(\bm{X}))\|_{L^2(\Omega_0)}^2 .
\end{equation*}
From \eqref{estimate gradient s}, since $\A$ is bounded, one has
\begin{equation*}
\|\nabla_{\bm{x}}^{s}\pl^{-1}\|_{L^2(\Omega)}^2
\le
C^2_s\|J\|_{L^\infty}\|\A\|_{L^\infty}^{2(s+K)}
\sum_{\substack{k_1,\dots,k_{s-1} \ge 0\\
\sum_{j=1}^{s-1} j k_j = s-1}}\Big\|
\prod_{j=1}^{s-1}
|\nabla_{\bm{X}}^{\,j}\F|^{k_j}
\Big\|^2_{L^2(\Omega_0)}.
\end{equation*}
Thus, it suffices to estimate
\[
\Big\|
\prod_{j=1}^{s-1}
|\nabla_{\bm{X}}^{\,j}\F|^{k_j}
\Big\|^2_{L^2(\Omega_0)} .
\]
Because $\pl \in H^{s}(\Omega_0,\mathbb{R}^3)$, we have $\nabla_{\bm{X}}^{\,j}\F \in H^{s-1-j}(\Omega_0, \mathbb{R}^{3^{j+2}})$. The Sobolev embedding theorem implies:
\begin{align*}
j \le s-3 &\quad\Rightarrow\quad
\nabla_{\bm{X}}^{\,j}\F \in L^\infty(\Omega_0, \mathbb{R}^{3^{j+2}}),
\\[4pt]
j = s-2 &\quad\Rightarrow\quad
\nabla_{\bm{X}}^{\,s-2}\F \in H^1(\Omega_0, \mathbb{R}^{3^{s}})
\hookrightarrow L^6(\Omega_0, \mathbb{R}^{3^{s}}),
\\[4pt]
j = s-1 &\quad\Rightarrow\quad
\nabla_{\bm{X}}^{\,s-1}\F \in L^2(\Omega_0, \mathbb{R}^{3^{s+1}}).
\end{align*}
The constraint
\[
\sum_{j=1}^{s-1} j k_j = s-1
\]
implies:

\begin{enumerate}
\item $k_{s-1} \in \{0,1\}$, because if $k_{s-1}\geq 2$ then $\sum j k_j \geq 2(s-1)$.

\item If $k_{s-1} =1$, then $k_j=0$ for all $j<s-1$.

\item If $k_{s-1}=0$, then $k_{s-2}\in\{0,1\}$, because if $k_{s-2}\geq 2$ then
\[
\sum j k_j \geq 2(s-2)>s-1,
\]
since $s> 3$.

\item If $k_{s-1}=0$ and $k_{s-2}=1$, then necessarily
$k_1=1$ and all other $k_j=0$.
\end{enumerate}
Hence only three cases occur.

\paragraph{\textbf{Case I}}
If $k_{s-1}=1$,
\[
\Big\|
\prod_{j=1}^{s-1}
|\nabla_{\bm{X}}^{\,j}\F|^{k_j}
\Big\|^2_{L^2(\Omega_0)} = \|\nabla_{\bm{X}}^{s-1}\F\|^2_{L^2(\Omega_0)}
\]
and therefore it is bounded, since $\nabla_{\bm{X}}^{\,s-1}\F \in L^2(\Omega_0, \mathbb{R}^{3^{s+1}})$.
\paragraph{\textbf{Case II}}
If $k_{s-1}=0$ and $k_{s-2}=1$, and hence $k_1 = 1$,
\[
\Big\|
\prod_{j=1}^{s-1}
|\nabla_{\bm{X}}^{\,j}\F|^{k_j}
\Big\|^2_{L^2(\Omega_0)} = \|(\nabla_{\bm{X}}^{s-2}\F)(\nabla_{\bm{X}}\F)\|^2_{L^2(\Omega_0)}
\]
Since $s>3$,
$\nabla_{\bm{X}} \F$ is bounded. Therefore
\[
\Big\|
\prod_{j=1}^{s-1}
|\nabla_{\bm{X}}^{\,j}\F|^{k_j}
\Big\|^2_{L^2(\Omega_0)} \leq \|\nabla_{\bm{X}}\F\|^2_{L^\infty} \|\nabla_{\bm{X}}^{s-2}\F\|^2_{L^2(\Omega_0)}
\]
and it is bounded, since $\nabla_{\bm{X}}^{s-2}\F\in H^1$.
\paragraph{\textbf{Case III}}
If $k_{s-1}= k_{s-2} = 0$, then all derivatives satisfy
$j\leq s-3$ and are bounded.
Hence
\[
\Big\|
\prod_{j=1}^{s-1}
|\nabla_{\bm{X}}^{\,j}\F|^{k_j}
\Big\|^2_{L^2(\Omega_0)} \leq |\Omega_0|\prod_{j=1}^{s-1}
\|\nabla_{\bm{X}}^{\,j}\F\|^{2k_j}_{L^\infty},
\]
where $|\Omega_0|$ is the three-dimensional Lebesgue measure of $\Omega_0$. This proves the boundedness of the left-hand side since $\Omega_0$ is a bounded domain.

Combining the previous estimates, we immediately get that, in any of the possible cases described above, we have
\begin{equation*}
\|\nabla_x^{\,s}\pl^{-1}\|_{L^2(\Omega)}< +\infty.
\end{equation*}
Therefore, $\pl^{-1}\in H^s(\Omega,\mathbb{R}^3)$ and this completes the proof.
\end{proof}

\begin{remark}
Note that, to have a more precise estimate of the Frobenius norm of $\mathbb{T}_s(\bm{X})=\nabla_{\bm{x}}^{\,s}\pl^{-1}(\pl(\bm{X}))$, we can write
\begin{equation*}
|\mathbb{T}_s(\bm{X})|
\le
\sum_{\substack{k_1,\dots,k_{s-1}\ge0\\
\sum_{j=1}^{s-1} j k_j = s-1}}
c_s(k)\,
|\A(\bm{X})|^{\,s+K}
\prod_{j=1}^{s-1}
|\nabla_{\bm{X}}^{\,j}\F(\bm{X})|^{k_j},
\qquad
K := \sum_{j=1}^{s-1} k_j,
\end{equation*}
with $c_s(k)$ non–negative constants. However, since the set of admissible multi–indices 
\[
\mathcal K_s := \{k=(k_1,\dots,k_{s-1}) : k_j\ge 0,\ \sum_{j=1}^{s-1} j k_j = s-1\}
\]
is finite, we may define
\[
C_s := \max_{k\in \mathcal K_s} c_s(k),
\]
and therefore obtain \eqref{estimate gradient s}, with all combinatorial coefficients $c_s(k)$ absorbed into a constant $C_s>0$.
\end{remark}

\section{The inertial flow equation}\label{sec:flow_equation}

In this section, we derive the equation that describes the inertial motion for an incompressible continuum in the absence of internal and external forces beyond the reactions to incompressibility.
We obtain the equation by assuming that the action associated with the kinetic energy is stationary under volume-preserving variations of the configuration.
This variational characterization yields the Lagrangian formulation of the incompressible Euler equation, with the important feature of including motions in which the spatial domain occupied by the continuum can vary.

We describe the deformation of a continuum by means of the \emph{placement} map
\[
\pl \colon 
\left\{
\begin{aligned}
\Omega_0\times\mathbb{R}&\to \mathbb{R}^3 \\
(\bm{X},t) &\mapsto \pl(\bm{X},t),
\end{aligned}\right.
\]
where $\Omega_0\subseteq \mathbb{R}^3$ is the \emph{material manifold}, described in terms of the \emph{material coordinates} $\bm{X}$, while $\Omega_t:=\pl(\Omega_0,t)$ denotes the \emph{current configuration} at time $t$, described in terms of the \emph{spatial coordinates} $\bm{x}$. The placement associates to each material point $\bm{X}$ and to each time $t\in\mathbb{R}$ the position $\pl(\bm{X},t)$ of $\bm{X}$ at time $t$.
We assume that for every $t\in \mathbb{R}$, $\pl(\cdot,t)\in \DSP$, that is, $\pl(\cdot,t) \in H^{s}(\Omega_0,\mathbb{R}^3)$, invertible, with inverse $\pl^{-1}(\cdot,t)\in H^s({\Omega}_t,\mathbb{R}^3)$. Thus $\bm{X}=\pl^{-1}(\bm{x},t)$. We recall that we are requiring $s\geq 3$ integer. In this context, we refer to the tensor field $\F$ defined in \eqref{def grad} as the \emph{deformation gradient}.
The incompressibility constraint translates into the condition $\operatorname{det}\F(\bm{X},t) = 1$ for every $\bm{X} \in \Omega_0$ and for every $t\in\mathbb{R}$. In spatial coordinates on $\Omega_t$, using Euler's identity for the derivative of $J:=\operatorname{det}\F$, one gets that this condition becomes $\operatorname{div}_{\bm{x}}\bm{v} =0$, where $\bm{v}(\bm{x},t)= \bm{V}(\pl^{-1}(\bm{x},t),t)$ is the \emph{Eulerian velocity field} and $\bm{V}(\bm{X},t)=\dot{\pl}(\bm{X},t)$ is the \emph{Lagrangian velocity field}.

The \emph{kinetic energy} of a continuum moving by $\pl(\cdot,t)$ is given by
\begin{equation}
\label{kinetic energy}
\mathcal{K}(t) = \int_{\Omega_0}\frac{1}{2}\rho_0\rvert\dot{\pl}(\bm{X},t)\lvert^2d\bm{X},
\end{equation}
where $\rho_0$ is the \emph{mass density} on the material manifold and $|\cdot|$ denotes the Euclidean norm in $\mathbb{R}^3$. We restrict our attention to the case of incompressible homogeneous materials, thus we have that $\rho_0$ does not depend on the material coordinate $\bm{X}$ and the mass density in the current configuration is $\rho(\bm{x},t) = \rho_0$ for all $\bm{x}\in\Omega_t$ and all $t\in \mathbb{R}$, i.e., the mass density is uniform in space and constant in time.

Since the placement (or deformation) at each time $t$ is represented by a point $\pl(\cdot,t)$ on the manifold $\DSP$, it is clear that the motion given by the placement map describes a curve in $\DSP$. Therefore, the corresponding velocity field naturally arises as a tangent vector to this manifold. The following proposition formalizes this observation.

\begin{proposition}
\label{tangent vectors}
Let \(\Omega_0 \subset \mathbb{R}^3\) be a bounded connected domain with Lipschitz boundary and let $s$ be an integer with $s \geq 3$. Then elements of the tangent space of $\DSP$ at a point $\pl(\cdot,t)$ consist of the Lagrangian velocity fields $\bm{V}(\cdot,t)=\bm{v}(\pl(\cdot,t),t)$.
\end{proposition}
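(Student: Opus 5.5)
Since $\DSP$ is open in $H^s(\Omega_0,\mathbb R^3)$ by Theorem~\ref{orientation-preserving open}, I will use the single chart $(\DSP,\mathrm{Id})$. In this chart the tangent space at any $\pl\in\DSP$ is canonically identified with $H^s(\Omega_0,\mathbb R^3)$: a tangent vector is the derivative $\frac{d}{d\varepsilon}\big|_{\varepsilon=0}\gamma(\varepsilon)$ of a $C^1$ curve $\gamma$ in $\DSP$ with $\gamma(0)=\pl$. The claim thus reduces to two inclusions between $T_{\pl}\DSP\simeq H^s(\Omega_0,\mathbb R^3)$ and the set of fields of the form $\bm v\circ\pl$ with $\bm v\in H^s(\Omega,\mathbb R^3)$, $\Omega=\pl(\Omega_0)$. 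The Lagrangian velocity $\bm V(\cdot,t)=\dot\pl(\cdot,t)$ of a motion $t\mapsto\pl(\cdot,t)$ is, by definition, the tangent vector of the corresponding curve.

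\textbf{Step 1 (every Lagrangian velocity is a tangent vector).} Take a motion with $\pl(\cdot,t)\in\DSP$ that is $C^1$ in time as an $H^s$-valued curve. Then $\dot\pl(\cdot,t)\in H^s(\Omega_0,\mathbb R^3)=T_{\pl(\cdot,t)}\DSP$. Setting $\bm v(\cdot,t):=\bm V(\cdot,t)\circ\pl^{-1}(\cdot,t)$ gives $\bm V=\bm v\circ\pl$ identically.

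\textbf{Step 2 (every tangent vector is a Lagrangian velocity).} Given $\bm W\in H^s(\Omega_0,\mathbb R^3)$, openness provides $\varepsilon_0>0$ such that $\pl+\tau\bm W\in\DSP$ for $|\tau|<\varepsilon_0$. Hence $\tau\mapsto\pl+\tau\bm W$ is an admissible motion through $\pl$ with Lagrangian velocity $\bm W$. Defining $\bm w:=\bm W\circ\pl^{-1}$, we get $\bm W=\bm w\circ\pl$.

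\textbf{Step 3 (Eulerian regularity).} It remains to show that the Eulerian field satisfies $\bm v=\bm V\circ\pl^{-1}\in H^s(\Omega,\mathbb R^3)$, so that the correspondence stays within the Sobolev class. Conversely, $\bm v\circ\pl\in H^s(\Omega_0,\mathbb R^3)$ whenever $\bm v\in H^s(\Omega,\mathbb R^3)$. This is the main obstacle. Both statements follow from the same composition estimate, in which the two roles of $\pl$ and $\pl^{-1}$ are symmetric thanks to Theorem~\ref{regularity inverse}.

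For the composition estimate I would expand $\nabla^k(\bm V\circ\pl^{-1})$ by the Faà di Bruno formula, as in Step 4 of the proof of Theorem~\ref{regularity inverse}. This yields sums of terms $(\nabla^m\bm V)\circ\pl^{-1}$ multiplied by products $\prod_j|\nabla^j\pl^{-1}|^{k_j}$ with $\sum_j jk_j=k$. Each term is then bounded in $L^2(\Omega)$ after the change of variables $\bm x=\pl(\bm X)$, which introduces the bounded Jacobian $J$. The same case analysis as before handles the products: factors with low-order derivatives are placed in $L^\infty$; at most one factor of order $s-1$ or $s$ is placed in $L^2$; and the intermediate pairing uses $H^1\hookrightarrow L^6$ together with $L^\infty$. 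This is possible because $\pl^{-1}\in H^s$ with $s\geq3$, and because $\nabla^m\bm V\in L^\infty$ for $m\le s-2$. The term in which $\nabla^s\bm V$ appears composed with $\pl^{-1}$ involves only first derivatives of $\pl^{-1}$, which are bounded. With both inclusions established, tangent vectors and Lagrangian velocity fields $\bm v\circ\pl$ coincide.
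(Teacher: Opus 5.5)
Your proposal is correct, and Steps~1--2 follow the paper's argument: openness of $\DSP$ (Theorem~\ref{orientation-preserving open}) identifies $T_{\pl}\DSP$ with $H^s(\Omega_0,\mathbb{R}^3)$, and $\bm{V}=\dot{\pl}=\bm{v}\circ\pl$ is the velocity of the curve; you also make the converse explicit via the straight lines $\pl+\tau\bm{W}$. Step~3 goes beyond what the paper uses here and reproduces, by the same Fa\`a di Bruno case analysis, what the paper proves separately in Proposition~\ref{prop:regularity velocity} and states in Remark~\ref{composition diffeo}; the one slip is that for $s=3$ the term $\big((\nabla^2\bm{V})\circ\pl^{-1}\big)\,\nabla\pl^{-1}\,\nabla^2\pl^{-1}$ has two factors only in $H^1\hookrightarrow L^6$ and no $L^\infty$ partner, so it must be handled by H\"older ($L^6\cdot L^6\subset L^3\subset L^2$ on a bounded domain), as in the paper's Step~3.
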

\begin{proof}
The result follows naturally from the evolution equation of $\pl$, given by
\begin{equation*}
\frac{\partial\pl(\bm{X},t)}{\partial t} = \bm{v}(\pl(\bm{X},t),t) = \bm{V}(\bm{X},t),
\end{equation*}
with $\bm{V}(\cdot,t)\in H^s(\Omega_0,\mathbb{R}^3)$. In fact,  recalling Theorem \ref{orientation-preserving open}, we have that $H^s(\Omega_0,\mathbb{R}^3)$ can be identified with the tangent space at any point to $\DSP$ (see also \cite{lee2013smooth}).
\end{proof}

After identifying the tangent space of $\DSP$ in terms of Lagrangian velocity fields, it is useful to examine how their Eulerian counterparts inherit regularity properties. The following proposition makes this relation precise.
  
\begin{proposition}
\label{prop:regularity velocity}
Let $\Omega_0 \subset \mathbb{R}^3$ be a bounded connected domain with Lipschitz boundary and let $s$ be an integer with $s \geq 3$. 
For every $t \in \mathbb{R}$, let $\pl(\cdot, t) \in \DSP$ and denote by 
$\bm{V}(\cdot, t) = \dot{\pl}(\cdot, t)$ the Lagrangian velocity field and by $\bm{v}(\cdot, t) = \bm{V}(\pl^{-1}(\cdot, t), t)$ the corresponding Eulerian velocity field. Then $\bm{V}(\cdot, t) \in H^s(\Omega_0, \mathbb{R}^3)$ and $\bm{v}(\cdot, t) \in H^s(\Omega_t, \mathbb{R}^3)$.
\end{proposition}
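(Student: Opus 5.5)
The plan has two parts: the Lagrangian statement comes from the manifold structure, and the Eulerian statement reduces to a composition estimate in the style of Theorem~\ref{regularity inverse}.

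\emph{Lagrangian part.} Since $t\mapsto\pl(\cdot,t)$ is a differentiable curve in $\DSP$, its velocity $\bm{V}(\cdot,t)=\dot{\pl}(\cdot,t)$ is a tangent vector. By Theorem~\ref{orientation-preserving open}, $\DSP$ is open in $H^s(\Omega_0,\mathbb R^3)$, so this tangent space is canonically $H^s(\Omega_0,\mathbb R^3)$ (see Proposition~\ref{tangent vectors}). Hence $\bm{V}(\cdot,t)\in H^s(\Omega_0,\mathbb R^3)$ with no further work.

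\emph{Eulerian part.} Fix $t$ and write $\pl=\pl(\cdot,t)$, $\bm V=\bm V(\cdot,t)$, $\Omega=\Omega_t$. Then $\bm v=\bm V\circ\pl^{-1}$, and I need $\nabla_{\bm x}^k\bm v\in L^2(\Omega)$ for $k\le s$. The $L^2$ bound is immediate: $\bm V$ is continuous on $\overline\Omega_0$ by the Sobolev embedding, and $\Omega$ is bounded. For derivatives, the chain rule gives $\nabla_{\bm x}\bm v(\pl(\bm X))=\nabla_{\bm X}\bm V(\bm X)\,\A(\bm X)$ with $\A=\F^{-1}$. Iterating this, exactly as in \eqref{chain rule s+1 derivative}, and using \eqref{derivative A} to differentiate $\A$, I would prove by induction the pointwise bound
\[
|\nabla_{\bm x}^k\bm v(\pl(\bm X))|\le C_k\sum_{m=1}^{k}\ \sum_{\substack{k_1,\dots,k_{k-1}\ge0\\ m+\sum_j jk_j=k}}|\A|^{k+K}\,|\nabla^m_{\bm X}\bm V|\prod_{j}|\nabla^j_{\bm X}\F|^{k_j},
\]
where $K=\sum_j k_j$. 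This is Faà di Bruno's structure: each derivative falls either on a $\nabla^m\bm V$, on a $\nabla^j\F$, or on an $\A$, and in every case the total weight rises by one. Equivalently, $\bm v$ is a sum of products of derivatives of $\bm V$ and of $\pl^{-1}$, which are controlled by Theorem~\ref{regularity inverse}.

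Next I change variables, $\|\nabla^k_{\bm x}\bm v\|^2_{L^2(\Omega)}=\int_{\Omega_0}J\,|\nabla^k_{\bm x}\bm v\circ\pl|^2\,d\bm X$. The factors $J$ and $\A$ are bounded in $L^\infty$ by the compactness argument at the start of the proof of Theorem~\ref{regularity inverse}. It therefore remains to show that each product $|\nabla^m\bm V|\prod_j|\nabla^j\F|^{k_j}$ lies in $L^2(\Omega_0)$, where $\nabla^m\bm V\in H^{s-m}$ and $\nabla^j\F\in H^{s-1-j}$.

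I expect this product estimate to be the main obstacle. It is the same case analysis as Cases I--III of Theorem~\ref{regularity inverse}, now with one extra factor coming from $\bm V$. A factor of Sobolev order $\ge2$ is in $L^\infty$. Because the total weight is at most $s$, at most one factor can have order $0$ (only in $L^2$), and then all others are in $L^\infty$. Alternatively, at most two factors have order exactly $1$ (in $L^6$), and if so the rest are bounded. Two $L^6$ factors land in $L^3\subset L^2$ by Hölder, since $\Omega_0$ is bounded. The borderline case $s=3$, $k=3$, e.g.\ $|\nabla\bm V|\,|\nabla\F|^2$ or $|\nabla^2\bm V|\,|\nabla\F|$, needs the $L^4$ and $L^6$ embeddings as in Step~3; I would check each of these terms explicitly. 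Summing over the finitely many multi-indices gives $\bm v\in H^s(\Omega_t,\mathbb R^3)$.
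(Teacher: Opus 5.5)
Your proposal is correct and follows essentially the same route as the paper. The Lagrangian part comes from the openness of $\DSP$ in $H^s$. The Eulerian part uses the same inductive pointwise bound (the paper's \eqref{estimate gradient s vel}), a change of variables with $J$ and $\F^{-1}$ bounded, and Sobolev embeddings plus Hölder. Your weight-counting argument for the product terms is sound: at most one merely-$L^2$ factor, and two $L^6$ factors only when $s=3$. It makes explicit the ``simple adaptation'' of Cases I--III that the paper leaves to the reader.
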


The proof, that involves a recursive analysis of the derivatives of
$\bm{V}\circ\pl^{-1}$, is rather technical and we report it for completeness at the end of this section.

\begin{remark}
\label{composition diffeo}
The transfer of regularity stated in Proposition~\ref{prop:regularity velocity} is not a feature pertaining only to Eulerian velocity fields, but rather reflects a general stability property of Sobolev spaces under composition with sufficiently regular diffeomorphisms.
Indeed, with \(\Omega_0 \subset \mathbb{R}^3\) a bounded connected domain with Lipschitz boundary and $s$ an integer with $s \geq 3$, let $\pl \in \DSP$, then for any $h \in H^k(\Omega)$, with $k\in\mathbb{N}$, the composition $h \circ \pl$ belongs to $H^k(\Omega_0)$,
where $\Omega = \pl(\Omega_0)$, and the map
\[
H^k(\Omega) \times \DSP \to H^k(\Omega_0), \qquad (h,\pl) \mapsto h \circ \pl
\]
is continuous. On the other hand, for any $g \in H^k(\Omega_0)$, the composition $g \circ \pl^{-1}$ belongs to $H^k(\Omega)$, and the map
\[
H^k(\Omega_0) \times \DSP \to H^k(\Omega), \qquad (g,\pl) \mapsto g \circ \pl^{-1}
\]
is continuous.

This result was first mentioned in the $C^k$ setting (see \cite{AbrahamSmale1967} and \cite{Eells1958}) and then in the Sobolev setting in the case of $\DSP$ group of diffeomorphisms of a fixed domain $\Omega$ onto itself (see \cite{ebin1970groups}).
\end{remark}

We now show a preliminary result that will be crucial for the construction of the main equation. The following proposition is a Lagrangian formulation of the Helmholtz--Hodge decomposition (see for example \cite{chorin1993mathematical}) typically given in the Eulerian setting.
\begin{proposition}
\label{ortho}
Let \(\Omega_0 \subset \mathbb{R}^3\) be a bounded connected domain with Lipschitz boundary, let $s$ be an integer with $s \geq 3$ and let $\pl(\cdot,t)\in \DSP$, for every $t\in \mathbb{R}$. Then, under the incompressibility condition, the subspace of vector fields $\bm{W}(\cdot,t)\in H^{s}(\Omega_0,\mathbb{R}^3)$, such that the corresponding $\bm{w}(\cdot,t)=\bm{W}(\pl^{-1}(\cdot,t),t)\in H^{s}(\Omega_t,\mathbb{R}^3)$ is a divergence-free vector field, is orthogonal to the subspace of vector fields of the form $\F^{-\top}\nabla_{\bm{X}}\hat{p}(\bm{X},t)$, where $\hat{p}(\cdot,t)$ is a material scalar field with
\(
\hat{p}(\bm{X},t)=0 
\) on $\partial\Omega_0$.
\end{proposition}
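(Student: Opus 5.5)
The plan is to compute the $L^2(\Omega_0)$ inner product directly, with the mass density $\rho_0$ constant so it factors out. We transfer the integral to the current configuration $\Omega_t$ and there apply the Eulerian Helmholtz--Hodge orthogonality, which is a single integration by parts. Concretely, let $\bm{W}(\cdot,t)\in H^s(\Omega_0,\mathbb{R}^3)$ with $\bm{w}=\bm{W}\circ\pl^{-1}$ divergence-free, and let $\hat p(\cdot,t)$ vanish on $\partial\Omega_0$. Define the spatial pressure $p(\cdot,t):=\hat p(\pl^{-1}(\cdot,t),t)$. By the chain rule, together with \eqref{gradient inverse}, we have
\[
\nabla_{\bm{X}}\hat p(\bm{X},t)=\F^{\top}(\bm{X},t)\,\nabla_{\bm{x}}p(\pl(\bm{X},t),t),
\]
and therefore $\F^{-\top}\nabla_{\bm{X}}\hat p=(\nabla_{\bm{x}}p)\circ\pl$. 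The Lagrangian gradient field in the statement is thus simply the Eulerian gradient composed with the placement.

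Next we change variables with $\bm{x}=\pl(\bm{X},t)$ and use the incompressibility condition $J=\det\F=1$:
\[
\int_{\Omega_0}\bm{W}\cdot\F^{-\top}\nabla_{\bm{X}}\hat p\,d\bm{X}
=\int_{\Omega_0}J\,(\bm{w}\cdot\nabla_{\bm{x}}p)\circ\pl\,d\bm{X}
=\int_{\Omega_t}\bm{w}\cdot\nabla_{\bm{x}}p\,d\bm{x}.
\]
We then integrate by parts on $\Omega_t$:
\[
\int_{\Omega_t}\bm{w}\cdot\nabla_{\bm{x}}p\,d\bm{x}=\int_{\partial\Omega_t}p\,\bm{w}\cdot\bm{n}\,dS-\int_{\Omega_t}p\,\operatorname{div}_{\bm{x}}\bm{w}\,d\bm{x}.
\]
The volume term vanishes because $\operatorname{div}_{\bm{x}}\bm{w}=0$. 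The boundary term vanishes because $p=\hat p\circ\pl^{-1}$ is zero on $\partial\Omega_t$. To see this, note that $\pl$ is a homeomorphism of $\overline{\Omega}_0$ onto $\overline{\Omega}_t$ by Proposition~\ref{closures} and Proposition~\ref{prop:Ds1}, and a $C^1$ embedding maps interior to interior, so $\pl(\partial\Omega_0)=\partial\Omega_t$.

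The main obstacle is justifying the Gauss--Green formula on the variable domain $\Omega_t$. I would argue that $\Omega_t$ is a bounded domain with Lipschitz boundary, being the image of the Lipschitz domain $\Omega_0$ under a $C^1$ diffeomorphism of the closure. The regularity needed is also available. By Theorem~\ref{regularity inverse} and Remark~\ref{composition diffeo}, if $\hat p\in H^1(\Omega_0)$ then $p\in H^1(\Omega_t)$ with vanishing trace, that is $p\in H^1_0(\Omega_t)$, and $\bm{w}\in H^s(\Omega_t)\subset H^1(\Omega_t)$. Integration by parts then holds in its $H^1$ form without any regularity assumption on the boundary beyond Lipschitz.

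Finally, I would remark that this orthogonality is exactly with respect to the kinetic-energy inner product $\int_{\Omega_0}\rho_0\,\bm{W}\cdot\bm{U}\,d\bm{X}$. That metric is the one relevant for the subsequent variational derivation.
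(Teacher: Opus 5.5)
Your proposal is correct and follows essentially the same route as the paper. You rewrite $\F^{-\top}\nabla_{\bm{X}}\hat p$ as $(\nabla_{\bm{x}}p)\circ\pl$, change variables using $J=1$ to move the integral to $\Omega_t$, and conclude by Eulerian integration by parts with $\operatorname{div}_{\bm{x}}\bm{w}=0$ and $p=0$ on $\partial\Omega_t$. The only differences are that you start from $\hat p$ and set $p=\hat p\circ\pl^{-1}$, whereas the paper goes the other way, and that you justify $\pl(\partial\Omega_0)=\partial\Omega_t$ and the $H^1_0$ integration by parts, which the paper leaves implicit.
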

\begin{proof}
It is well known that, for every $t\in \mathbb{R}$, the gradient of a scalar field $p(\cdot,t)$, with $p = 0$ on the boundary of $\Omega_t$, is orthogonal (in the $L^2$-sense) to any divergence-free vector field $\bm{w}(\cdot,t)$. In fact, let
\begin{equation}
\label{grad pressure spatial}
\bm{h}(\bm{x},t) = \nabla_{\bm{x}}p(\bm{x},t),
\end{equation}
with
\begin{equation}
\label{traction boundary spatial}
p(\bm{x},t)=0\quad\textnormal{on}~\partial\Omega_t.
\end{equation}
From the divergence theorem and the properties of the fields $\bm{w}(\cdot,t)$ and $p(\cdot,t)$, it follows that
\begin{equation}
\label{divergence theorem}
\begin{split}
&\int_{\Omega_t} \bm{w}(\bm{x},t)\cdot \bm{h}(\bm{x},t)d\bm{x}=\int_{\Omega_t} \bm{w}(\bm{x},t)\cdot\nabla_{\bm{x}}p(\bm{x},t)d\bm{x}\\
&=-\int_{\Omega_t} p(\bm{x},t)\operatorname{div}_{\bm{x}}\bm{w}(\bm{x},t)d\bm{x} + \int_{\partial\Omega_t}p(\bm{x},t)\bm{n}\cdot \bm{w}(\bm{x},t)~dS = 0,
\end{split}
\end{equation}
where $\bm{n}$ is the outward unit normal to $\partial\Omega_t$ and $dS$ denotes the integration with respect to the two-dimensional Hausdorff measure on $\partial\Omega_t$.
We now search for a vector field $\bm{H}(\cdot,t)$ on $\Omega_0$ that is orthogonal to $\bm{W}(\cdot,t)$ in the sense of $L^2$. Naming $\hat{p}(\bm{X},t)=p(\pl(\bm{X},t),t)$, from a simple change of variables in \eqref{grad pressure spatial} we obtain
\begin{equation*}
\nabla_{\bm{x}}p(\bm{x},t) = \F^{-\top}\nabla_{\bm{X}}\hat{p}(\bm{X},t).
\end{equation*}
From the boundary condition \eqref{traction boundary spatial} we get
\begin{equation*}
\hat{p}(\bm{X},t)=0\quad \textnormal{on}~\partial\Omega_0.
\end{equation*}
Thus, let $\bm{H}(\bm{X},t)=\F^{-\top}\nabla_{\bm{X}}\hat{p}(\bm{X},t)$. From equation \eqref{divergence theorem} and the incompressibility condition, it follows that
\begin{equation*}
\begin{split}
&\int_{\Omega_0}\bm{W}(\bm{X},t)\cdot\bm{H}(\bm{X},t)d\bm{X} = \int_{\Omega_0}\bm{W}(\bm{X},t)\cdot \F^{-\top}\nabla_{\bm{X}}\hat{p}(\bm{X},t)d\bm{X}\\
&=\int_{\Omega_t}\bm{W}(\pl^{-1}(\bm{x},t),t)\cdot\nabla_{\bm{x}}p(\bm{x},t)d\bm{x} = \int_{\Omega_t}\bm{w}(\bm{x},t)\cdot\nabla_{\bm{x}}p(\bm{x},t)d\bm{x} =0.
\end{split}
\end{equation*}
This completes the proof.
\end{proof}

We can now prove the main result of this section. 

\begin{theorem}
\label{th: euler lagrangian}
The equation of motion for an incompressible homogeneous continuum that moves by inertia is
\begin{equation}
\label{Lagrangian Euler equation}
\rho_0\ddot{\pl}(\bm{X},t) = -\F^{-\top}\nabla_{\bm{X}}\hat{p}(\bm{X},t),
\end{equation}
where $\hat{p}$ is the material pressure field, with boundary condition
\begin{equation}
\label{boundary cond}
\hat{p}(\bm{X},t)=0 \quad\textit{on}~\partial\Omega_0.
\end{equation}
\end{theorem}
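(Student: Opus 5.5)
The plan is to derive \eqref{Lagrangian Euler equation} from Hamilton's principle for the action built from the kinetic energy \eqref{kinetic energy},
\[
\mathcal{S}[\pl]=\int_{t_0}^{t_1}\int_{\Omega_0}\frac12\rho_0|\dot{\pl}(\bm{X},t)|^2\,d\bm{X}\,dt ,
\]
requiring stationarity only under variations that preserve the constraint $\det\F=1$. Concretely, I take a one-parameter family $\pl_\varepsilon(\cdot,t)\in\DSM$ with $\pl_0=\pl$ and fixed endpoints at $t_0,t_1$. The variation is $\bm{W}=\partial_\varepsilon\pl_\varepsilon|_{\varepsilon=0}\in H^s(\Omega_0,\mathbb{R}^3)$, vanishing at $t_0,t_1$. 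Differentiating $\det\F_\varepsilon=1$ in $\varepsilon$ with Euler's identity gives $\operatorname{tr}(\F^{-1}\nabla_{\bm{X}}\bm{W})=0$. Rewriting in spatial coordinates, this says $\bm{w}=\bm{W}\circ\pl^{-1}$ satisfies $\operatorname{div}_{\bm{x}}\bm{w}=0$. By Theorem~\ref{regularity inverse} and Proposition~\ref{prop:regularity velocity} (or Remark~\ref{composition diffeo}), $\bm{w}\in H^s(\Omega_t,\mathbb{R}^3)$, so these admissible variations are exactly the divergence-free class appearing in Proposition~\ref{ortho}.

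Next I compute the first variation. Integrating by parts in time and using the fixed endpoints,
\[
\delta\mathcal{S}=-\int_{t_0}^{t_1}\int_{\Omega_0}\rho_0\,\ddot{\pl}\cdot\bm{W}\,d\bm{X}\,dt .
\]
Since $\rho_0$ is constant and $\det\F=1$, the change of variables $\bm{x}=\pl(\bm{X},t)$ gives, for a.e.\ $t$,
\[
\int_{\Omega_t}\ddot{\pl}(\pl^{-1}(\bm{x},t),t)\cdot\bm{w}(\bm{x},t)\,d\bm{x}=0
\]
for every divergence-free $\bm{w}$ in this class; the time localization uses bump functions in $t$. I then invoke the Helmholtz--Hodge decomposition on $\Omega_t$: the $L^2$-orthogonal complement of the divergence-free fields is the space of gradients $\nabla_{\bm{x}}p$. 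To obtain the boundary condition \eqref{boundary cond}, I read the admissible class as divergence-free fields with no constraint on the normal trace at $\partial\Omega_t$, since the boundary is free to move. With that class the complement consists of gradients of $p$ with $p=0$ on $\partial\Omega_t$, exactly as in the boundary term of \eqref{divergence theorem}. Hence $\rho_0\,\ddot{\pl}\circ\pl^{-1}=-\nabla_{\bm{x}}p$ with $p|_{\partial\Omega_t}=0$. Pulling back via $\nabla_{\bm{x}}p=\F^{-\top}\nabla_{\bm{X}}\hat p$ gives \eqref{Lagrangian Euler equation} with \eqref{boundary cond}. Conversely, Proposition~\ref{ortho} shows that any such right-hand side is orthogonal to all admissible $\bm{W}$, so the equation is equivalent to stationarity.

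The main obstacle is the step from orthogonality to the existence of $\hat p$. This needs the decomposition on the time-dependent domain $\Omega_t$, which is only as regular as $\pl(\partial\Omega_0)$ and hence Lipschitz. It also needs a check that every $H^s$ divergence-free $\bm{w}$ really arises as a variation tangent to $\DSM$, which is the submanifold statement only proven in Section~\ref{sec:geodesics}. I will handle the first point by using the $L^2$ version of the decomposition, $L^2(\Omega_t)=\{\operatorname{div}\bm{w}=0\}\oplus\nabla H^1_0(\Omega_t)$, which holds on bounded Lipschitz domains. Density of $H^s$ divergence-free fields in the $L^2$ divergence-free space then lets the orthogonality relation pass from the smooth class to the full space. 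For the second point I will construct explicit variations $\pl_\varepsilon=\bm{\psi}_\varepsilon\circ\pl$, where $\bm{\psi}_\varepsilon$ is the flow of $\bm{w}$ (extended to a neighbourhood of $\Omega_t$) multiplied by a cutoff in time. This flow is volume preserving because $\operatorname{div}\bm{w}=0$, so $\pl_\varepsilon$ stays in $\DSM$ for small $\varepsilon$ by the openness of Theorem~\ref{orientation-preserving open}. Finally, $p$ is determined as the $H^1_0$ solution of $-\Delta p=\rho_0\operatorname{div}_{\bm{x}}(\ddot{\pl}\circ\pl^{-1})$, which identifies $\hat p$ as the pressure.
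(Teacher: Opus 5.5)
Your argument follows the paper's route: stationarity of the kinetic action under variations whose Eulerian counterpart $\bm w$ is divergence-free, integration by parts in time, localization, and identification of the $L^2$-annihilator of these variations with the fields $\F^{-\top}\nabla_{\bm X}\hat p$, $\hat p|_{\partial\Omega_0}=0$. At that last step you are more complete than the paper. Its appeal to Proposition~\ref{ortho} only shows that such fields lie in the annihilator, whereas the conclusion needs the reverse inclusion. Your decomposition $L^2(\Omega_t)=\{\operatorname{div}\bm w=0\}\oplus\nabla H^1_0(\Omega_t)$ supplies that reverse inclusion.

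One auxiliary step fails as written: the flow of an arbitrary extension of $\bm w$ is not volume preserving on $\Omega_t$. Admissible $\bm w$ generally have $\bm w\cdot\bm n\neq 0$ on $\partial\Omega_t$, and these are precisely the variations that produce $p|_{\partial\Omega_t}=0$. Trajectories starting in $\Omega_t$ then leave it, and by Liouville's formula $\det\nabla\bm\psi_\varepsilon$ involves the divergence of the extension outside $\Omega_t$. So $\operatorname{div}\bm w=0$ in $\Omega_t$ is not enough. You would need an $H^s$ extension that is divergence-free on a neighbourhood of $\overline{\Omega_t}$, which requires a Bogovskii-type correction.

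A cheaper repair is to generate variations only from fields that are divergence-free near $\overline{\Omega_t}$ from the outset. It also replaces your density claim, which is true but not a textbook fact without boundary conditions. Take $\operatorname{curl}\bm\psi$ with $\bm\psi\in C^\infty_c(\mathbb R^3,\mathbb R^3)$ and, when $\partial\Omega_t$ is disconnected, $\nabla|\bm x-\bm a|^{-1}$ with $\bm a$ in a bounded component of $\mathbb R^3\setminus\overline{\Omega_t}$. These fields have three useful properties:
\begin{itemize}
\item their flows are volume preserving;
\item the same spatial field serves for all times near a given one, which makes the time localization immediate;
\item an $L^2(\Omega_t)$ field orthogonal to all of them has curl-free zero extension to $\mathbb R^3$, hence equals $\nabla q$ with $q$ constant on each component of $\partial\Omega_t$.
\end{itemize}
Orthogonality to the point-source fields then forces these constants to coincide, so $q\in H^1_0(\Omega_t)$ after subtracting a constant. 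This is exactly what you need to conclude $\rho_0\,\ddot{\pl}\circ\pl^{-1}=-\nabla_{\bm x}p$ with $p|_{\partial\Omega_t}=0$.
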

\begin{proof}
Using the principle of least action on the kinetic energy functional, we define the motion as a stationary curve of
\[
\mathcal{F}[\pl] := \int_0^T\mathcal{K}(t)dt = \int_0^T\int_{\Omega_0}\frac{1}{2}\rho_0\rvert\dot{\pl}(\bm{X},t)\lvert^2d\bm{X}dt.
\]
Thus we have
\begin{equation}
\label{variation kinetic energy}
\delta\mathcal{F} = \frac{d}{ds}\mathcal{F}[\pl +s \bm{W}]\rvert_{s=0} = 0,
\end{equation}
for every $\bm{W}(\cdot,t)\in H^{s}(\Omega_0,\mathbb{R}^3)$ such that $\bm{W}(\cdot,0)= \bm{W}(\cdot,T) = \bm{0}$ and $\bm{w}(\cdot,t)=\bm{W}(\pl^{-1}(\cdot,t),t)\in H^{s}(\Omega_t,\mathbb{R}^3)$ is a divergence-free vector field, due to the incompressibility constraint. 
In the next section, we will show that these variations correspond precisely to tangent vectors to the submanifold of volume-preserving deformations. 
Using an integration by parts and the vanishing initial and final conditions for $\bm{W}$, equation \eqref{variation kinetic energy} can be written as
\begin{equation*}
\begin{split}
0&=\frac{d}{ds} \int_0^T\int_{\Omega_0}\frac{1}{2}\rho_0\left|\frac{\partial}{\partial t}\left(\pl(\bm{X},t)+s\bm{W}(\bm{X},t)\right)\right|^2d\bm{X}dt\rvert_{s=0}\\
&= \int_0^T\int_{\Omega_0}\rho_0\left(\dot{\pl}(\bm{X},t)\cdot \frac{d}{dt}\bm{W}(\bm{X},t)\right)d\bm{X}dt\\
&= \int_0^T\int_{\Omega_0} -\rho_0\ddot{\pl}(\bm{X},t)\cdot \bm{W}(\bm{X},t)d\bm{X}dt,
\end{split}
\end{equation*}
for every $\bm{W}$ such that $\operatorname{div}_{\bm{x}}\bm{w}=0$. With a simple localization argument, we get
\begin{equation}
\label{localization ortho}
\rho_0\ddot{\pl}(\bm{X},t) = -\bm{H}(\bm{X},t),
\end{equation}
where $\bm{H}(\cdot,t)$ is orthogonal (in the $L^2$-sense) to the subspace of vector fields $\bm{W}(\cdot,t)$ such that $\operatorname{div}_{\bm{x}}\bm{w}=0$. Recalling the previous proposition, equation \eqref{localization ortho} becomes
\begin{equation*}
\rho_0\ddot{\pl}(\bm{X},t) = -\F^{-\top}\nabla_{\bm{X}}\hat{p}(\bm{X},t) \quad\textnormal{in}~\Omega_0,
\end{equation*}
with boundary condition
\begin{equation*}
\hat{p}(\bm{X},t)=0 \quad\text{on}~\partial\Omega_0.
\end{equation*}
Equations \eqref{Lagrangian Euler equation} and \eqref{boundary cond} correspond to a Lagrangian formulation of the Euler equation for an incompressible homogeneous continuum.
\end{proof}

\begin{remark}
Since only the material pressure gradient appears in the equation of motion, the Lagrange multiplier is defined up to the addition of an arbitrary function of time. Therefore, the boundary condition \eqref{boundary cond} may equivalently be replaced by
\begin{equation*}
\hat{p}(\bm{X},t)=C(t) \quad\text{on}~\partial\Omega_0,
\end{equation*}
where $C(t)$ is an arbitrary function of time, without affecting the resulting evolution equation. Other choices of admissible variations lead to different natural boundary conditions. For instance, if one prescribes
\begin{equation*}
\bm{W}\cdot\F^{-\top}\bm{N}=0\quad \text{on}~\partial\Omega_0,
\end{equation*}
where $\bm{N}$ is the outer unit normal to $\partial\Omega_0$,
the boundary integral vanishes identically and no boundary condition is imposed on the Lagrange multiplier $\hat{p}$. This condition is equivalent to requiring the corresponding Eulerian variation to be tangent to the deformed boundary, namely
\begin{equation*}
\bm{w}\cdot\bm{n}=0\quad\text{on}~\partial\Omega_t. 
\end{equation*}
Consequently, this choice is compatible with the classical impermeability condition for incompressible fluids flowing inside a fixed domain. More generally, one can partition the boundary $\partial\Omega_0$ as the disjoint union of $\Gamma_1$ and $\Gamma_2$, imposing $\hat{p}=C(t)$ on $\Gamma_1$ and $\bm{W}\cdot\F^{-\top}\bm{N}=0$ on $\Gamma_2$, thereby allowing mixed boundary conditions.
\end{remark}
\begin{remark}
\label{rmk:inertial motion}
The variational argument leading to Theorem \ref{th: euler lagrangian} is not intrinsically restricted to the incompressible case. Indeed, if one considers arbitrary variations of the placement map, the stationarity of the Lagrangian action $\mathcal{F}$ immediately yields
\begin{equation}\label{eq:inertial_compressible}
\rho_0\ddot{\pl}=0,
\end{equation}
that is, each material point moves with constant velocity along a straight line.
This equation should not be interpreted as the physical equation of motion of a compressible continuum. In compressible materials, internal stresses arise from the interactions among the material particles and are therefore present even in the absence of the volume-preserving constraint. Consequently, the actual balance of linear momentum contains stress contributions and does not reduce to the simple inertial equation \eqref{eq:inertial_compressible}.
The incompressible case is fundamentally different. For incompressible perfect fluids, the only internal stress is the pressure enforcing the volume constraint. Therefore, when the Lagrangian action $\mathcal{F}$ is required to be stationary under incompressible variations, the inertial motion selected by the variational principle coincides with the physical motion of an incompressible perfect fluid.
\end{remark}

\begin{proof}[Proof of Proposition \ref{prop:regularity velocity}]
We have clearly $\bm{V}(\cdot,t)\in H^s(\Omega_0,\mathbb{R}^3)$ for every $t\in\mathbb{R}$, by definition of the Lagrangian velocity field and because $\DSP$ is an open subset of $H^s(\Omega_0,\mathbb{R}^3)$.
We now want to prove that $\bm{v}(\cdot,t) \in H^s(\Omega_t,\mathbb{R}^3)$ by estimating its derivatives. For simplicity of notation, we omit the explicit dependence on time.
\smallskip

\noindent
\underline{\emph{Step 0: estimate of the Eulerian velocity field.}}
From the definition of $\bm{V}$ and $\bm{v}$, and the boundedness of the Jacobian determinant $J$, we have that the change of variable $\bm{x}=\pl(\bm{X},t)$ yields
\begin{equation*}
\int_{\Omega_t}|\bm{v}(\bm{x})|^2\,d\bm{x}
=\int_{\Omega_0}|\bm{V}(\bm{X})|^2\,J(\bm{X})\,d\bm{X}
\leq \|J\|_{L^{\infty}(\Omega_0)}\int_{\Omega_0}|\bm{V}(\bm{X})|^2\,d\bm{X},
\end{equation*}
hence
\begin{equation*}
\|\bm{v}\|^2_{L^2(\Omega_t)}\leq \|J\|_{L^{\infty}(\Omega_0)}\;\|\bm{V}\|^2_{L^2(\Omega_0)}.
\end{equation*}
Thus, $\bm{v} \in L^2(\Omega_t,\mathbb{R}^3)$.
\smallskip

\noindent
\underline{\emph{Step 1: first derivative.}}
Since $\bm{v} = \bm{V}\circ\pl^{-1}$, the chain rule yields
\begin{equation}
\label{first derivative vel}
\left[\nabla_{\bm{x}}\bm{v}(\bm{x})\right]_{ij} = \frac{\partial v_i}{\partial x_j} = \frac{\partial V_i}{\partial X_k}(\pl^{-1}(\bm{x}))\,\frac{\partial X_k}{\partial x_j} = \left[\nabla_{\bm{X}}\bm{V} (\pl^{-1}(\bm{x}))\right]_{ik}\left[\nabla_{\bm{x}}\pl^{-1}\right]_{kj}
\end{equation}
Equation \eqref{first derivative vel}, combined with \eqref{gradient inverse}, leads to
\begin{equation}
\label{change variable vel}
\nabla_{\bm{x}}\bm{v}(\pl(\bm{X}))
= \nabla_{\bm{X}}\bm{V}(\bm{X})\;\F^{-1}(\bm{X}).
\end{equation}
Exploiting the previous identity and a change of variables, recalling the fact that $J$ and $\F^{-1}$ are bounded in $\Omega_0$, we obtain
\begin{equation*}
\begin{aligned}
\|\nabla_{\bm{x}}\bm{v}\|^2_{L^2(\Omega_t)} &= \int_{\Omega_t}|\nabla_{\bm{x}}\bm{v}(\bm{x})|^2 d\bm{x} = \int_{\Omega_0}J(\bm{X})|\nabla_{\bm{x}}\bm{v}(\pl(\bm{X}))|^2d\bm{X}\\
&\leq\;\|J\|_{L^{\infty}(\Omega_0)}\;\|\F^{-1}\|_{L^\infty(\Omega_0)}^2\|\nabla_{\bm{X}}\bm{V}\|^2_{L^2(\Omega_0)},
\end{aligned}
\end{equation*}
hence we can conclude that $\nabla_{\bm{x}}\bm{v}\in L^2(\Omega_t,\mathbb{R}^9)$.

\smallskip

\noindent
\underline{\emph{Step 2: second derivative.}}
We start from the identity \eqref{change variable vel}, i.e., in components,
\begin{equation}
\label{first derivative_comp vel}
\frac{\partial V_i}{\partial X_a}(\pl^{-1}(\bm{x}))\,\frac{\partial X_a}{\partial x_j}(\bm{x}) = [\nabla_{\bm{x}}\bm{v}(\bm{x})]_{ij},
\qquad i,j=1,2,3.
\end{equation}
Differentiating \eqref{first derivative_comp vel} with respect to $x_k$ and applying the chain rule, one gets
\begin{equation}
\label{second derivative_comp vel}
\frac{\partial^2 V_i}{\partial X_b \partial X_a}(\pl^{-1}(\bm{x}))\frac{\partial X_b}{\partial x_k}(\bm{x})\frac{\partial X_a}{\partial x_j}(\bm{x})
\;+\;
\frac{\partial V_i}{\partial X_a}(\pl^{-1}(\bm{x}))\,\frac{\partial^2 X_a}{\partial x_k \partial x_j}(\bm{x})
=[\nabla^2_{\bm{x}}\bm{v}]_{ijk}.
\end{equation}
Let $\mathsf{A}$ and $\mathcal{H}$ be defined by the components
\begin{align*}
A_{a j}(\bm{X})&:=(\F^{-1})_{a j}(\bm{X})=\frac{\partial X_a}{\partial x_j}(\pl(\bm{X})),\\
H_{a k j}(\bm{X})&:=\left[\nabla_{\bm{x}}^2\pl^{-1}(\pl(\bm{X}))\right]_{a k j}=\frac{\partial^2 X_a}{\partial x_k\,\partial x_j}(\pl(\bm{X})).
\end{align*}
Equation \eqref{second derivative_comp vel} becomes
\begin{equation}
\begin{aligned}
\label{second derivative vel}
[\nabla^2_{\bm{x}}\bm{v}(\pl(\bm{X}))]_{ijk} &= [\nabla^2_{\bm{X}}\bm{V}(\bm{X})]_{iba}\,A_{b k}(\bm{X})A_{a j}(\bm{X})+
[\nabla_{\bm{X}}\bm{V}(\bm{X})]_{ia}\,H_{a k j}(\bm{X}).
\end{aligned}
\end{equation}
From \eqref{second derivative vel} and the submultiplicativity of the Frobenius norm, one has
\begin{equation}
\label{submultipl vel}
\big|\nabla^2_{\bm{x}}\bm{v}(\pl(\bm{X}))\big|
\;\le\;|\nabla^2_{\bm{X}}\bm{V}(\bm{X})|\,|\A(\bm{X})|^2\,+|\nabla_{\bm{X}}\bm{V}(\bm{X})|\,|\mathcal{H}(\bm{X})|.
\end{equation}
In particular, exploiting the previous pointwise estimate and a change of variables, one gets
\begin{equation*}
\begin{aligned}
&\|\nabla_{\bm{x}}^2\bm{v}\|_{L^2(\Omega_t)}^2
=
\int_{\Omega_t}\big|\nabla_{\bm{x}}^2\bm{v}(\bm{x})\big|^2\,d\bm{x}
=
\int_{\Omega_0}J(\bm{X})\big|\nabla_{\bm{x}}^2\bm{v}(\pl(\bm{X}))\big|^2\,d\bm{X}\\
&\le 2\int_{\Omega_0}J|\F^{-1}|^4\,|\nabla^2_{\bm{X}}\bm{V}|^2\,d\bm{X} + 2\int_{\Omega_0}J\,|\nabla_{\bm{X}}\bm{V}|^2\,|\nabla_{\bm{x}}^2\pl^{-1}(\pl(\bm{X}))|^2\,d\bm{X},
\end{aligned}
\end{equation*}
where we used also the fact that, for any $a$, $b$ real numbers, we have the inequality $(a+b)^2\leq 2a^2+2b^2$. As previously observed, both $J$ and $\F^{-1}$ are bounded in $\Omega_0$. Also $\nabla_{\bm{X}}\bm{V}$ is bounded in $\Omega_0$, since $\nabla_{\bm{X}}\bm{V}\in H^{s-1}(\Omega_0,\mathbb{R}^9)\hookrightarrow C^0(\overline{\Omega}_0, \mathbb{R}^9)$ and $\overline{\Omega}_0$ is compact. Therefore, we can conclude that
\begin{equation*}
\begin{aligned}
\|\nabla_{\bm{x}}^2\bm{v}\|^2_{L^2(\Omega_t)}
\;&\le\;2\|J\|_{L^{\infty}(\Omega_0)}\;\|\F^{-1}\|_{L^\infty(\Omega_0)}^4\,\|\nabla^2_{\bm{X}}\bm{V}\|^2_{L^2(\Omega_0)}\\
&+ 2\|J\|_{L^{\infty}(\Omega_0)}\;\|\nabla_{\bm{X}}\bm{V}\|_{L^\infty(\Omega_0)}^2\,\|\nabla_{\bm{x}}^2\pl^{-1}(\pl(\bm{X}))\|^2_{L^2(\Omega_0)},
\end{aligned}
\end{equation*}
hence $\nabla_{\bm{x}}^2\bm{v}\in L^2(\Omega_t,\mathbb{R}^{27})$ because $\nabla_{\bm{x}}^2\pl^{-1}(\pl(\bm{X}))\in L^2(\Omega_0,\mathbb{R}^{27})$ as shown in the proof of Theorem \ref{regularity inverse}.

\smallskip

\noindent
\underline{\emph{Step 3: third derivative.}}
We set
\[
G_{ijk}(\bm{X})=[\nabla^2_{\bm{x}}\bm{v}(\pl(\bm{X}))]_{ijk},
\]
that is
\begin{equation}
\label{Hess_vel_comp}
G_{ijk}(\bm{X}) = [\nabla^2_{\bm{X}}\bm{V}(\bm{X})]_{iba}\,A_{b k}(\bm{X})A_{a j}(\bm{X})+
[\nabla_{\bm{X}}\bm{V}(\bm{X})]_{ia}\,H_{a k j}(\bm{X}).
\end{equation}
By differentiating \eqref{Hess_vel_comp} with respect to $X_p$, one gets
\begin{equation*}
\begin{aligned}
\label{eq:dG_expanded}
&\frac{\partial G_{ijk}}{\partial X_p}(\bm{X})
=
\frac{\partial}{\partial X_p}[\nabla^2_{\bm{X}}\bm{V}]_{iba}\,A_{b k}\,A_{a j}
+[\nabla^2_{\bm{X}}\bm{V}]_{iba}\,\frac{\partial A_{b k}}{\partial X_p}\,A_{a j}\\
&+[\nabla^2_{\bm{X}}\bm{V}]_{iba}\,A_{b k}\frac{\partial A_{a j}}{\partial X_p}+\frac{\partial}{\partial X_p}[\nabla_{\bm{X}}\bm{V}]_{ia}\,H_{a k j} + [\nabla_{\bm{X}}\bm{V}]_{ia}\frac{\partial H_{a k j}}{\partial X_p}.
\end{aligned}
\end{equation*}
Taking into account that
\begin{equation*}
\label{isolated third gradient vel}
\left[\nabla_{\bm{x}}^3\bm{v}(\pl(\bm{X}))\right]_{ijk r}=\frac{\partial G_{ijk}}{\partial X_p}(\bm{X})\,A_{pr}(\bm{X}),
\end{equation*}
we arrive at the expression 
\begin{equation}
\label{third gradient vel}
\begin{aligned}
&\left[\nabla_{\bm{x}}^3\bm{v}(\pl(\bm{X}))\right]_{ijk r}=
[\nabla^3_{\bm{X}}\bm{V} (\bm{X})]_{iba p}\,A(\bm{X})_{bk}\,A(\bm{X})_{a j}\,A(\bm{X})_{pr}\\
&+[\nabla^2_{\bm{X}}\bm{V} (\bm{X})]_{ib a}\,H(\bm{X})_{bkr}\,A(\bm{X})_{a j}+[\nabla^2_{\bm{X}}\bm{V} (\bm{X})]_{ib a}\,H(\bm{X})_{ajr}\,A(\bm{X})_{b k}\\
&+[\nabla^2_{\bm{X}}\bm{V} (\bm{X})]_{ia p}\,H(\bm{X})_{a k j}\,A(\bm{X})_{pr}+[\nabla_{\bm{X}}\bm{V} (\bm{X})]_{ia}\,K(\bm{X})_{a k j r},
\end{aligned}
\end{equation}
where we used the notation
\[
K(\bm{X})_{ak j r} := \left[\nabla_{\bm{x}}^3\pl^{-1}(\pl(\bm{X}))\right]_{a k j r}.
\]
In order to find an explicit pointwise bound for $|\nabla_{\bm{x}}^3\bm{v}(\pl(\bm{X}))|$, we estimate each term in \eqref{third gradient vel} and exploit the submultiplicative property of the Frobenius norm. One then gets the following pointwise bound
\begin{equation}
\begin{aligned}
\label{pointwise_third vel}
\big|\nabla_{\bm{x}}^3\bm{v}(\pl(\bm{X}))\big|
&\le
|\A(\bm{X})|^3\,|\nabla^3_{\bm{X}}\bm{V}(\bm{X})|
+
3\,|\A(\bm{X})|\,|\nabla^2_{\bm{X}}\bm{V}(\bm{X})|\,|\mathcal{H}(\bm{X})|\\
&+ |\mathbb{K}(\bm{X})|\,|\nabla_{\bm{X}}\bm{V}(\bm{X})|.
\end{aligned}
\end{equation}
From the previous estimate, we obtain
\begin{equation*}
\begin{aligned}
\|\nabla_{\bm{x}}^3\bm{v}\|_{L^2(\Omega_t)}^2 &=
\int_{\Omega_0}J(\bm{X})\big|\nabla_{\bm{x}}^3\bm{v}(\pl(\bm{X}))\big|^2\,d\bm{X} \le C\left(\int_{\Omega_0}J|\F^{-1}|^6\,|\nabla_{\bm{X}}^3\bm{V}|^2\,d\bm{X}\right.\\
&+
\left.\int_{\Omega_0}J|\F^{-1}|^2\,|\nabla^2_{\bm{X}}\bm{V}|^2\,|\nabla_{\bm{x}}^2\pl^{-1}(\pl(\bm{X}))|^2\,d\bm{X}\right.\\
&\left.+
\int_{\Omega_0}J|\nabla_{\bm{x}}^3\pl^{-1}(\pl(\bm{X}))|^2\,|\nabla_{\bm{X}}\bm{V}|^2\,d\bm{X}\right),
\end{aligned}
\end{equation*}
where $C>0$ is a constant.

In particular, since $J$, $\F^{-1}$ and $\nabla_{\bm{X}}\bm{V}$ are bounded in $\Omega_0$
\begin{equation}
\begin{aligned}
\label{estimate third derivative vel}
\|\nabla_{\bm{x}}^3\bm{v}\|_{L^2(\Omega_t)}^2 &\leq  C\left(\;\|J\|_{L^{\infty}(\Omega_0)}\;\|\F^{-1}\|_{L^\infty(\Omega_0)}^6\,\|\nabla_{\bm{X}}^3\bm{V}\|_{L^2(\Omega_0)}^2\right.\\
&\left.+\;\|J\|_{L^{\infty}(\Omega_0)}\;\|\F^{-1}\|_{L^\infty(\Omega_0)}^{2}\,\|\left(\nabla^2_{\bm{X}}\bm{V}\right)\left(\nabla_{\bm{x}}^2\pl^{-1}(\pl(\bm{X}))\right)\|_{L^2(\Omega_0)}^2\right.\\
&\left. + \;\|J\|_{L^{\infty}(\Omega_0)}\;\|\nabla_{\bm{X}}\bm{V}\|_{L^\infty(\Omega_0)}^{2}\,\|\nabla_{\bm{x}}^3\pl^{-1}(\pl(\bm{X}))\|_{L^2(\Omega_0)}^2\right).
\end{aligned}
\end{equation}
The first term and the third term on the right-hand side of \eqref{estimate third derivative vel} are bounded, since $\nabla^3_{\bm{X}}\bm{V} \in L^2(\Omega_0,\mathbb{R}^{81})$ because $\bm{V}\in H^s(\Omega_0,\mathbb{R}^3)$ and $\nabla_{\bm{x}}^3\pl^{-1}(\pl(\bm{X}))\in L^2(\Omega_0,\mathbb{R}^{81})$ as shown in the proof of Theorem \ref{regularity inverse}. As for the second term on the right-hand side of \eqref{estimate third derivative vel}, recalling \eqref{submultipl}, we have
\begin{equation*}
\|\left(\nabla^2_{\bm{X}}\bm{V}\right)\left(\nabla_{\bm{x}}^2\pl^{-1}(\pl(\bm{X}))\right)\|_{L^2(\Omega_0)}^2 \leq \|\F^{-1}\|^6_{L^\infty(\Omega_0)}\|\left(\nabla^2_{\bm{X}}\bm{V}\right)\left(\nabla_{\bm{X}}\F\right)\|_{L^2(\Omega_0)}^2.
\end{equation*}
By H\"older's inequality, we also get
\[
\|\left(\nabla^2_{\bm{X}}\bm{V}\right)\left(\nabla_{\bm{X}}\F\right)\|_{L^2(\Omega_0)}^2
\le \|\nabla^2_{\bm{X}}\bm{V}\|^2_{L^6(\Omega_0)}\,\|\nabla_{\bm{X}}\F\|^2_{L^3(\Omega_0)}.
\]
Moreover, note that
\[
\nabla^2_{\bm{X}}\bm{V}, \nabla_{\bm{X}}\F\in H^{s-2}(\Omega_0,\mathbb{R}^{27})
\]
and $s-2\geq 1$, as $s\geq 3$. Hence
\[
\nabla^2_{\bm{X}}\bm{V}, \nabla_{\bm{X}}\F\in H^{1}(\Omega_0,\mathbb{R}^{27}).
\]
By the Sobolev embedding theorem in dimension three,
\[
H^1(\Omega_0, \mathbb{R}^{27}) \hookrightarrow L^6(\Omega_0,\mathbb{R}^{27}),
\]
and since $\Omega_0$ is bounded, the inclusion $L^6(\Omega_0, \mathbb{R}^{27}) \subseteq L^3(\Omega_0, \mathbb{R}^{27})$
also holds. Consequently, also the second term in \eqref{estimate third derivative vel} is bounded. Thus, $\nabla^3_{\bm{x}}\bm{v}\in L^2(\Omega_t,\mathbb{R}^{81})$.

\smallskip

\noindent
\underline{\emph{Step 4: higher-order derivatives.}}
For $s>3$, define
\[
\mathbb{L}_s(\bm{X}) := \nabla_x^{\,s} \bm{v}(\pl(\bm{X})).
\]
Note that one has the following estimate on the Frobenius norm of $\mathbb{L}_s(\bm{X})$:
\begin{equation}
\label{estimate gradient s vel}
|\mathbb{L}_s(\bm{X})| \le C_s \sum_{m=1}^{s} |\nabla_{\bm{X}}^m\bm{V}(\bm{X})| \sum_{\substack{k_1,\dots,k_{s-1}\ge 0\\ \sum_{j=1}^{s-1}j\,k_j=s-m}} |\A(\bm{X})|^{\,s+K} \prod_{j=1}^{s-1}|\nabla_{\bm{X}}^j\F(\bm{X})|^{k_j},
\end{equation}
where $K := \sum_{j=1}^{s-1} k_j$ and $C_s>0$ is a constant. Recalling also \eqref{estimate gradient s}, for $s=1,2,3$, the inequality \eqref{estimate gradient s vel} corresponds exactly to \eqref{change variable vel}, \eqref{submultipl vel} and \eqref{pointwise_third vel}. 
We can now prove the statement by induction with a simple adaptation of the argument given in the proof of Theorem \ref{regularity inverse}.
\end{proof}

\section{Geodesic flow and local well-posedness}\label{sec:geodesics}

In this section, we explore the geometric framework underlying the inertial motion of incompressible continua and establish the local well-posedness of the associated evolution problem. Building on the variational formulation obtained in the previous section, we show that the equations of motion can be interpreted as geodesic equations on the submanifold of volume-preserving deformations. The geometric structure obtained here extends the Arnold--Ebin--Marsden construction (see \cite{ebin1970groups}, \cite{arnold1966geometrie}) to a setting that includes general volume-preserving deformations, not limited to fluid motions within a fixed domain.

\begin{lemma}
\label{lem:submanifold}
Let \(\Omega_0 \subset \mathbb{R}^3\) be a bounded connected domain with Lipschitz boundary and let $s\geq 3$ be an integer. Define the Jacobian determinant map
\[
J:\DSP\longrightarrow H^{s-1}(\Omega_0,\mathbb{R}),
\]
given by $J(\pl)(\bm{X})= \operatorname{det}\F(\bm{X})$, for every $\bm{X}\in\Omega_0$. Then $J$ is a submersion on
\[
\DSM := \{ \pl \in \DSP \mid J(\pl) = 1 \}.
\]
\end{lemma}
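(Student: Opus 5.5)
Our plan is to show that $J$ is smooth and that, at every $\pl\in\DSM$, its differential $DJ(\pl):H^s(\Omega_0,\mathbb R^3)\to H^{s-1}(\Omega_0,\mathbb R)$ is surjective with a closed, complemented kernel.

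\emph{Smoothness.} Since $s-1\ge 2>3/2$, $H^{s-1}(\Omega_0)$ is a Banach algebra. The map $\pl\mapsto\F$ is continuous linear from $H^s$ to $H^{s-1}$, and $\det$ is a cubic polynomial in the entries of $\F$. Hence $J$ is a polynomial map and therefore smooth. Its differential at $\pl$ in the direction $\bm U$ is given by Jacobi's formula:
\[
DJ(\pl)[\bm U]=\operatorname{tr}\big(\operatorname{cof}(\F)^{\top}\nabla_{\bm X}\bm U\big)=J\,\operatorname{tr}(\nabla_{\bm X}\bm U\,\F^{-1}).
\]
We then pass to Eulerian variables by setting $\bm u=\bm U\circ\pl^{-1}$. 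By \eqref{change variable vel}, $\nabla_{\bm X}\bm U\,\F^{-1}=(\nabla_{\bm x}\bm u)\circ\pl$. At $\pl\in\DSM$, where $J=1$, this gives
\[
DJ(\pl)[\bm U]=(\operatorname{div}_{\bm x}\bm u)\circ\pl .
\]

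\emph{Surjectivity.} Fix $g\in H^{s-1}(\Omega_0)$ and set $f=g\circ\pl^{-1}$. By Remark \ref{composition diffeo}, with Theorem \ref{regularity inverse} supplying the regularity of $\pl^{-1}$, we have $f\in H^{s-1}(\Omega)$ with $\Omega=\pl(\Omega_0)$. We must find $\bm u\in H^s(\Omega,\mathbb R^3)$ with $\operatorname{div}\bm u=f$; the field $\bm U=\bm u\circ\pl$ then lies in $H^s(\Omega_0)$, again by Remark \ref{composition diffeo}. Because no boundary condition is imposed on $\bm u$, there is no compatibility constraint such as $\int f=0$. Following Stein, we extend $f$ to $\tilde f\in H^{s-1}(\mathbb R^3)$ with compact support inside a large ball $B$. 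Here one checks that $\Omega$ is Lipschitz: it is the image of a Lipschitz domain under a $C^1$ diffeomorphism, and the extension is applied to $f$ on $\Omega$ (alternatively, one may work on $\Omega_0$ and transport). We then solve the Dirichlet problem $\Delta\psi=\tilde f$ on $B$, so that $\psi\in H^{s+1}_{loc}$ by interior elliptic regularity, and we set $\bm u=\nabla\psi|_\Omega\in H^s(\Omega)$. A simpler route is to take the Newtonian potential $\psi=\Gamma*\tilde f$, which lies in $H^{s+1}_{loc}$ by Calderón--Zygmund estimates. Either way the construction is linear and bounded in $f$, so it yields a bounded right inverse $R$ of $DJ(\pl)$.

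\emph{Splitting of the kernel.} Since $DJ(\pl)$ is bounded, its kernel is closed. Because $DJ(\pl)\circ R=\mathrm{Id}$, we obtain
\[
H^s=\ker DJ(\pl)\oplus\operatorname{ran}R ,
\]
which is the splitting required for a submersion in the Banach setting. (In a Hilbert space every closed subspace is complemented anyway, so the essential point is surjectivity.)

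\emph{Main obstacle.} We expect the hardest step to be solving $\operatorname{div}\bm u=f$ with $H^s$ regularity up to the boundary on the deformed domain $\Omega$. This hinges on having a Sobolev extension operator for $\Omega$ at order $s-1$. We would therefore extend on the Lipschitz domain $\Omega_0$ first, setting $\tilde g=Eg$. The transport to $\Omega$ requires $\pl$ itself to be extended to a neighborhood as an $H^s$ diffeomorphism (via $E\pl$, which stays a local diffeomorphism near $\overline\Omega_0$ by $C^1$ closeness). Alternatively, one can pull back the equation and solve $\operatorname{div}_{\bm X}(\operatorname{cof}\F^{\top}\ldots)$ directly on $\Omega_0$ via the Piola identity $\operatorname{Div}(\operatorname{cof}\F)=0$. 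In that form surjectivity reads $\operatorname{Div}_{\bm X}(\operatorname{cof}(\F)^{\top}\bm U)=g$. It is solved by $\bm U=\operatorname{cof}(\F)^{-\top}\bm Y$ with $\operatorname{Div}\bm Y=g$ on $\Omega_0$, where $\bm Y=\nabla(\Gamma*Eg)$, and $\operatorname{cof}(\F)^{-\top}=\F^{\top}$ lies in $H^{s-1}$. This last route costs one derivative: $\bm U$ is only $H^{s-1}$. We would therefore rather go through the Eulerian formulation, relying on Theorem \ref{regularity inverse}.
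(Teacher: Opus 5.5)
Your argument is correct. Its skeleton matches the paper's: Jacobi's formula gives $T_{\pl}J(\bm U)=J\,(\operatorname{div}_{\bm x}\bm u)\circ\pl$. Surjectivity at $J=1$ is then reduced, through Remark~\ref{composition diffeo} and Theorem~\ref{regularity inverse}, to solving $\operatorname{div}_{\bm x}\bm u=g\circ\pl^{-1}$ in $H^s(\Omega,\mathbb R^3)$ with no boundary condition.

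The difference is in how you solve that divergence equation. The paper invokes the regularity of the Sobolev de Rham complex on bounded Lipschitz domains (Costabel--McIntosh's regularized Poincar\'e operators), specifically the vanishing of top-degree cohomology, $\operatorname{div}H^{\sigma+1}(\Omega,\Lambda^2)=H^{\sigma}(\Omega,\Lambda^3)$. You instead take a Stein extension $\tilde f$ and the gradient field $\bm u=\nabla(\Gamma*\tilde f)|_{\Omega}$. This needs only the $L^2$-boundedness of $D^2\Gamma*$ and the fact that derivatives commute with convolution.

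Your route is more elementary and self-contained. It also yields an explicit bounded linear right inverse $R$, so you obtain the smoothness of $J$ and the splitting $H^s=\ker T_{\pl}J\oplus\operatorname{ran}R$, which the paper leaves implicit (harmless in a Hilbert space, as you note). The paper's citation is a heavier black box that covers every degree of the complex, which is more than this step requires.

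Both arguments rely on $\Omega=\pl(\Omega_0)$ being Lipschitz, which the paper simply asserts from the $C^1$ regularity of $\pl$. Your variant of extending $g$ on $\Omega_0$ and transporting through an extension of $\pl$ to a neighborhood would avoid needing that fact.

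One small slip in the Piola route you discard: with $J=1$ one has $(\operatorname{cof}\F)^{-\top}=\F$, not $\F^{\top}$. This does not affect your derivative-loss observation or the main argument.
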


\begin{proof}
We first compute the differential (tangent map) of $J$ and then prove surjectivity at each point in $\DSM$ in order to state that $J$ is a submersion on $\DSM$ (see \cite{lang2002introduction}).
For any $\pl \in \DSP$, the differential of $J$ at $\pl$ is
\[
T_{\pl}J : T_{\pl}\DSP\to H^{s-1}(\Omega_0,\mathbb{R})
\]
given by
\begin{equation*}
T_{\pl}J(\bm{V}) = J(\pl) (\operatorname{div}_{\bm{x}}\bm{v})\circ\pl,
\end{equation*}
where $\bm{V}=\bm{v}\circ\pl\in H^{s}(\Omega_0,\mathbb{R}^3)$. This result follows from Proposition \ref{tangent vectors} and the standard formula for the differential of the determinant (see \cite{marsden1994mathematical}, \cite{gurtin1981introduction})
\[
T_{\pl}J(\bm{V}) = J(\pl)\,\mathrm{tr}(\F^{-1}\nabla_{\bm{X}} \bm{V}) = J(\pl) \F^{-\top} : \nabla_{\bm{X}}\bm{V} = J(\pl) (\operatorname{div}_{\bm{x}}\bm{v})\circ\pl,
\]
where we used the matrix product $\A:\mathsf{B} = \operatorname{tr}(\A\mathsf{B}^\top)$. Moreover, restricting our attention to the set $\DSM$, we have $J(\pl)=1$ in the above formulae. Therefore, we need to prove that for every $g\in H^{s-1}(\Omega_0,\mathbb{R})$, there exists $\bm{V}\in H^{s}(\Omega_0,\mathbb{R}^3)$ such that
\begin{equation}
\label{surjectivity2}
\F^{-\top} : \nabla_{\bm{X}}\bm{V} = g.
\end{equation}
The change of variable $\bm{X} = \pl^{-1}(\bm{x})$ yields
\begin{equation}
\label{surj spatial2}
\operatorname{div}_{\bm{x}}\bm{v} = \hat{g},
\end{equation}
where $\bm{v}\in H^s(\Omega,\mathbb{R}^3)$ thanks to Proposition \ref{prop:regularity velocity} and $\hat{g} = g\circ\pl^{-1}\in H^{s-1}(\Omega,\mathbb{R})$ according to Remark \ref{composition diffeo}. Recalling that $\Omega_0\subset\mathbb{R}^3$ is a bounded domain with Lipschitz boundary, we have that $\Omega=\pl(\Omega_0)$ is also a bounded Lipschitz domain in $\mathbb{R}^3$, due to the regularity of
\[
\pl\in\DSP\hookrightarrow C^1(\overline{\Omega}_0,\mathbb{R}^3).
\]
In order to prove the existence of a solution for equation \eqref{surj spatial2}, we use the regularity properties of the de Rham complex without boundary conditions on starlike domains with respect to a ball. More precisely, exploiting the properties of Poincaré-type operators (see for example \cite{BottTu82} for a standard proof of Poincaré's lemma), it is possible to prove that, for $\Omega\subset\mathbb{R}^n$ a bounded Lipschitz domain (and therefore the finite union of starlike domains with respect to some ball, see for example \cite{galdi2011introduction}), the complex
\begin{equation}
\label{deRham}
0 \longrightarrow H^{\sigma}(\Omega,\Lambda^{0})
\xrightarrow{\,d\,}
H^{\sigma-1}(\Omega,\Lambda^{1})
\xrightarrow{\,d\,}\cdots\xrightarrow{\,d\,}
H^{\sigma-n}(\Omega,\Lambda^{n})
\longrightarrow 0,
\end{equation}
has a finite-dimensional cohomology space for any $\sigma\in\mathbb{R}$ (see \cite{constabel2010}). Recall that in \eqref{deRham} $d$ is the exterior derivative satisfying $d\circ d=0$ and $\Lambda^{\ell}$, with $0\leq\ell\leq n$, denotes the exterior algebra of $\mathbb{R}^n$. In particular, for $n=3$, the complex \eqref{deRham} can be written as
\[
0 \longrightarrow H^{\sigma+3}(\Omega,\Lambda^{0})
\xrightarrow{\,\operatorname{grad}\,}
H^{\sigma+2}(\Omega,\Lambda^{1})
\xrightarrow{\,\operatorname{curl}\,}
H^{\sigma+1}(\Omega,\Lambda^{2})
\xrightarrow{\,\operatorname{div}\,}
H^{\sigma}(\Omega,\Lambda^{3})
\longrightarrow 0,
\]
where we can identify $\Lambda^0$, $\Lambda^3$ with $\mathbb{R}$ and $\Lambda^1$, $\Lambda^2$ with $\mathbb{R}^3$. Moreover, for any $\sigma \in \mathbb{R}$, we have that in top degree the cohomology vanishes (see \cite{constabel2010}):
\begin{equation}
\label{0 cohomology}
d\,H^{\sigma+1}(\Omega,\Lambda^{2})=H^{\sigma}(\Omega,\Lambda^{3}).
\end{equation}
We can therefore apply this result for $\sigma=s-1$. From \eqref{0 cohomology}, it follows that, given $\hat{g}\in H^{s-1}(\Omega,\mathbb{R})$, there exists $\bm{v}\in H^s(\Omega,\mathbb{R}^3)$ such that
\[
\operatorname{div}_{\bm{x}}\bm{v}=\hat{g}.
\]
Therefore we have that $\bm{V} = \bm{v}\circ \pl \in H^s(\Omega_0,\mathbb{R}^3)$ by Proposition \ref{prop:regularity velocity} and Remark \ref{composition diffeo} and solves
\begin{equation*}
\F^{-\top} : \nabla_{\bm{X}}\bm{V} = g,
\end{equation*}
where $g = \hat{g}\,\circ\,\pl\in H^{s-1}(\Omega_0,\mathbb{R})$. Thus we have shown that $\bm{V}$ belongs to $H^s(\Omega_0,\mathbb{R}^3)$ and solves \eqref{surjectivity2}. Hence, $T_{\pl} J$ is surjective for every $\pl\in D^s_{\mu}$, which
proves the claim.
\end{proof}

\begin{remark}
Note that $J$ takes values in $H^{s-1}(\Omega_0,\mathbb{R})$ since the determinant is a polynomial function of the entries of $\F\in H^{s-1}(\Omega_0,\operatorname{Mat}_{3}(\mathbb{R}))$.
Indeed, for $s > 5/2$, the Sobolev space $H^{s-1}(\Omega_0,\mathbb{R})$ is an algebra, i.e., if $f,g \in H^{s-1}(\Omega_0,\mathbb{R})$, then $fg \in H^{s-1}(\Omega_0,\mathbb{R})$ (see for example \cite{taylor2011partial}).
\end{remark}

The previous lemma allows us to easily prove the following result, establishing the submanifold structure of $\DSM$.
\begin{theorem}
Let $\Omega_0 \subset \mathbb{R}^3$ be a bounded connected domain with Lipschitz boundary and let $s \geq 3$ be an integer. Let $\DSM$ be the set of volume-preserving maps in $\DSP$, namely
\[
\DSM := \{ \pl \in \DSP \mid J(\pl) = 1 \}.
\]
Then $\DSM$ is a submanifold of $\DSP$, with tangent space at $\pl$ given by
\begin{equation}
\label{tangent space}
T_{\pl}\DSM=\{\bm{V}\in H^s(\Omega_0,\mathbb{R}^3)\mid \bm{V}=\bm{v}\circ\pl,~\bm{v}\in H^s(\Omega, \mathbb{R}^3) \text{ with }\operatorname{div}_{\bm{x}}\bm{v} = 0\}.
\end{equation}
\end{theorem}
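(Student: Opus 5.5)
The plan is to apply the regular value (preimage) theorem for smooth maps between Hilbert manifolds. We regard $J:\DSP\to H^{s-1}(\Omega_0,\mathbb{R})$ and note that $\DSM=J^{-1}(1)$, where $1$ denotes the constant function, which belongs to $H^{s-1}(\Omega_0,\mathbb{R})$ because $\Omega_0$ is bounded.

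\textbf{Step 1: smoothness of $J$.} Since $\pl\mapsto\F$ is a bounded linear map $H^s(\Omega_0,\mathbb{R}^3)\to H^{s-1}(\Omega_0,\operatorname{Mat}_3(\mathbb{R}))$, and $H^{s-1}(\Omega_0,\mathbb{R})$ is a Banach algebra for $s\geq 3$ (see the Remark after Lemma \ref{lem:submanifold}), the map $J(\pl)=\det\F$ is a continuous cubic polynomial on $H^s$. It is therefore $C^\infty$ on the open set $\DSP$, by Theorem \ref{orientation-preserving open}.

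\textbf{Step 2: submersion condition.} Lemma \ref{lem:submanifold} gives surjectivity of $T_{\pl}J$ at every $\pl\in\DSM$. In infinite dimensions the preimage theorem also requires $\ker T_{\pl}J$ to be complemented in $H^s(\Omega_0,\mathbb{R}^3)$ (see \cite{lang2002introduction}). This is automatic here: $H^s(\Omega_0,\mathbb{R}^3)$ is a Hilbert space, and $\ker T_{\pl}J$ is a closed subspace because $T_{\pl}J$ is continuous, so its orthogonal complement serves. Thus $1$ is a regular value in the sense of Lang, and $\DSM=J^{-1}(1)$ is a closed submanifold of $\DSP$. Its tangent space is $T_{\pl}\DSM=\ker T_{\pl}J$.

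\textbf{Step 3: identification of the kernel.} From the formula in the proof of Lemma \ref{lem:submanifold}, with $J(\pl)=1$ we have
\[
T_{\pl}J(\bm{V})=\F^{-\top}:\nabla_{\bm{X}}\bm{V}=(\operatorname{div}_{\bm{x}}\bm{v})\circ\pl,\qquad \bm{v}:=\bm{V}\circ\pl^{-1}.
\]
Since $\pl$ is a bijection, $T_{\pl}J(\bm{V})=0$ if and only if $\operatorname{div}_{\bm{x}}\bm{v}=0$ on $\Omega$. The two inclusions of \eqref{tangent space} then follow from regularity transfer.
\begin{itemize}
\item[(a)] For $\bm{V}\in\ker T_{\pl}J$, Proposition \ref{prop:regularity velocity}, or equivalently Remark \ref{composition diffeo} with $k=s$, gives $\bm{v}\in H^s(\Omega,\mathbb{R}^3)$ with $\operatorname{div}_{\bm{x}}\bm{v}=0$.
\item[(b)] Conversely, for $\bm{v}\in H^s(\Omega,\mathbb{R}^3)$ with $\operatorname{div}_{\bm{x}}\bm{v}=0$, Remark \ref{composition diffeo} gives $\bm{V}=\bm{v}\circ\pl\in H^s(\Omega_0,\mathbb{R}^3)$ in the kernel.
\end{itemize}

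The main obstacle is checking the hypotheses of the infinite-dimensional preimage theorem, namely smoothness of $J$ and the complemented-kernel condition. Both reduce, as above, to the algebra property of $H^{s-1}$ and the Hilbert structure, so I expect no real difficulty there. Surjectivity, the genuinely hard part, is already supplied by Lemma \ref{lem:submanifold}.
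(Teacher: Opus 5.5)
Your proposal is correct and follows the same route as the paper, which applies the preimage theorem to $J=1$ using the surjectivity from Lemma~\ref{lem:submanifold} and identifies $T_{\pl}\DSM$ with $\ker T_{\pl}J$. You also make explicit three points the paper leaves implicit: the smoothness of $J$, the splitting of the kernel (automatic in a Hilbert space), and the two regularity-transfer inclusions.
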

\begin{proof}
Lemma \ref{lem:submanifold} showed that $J:\DSP\longrightarrow H^{s-1}(\Omega_0,\mathbb{R})$ is a submersion on $\DSM$. Hence, it follows immediately that
\[
\DSM = J^{-1}(1)
\]
is a submanifold of $\DSP$ (see \cite{lee2013smooth}). Moreover, the tangent space to $\DSM$ is the kernel of the tangent map $T_{\pl}J : T_{\pl}\DSP\to H^{s-1}(\Omega_0,\mathbb{R})$, that gives immediately \eqref{tangent space}. 
\end{proof}

The inertial motion considered in equation \eqref{Lagrangian Euler equation} can be equivalently described as the geodesic flow on the submanifold $\DSM\subset \DSP$ of volume-preserving
deformations, endowed with the natural kinetic energy (Riemannian) metric. More precisely, the
kinetic energy functional \eqref{kinetic energy} induces a (weak) Riemannian structure on $\DSP$, with (weak) Riemannian metric
\begin{equation}
\label{metric}
\langle \bm{U}, \bm{W}\rangle_{\pl}=\int_{\Omega_0}\rho_0\;\bm{U}(\bm{X})\cdot \bm{W}(\bm{X})\,d\bm{X},
\end{equation}
where $\bm{U},\bm{W}\in T_{\pl} \DSP$. By definition, a geodesic on the submanifold $\DSM\subset \DSP$ of volume-preserving
deformations with respect to the metric defined in \eqref{metric} is a stationary curve $\pl: [0,T] \to \DSM$ for the functional
\[
\mathcal{F}[\pl] := \frac{1}{2}\int_0^T \langle \dot{\pl}(t), \dot{\pl}(t)\rangle_{\pl(t)}dt = \int_0^T\int_{\Omega_0}\frac{1}{2}\rho_0\rvert\dot{\pl}(\bm{X},t)\lvert^2d\bm{X}dt.
\]
Thus, finding a geodesic on $\DSM$ is equivalent to finding a curve in $\DSM$ that minimizes the kinetic energy functional under the incompressibility constraint, i.e., finding a solution for the equation of inertial motion \eqref{Lagrangian Euler equation}.

Having identified inertial motions with geodesics on $\DSM$, we now wish to discuss the local well-posedness of this geodesic flow, in the spirit of \cite{ebin1970groups}. In doing so, we are inspired by the works of Ebin~\cite{ebin2015groups} and Inci~\cite{inci2015lagrangian,inci2015regularity} on the Lagrangian formulation of the Euler equation.

Following the construction of \cite{ebin2015groups}, and recalling the arguments of Proposition \ref{ortho}, we denote by $P$ and $Q$ the spatial projectors onto the orthogonal subspaces of divergence-free vector spaces and gradients of scalar fields that vanish on the boundary, respectively. Therefore, given a vector field $\bm{u}=\bm{v}+\nabla_{\bm{x}}p$ on $\Omega_t$, we have that $P(\bm{u}) = \bm{v}$, with $\operatorname{div}_{\bm{x}}\bm{v}=0$, and $Q(\bm{u}) = \nabla_{\bm{x}}p$, with $p=0$ on $\partial\Omega_t$. Recalling that
\begin{equation*}
\label{material div}
\operatorname{div}_{\bm{x}}\bm{v}=\operatorname{tr}\left(\F^{-1}\nabla_{\bm{X}}\mathbf{V}\right) = \F^{-\top} \colon \nabla_{\bm{X}}\bm{V},
\end{equation*}
where $\bm{V}=\bm{v}\,\circ\,\pl$ with $\pl\in\DSM$, we can define the Lagrangian projectors $\widehat{P}$ and $\widehat{Q}$ as follows. Given a vector field $\bm{U}$ on $\Omega_0$, $\widehat{P}(\bm{U}) := \bm{V}$, with $\F^{-\top} \colon \nabla_{\bm{X}}\bm{V}=0$, and $\widehat{Q}(\bm{U}) := \F^{-\top}\nabla_{\bm{X}}\hat{p}$, with $\hat{p}=0$ on $\partial\Omega_0$. Since
\begin{equation*}
\begin{split}
\ddot{\pl} &= \dot{\bm{V}}(\bm{X},t) = \frac{\partial\bm{v}}{\partial t}(\pl(\bm{X},t),t)+\nabla_{\bm{x}}\bm{v}(\pl(\bm{X},t),t)\dot{\pl}(\bm{X},t)\\
&=\frac{\partial\bm{v}}{\partial t}(\pl(\bm{X},t),t)+\dot{\F}\F^{-1}\dot{\pl}(\bm{X},t) = \frac{\partial\bm{v}}{\partial t}(\pl(\bm{X},t),t)+(\nabla_{\bm{X}}\bm{V}(\bm{X},t))\F^{-1}\bm{V}(\bm{X},t),
\end{split}
\end{equation*}
equation \eqref{Lagrangian Euler equation} can also be viewed as
\begin{equation*}
\widehat{P}\left(\rho_0\ddot{\pl} \right) = \rho_0\widehat{P}\left(\frac{\partial\bm{v}}{\partial t}\circ\pl+(\nabla_{\bm{X}}\bm{V})\F^{-1}\dot{\pl}\right) = -\widehat{P}\left(\F^{-\top}\nabla_{\bm{X}}\hat{p}\right)=0.
\end{equation*}
Moreover, since $\bm{v}$, and therefore $\frac{\partial\bm{v}}{\partial t}$, are divergence-free, $\widehat{Q}\left(\frac{\partial\bm{v}}{\partial t}\circ\pl\right) =0$. Thus we get
\begin{equation*}
\frac{\partial\bm{v}}{\partial t}\circ\pl + \widehat{P}\left((\nabla_{\bm{X}}\bm{V})\F^{-1}\dot{\pl}\right) = 0,
\end{equation*}
or equivalently
\begin{equation}
\label{eq projector Q}
\frac{\partial\bm{v}}{\partial t}\circ\pl +\left(\nabla_{\bm{X}}\bm{V}\right)\F^{-1}\dot{\pl} = \widehat{Q}\left((\nabla_{\bm{X}}\bm{V})\F^{-1}\dot{\pl}\right).
\end{equation}
We now want to show that equation \eqref{eq projector Q} can be construed as a second-order ODE on $\DSM$, in order to exploit the fundamental theory of ODEs to prove its well-posedness.
\begin{proposition}
\label{prop:ODE}
Let $\Omega_0 \subset \mathbb{R}^3$ be a bounded connected domain with Lipschitz boundary and let $s \geq 3$ be an integer. The equation of motion \eqref{Lagrangian Euler equation} for an incompressible homogeneous body is a second-order ODE on $\DSM$ of the form $\ddot{\pl}= Z(\pl,\dot{\pl})$.
\end{proposition}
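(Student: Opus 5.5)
We plan to follow the strategy of Ebin and Marsden: express $\ddot{\pl}$ through the Lagrangian projector $\widehat{Q}$. This identifies a map $Z:T\DSM\to H^s(\Omega_0,\mathbb{R}^3)$ with $Z(\pl,\bm{V})\in T_{\pl}\DSP$. We then check that $Z$ is well defined, that it takes values in $H^s$, and that it is compatible with the constraint.

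\textbf{Step 1: solve for the pressure.} Take $\pl\in\DSM$ and $\bm{V}\in T_{\pl}\DSM$, so $\bm{V}=\bm{v}\circ\pl$ with $\operatorname{div}_{\bm{x}}\bm{v}=0$. Differentiate the constraint $\F^{-\top}:\nabla_{\bm{X}}\dot{\pl}=0$ in time, using $\dot{(\F^{-1})}=-\F^{-1}\nabla_{\bm{X}}\bm{V}\F^{-1}$. This gives
\[
\F^{-\top}:\nabla_{\bm{X}}\ddot{\pl}=\operatorname{tr}\big((\nabla_{\bm{X}}\bm{V}\,\F^{-1})^2\big),
\]
which in Eulerian form reads $\operatorname{div}_{\bm{x}}(\ddot{\pl}\circ\pl^{-1})=\operatorname{tr}((\nabla_{\bm{x}}\bm{v})^2)$. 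Substituting \eqref{Lagrangian Euler equation} yields the Dirichlet problem
\[
-\Delta_{\bm{x}} p=\rho_0\operatorname{tr}\big((\nabla_{\bm{x}}\bm{v})^2\big)\ \text{in }\Omega,\qquad p=0\ \text{on }\partial\Omega .
\]
We then define
\[
Z(\pl,\bm{V}):=-\rho_0^{-1}\F^{-\top}\nabla_{\bm{X}}\hat p,\qquad \hat p=p\circ\pl.
\]
Equivalently, $Z(\pl,\bm{V})=-\widehat{Q}\big((\nabla_{\bm{X}}\bm{V})\F^{-1}\bm{V}\big)$, consistent with \eqref{eq projector Q}. By construction, a curve solves \eqref{Lagrangian Euler equation} if and only if it solves $\ddot{\pl}=Z(\pl,\dot{\pl})$. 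The constraint $J=1$ is preserved, because $Z$ is exactly the normal correction that keeps $\frac{d^2}{dt^2}J=0$.

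\textbf{Step 2: regularity.} Since $\bm{v}\in H^s(\Omega)$ by Proposition~\ref{prop:regularity velocity}, the right-hand side $\operatorname{tr}((\nabla\bm{v})^2)$ lies in $H^{s-1}(\Omega)$ because $H^{s-1}$ is an algebra. Elliptic regularity then gives $p\in H^{s+1}(\Omega)$, hence $\nabla p\in H^s(\Omega)$. Composing back to $\Omega_0$ via Remark~\ref{composition diffeo} gives $Z(\pl,\bm{V})\in H^s(\Omega_0,\mathbb{R}^3)$. The key point is that $Z$ loses no derivatives, even though it involves $\nabla\bm{V}$; this is what makes the equation an ODE rather than a PDE.

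\textbf{Main obstacle.} The difficulty is elliptic regularity up to the boundary on the moving domain $\Omega=\pl(\Omega_0)$, which is only as smooth as $\pl\in H^s$. We would handle it by pulling the Dirichlet problem back to $\Omega_0$. There it becomes the divergence-form operator $\operatorname{div}_{\bm{X}}(\F^{-1}\F^{-\top}\nabla_{\bm{X}}\hat p)$, using $J=1$, with coefficients in $H^{s-1}\subset C^0$. We would then invoke regularity for operators with Sobolev coefficients on $\Omega_0$.

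Two caveats remain. First, the Lipschitz assumption on $\partial\Omega_0$ is likely insufficient for $H^{s+1}$ regularity, so a smoother boundary must be assumed at this step. Second, the metric is only weak, so smoothness of $Z$ should be obtained from the explicit formula rather than from Riemannian geometry.
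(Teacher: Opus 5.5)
Your route is the same as the paper's. Both reduce $\ddot{\pl}$ to the Dirichlet problem for the pressure with source $\operatorname{tr}\big((\nabla_{\bm x}\bm v)^2\big)\in H^{s-1}(\Omega)$, and both then rely on elliptic regularity to gain one derivative. You get the source by differentiating the constraint in time, while the paper computes $\F^{-\top}:\nabla_{\bm X}\big((\nabla_{\bm X}\bm V)\F^{-1}\bm V\big)$; this is the same identity. One small slip: with your normalization of $p$, one has $Z(\pl,\bm V)=+\widehat{Q}\big((\nabla_{\bm X}\bm V)\F^{-1}\bm V\big)$, as in \eqref{eq projector Q}, not $-\widehat Q$.

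The genuine gap is Step 2, the claim $p\in H^{s+1}(\Omega)$. Assuming $\partial\Omega_0$ smooth does not repair it. The boundary of $\Omega=\pl(\Omega_0)$ is parametrized by $\pl|_{\partial\Omega_0}\in H^{s-1/2}(\partial\Omega_0)$, so its unit normal $\bm n$ is in general only of class $H^{s-3/2}$. Correspondingly, your pulled-back operator $\operatorname{div}_{\bm X}(\F^{-1}\F^{-\top}\nabla_{\bm X}\,\cdot\,)$ has $H^{s-1}$ coefficients. Elliptic theory then yields only $\hat p\in H^{s}$, which bounds $Z$ only in $H^{s-1}$.

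This loss is real, not an artifact of the method. Take $\bm v=(x_1,-x_2,0)$, so that $\bm v\circ\pl\in T_{\pl}\DSM$ and $-\Delta_{\bm x}p=2\rho_0$ in $\Omega$, $p=0$ on $\partial\Omega$. By Hopf's lemma, $\nabla_{\bm x}p=-|\nabla_{\bm x}p|\,\bm n$ on $\partial\Omega$, with $|\nabla_{\bm x}p|$ bounded below. Suppose $\nabla_{\bm x}p\in H^s(\Omega)$. Then $(\nabla_{\bm x}p)\circ\pl\in H^s(\Omega_0)$ has trace in $H^{s-1/2}(\partial\Omega_0)$, and hence so does $\bm n\circ\pl$. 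That forces $\partial\Omega$ to be an $H^{s+1/2}$ surface, one order better than a general $\pl\in\DSM$ provides.

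So $Z$ maps $T\DSM$ only into $H^{s-1}$, not into $T_{\pl}\DSP\cong H^s$, and Cauchy--Lipschitz cannot be applied. This is the well-known derivative loss of the free-boundary Euler problem. Its local theory rests on energy estimates under the Taylor sign condition, and Ebin showed ill-posedness when that condition fails.

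The paper's proof passes over the same point. It treats $\nabla_{\bm X}\widetilde{\Delta}_{\bm X}^{-1}$ as a pseudodifferential operator of order $-1$, which is valid in the interior but not up to a moving boundary of class $H^{s-1/2}$ with Dirichlet data. A genuine $H^s$ ODE is available in the fixed-domain Ebin--Marsden setting. In the present free-boundary setting, your argument (and the paper's) can at most give $Z\colon T\DSM\to H^{s-1}$.
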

\begin{proof}
We recall the decomposition
\begin{equation*}
\bm{U}(\bm{X})= \bm{V}(\bm{X})+\F^{-\top}\nabla_{\bm{X}}\hat{p}(\bm{X}),
\end{equation*}
with $\F^{-\top}\colon \nabla_{\bm{X}}\bm{V}(\bm{X})=0$ in $\Omega_0$ and $\hat{p}(\bm{X})=0$ on $\partial\Omega_0$. If we differentiate both sides of the equation above and we multiply them by $\F^{-\top}$, we get
\begin{equation}
\begin{aligned}
\label{lagrangian laplacian pressure}
&\F^{-\top}\colon \nabla_{\bm{X}}\bm{U}(\bm{X}) = \F^{-\top}\colon \nabla_{\bm{X}}\left(\F^{-\top}\nabla_{\bm{X}}\hat{p}(\bm{X})\right)\\
&= \operatorname{div}_{\bm{x}}(\nabla_{\bm{x}}p(\bm{x}))\rvert_{\bm{x}=\pl(\bm{X})} =\Delta_{\bm{x}}p(\bm{x})\rvert_{\bm{x}=\pl(\bm{X})} =: \widetilde{\Delta}_{\bm{X}}\hat{p}(\bm{X}),
\end{aligned}
\end{equation}
with $p=\hat{p}\circ\pl^{-1}$.
Notice that $\widetilde{\Delta}_{\bm{X}}$ is invertible, since $\widetilde{\Delta}_{\bm{X}}^{-1}=\pl(\cdot,t)\circ \Delta_{\bm{x}}^{-1} \circ \pl^{-1}(\cdot,t)$.
Hence, as for a spatial vector field $\bm{w}$ we can write $Q(\bm{w}) = \nabla_{\bm{x}}\Delta_{\bm{x}}^{-1}\operatorname{div}\bm{w}$, for a material vector field $\bm{W}$ we can invert the operator $\widetilde{\Delta}_{\bm{X}}$ in equation \eqref{lagrangian laplacian pressure} and write
\begin{equation}
\label{eq Q Lagr}
\widehat{Q}(\bm{W}) = \F^{-\top}\nabla_{\bm{X}}\widetilde{\Delta}_{\bm{X}}^{-1}\left(\F^{-\top}\colon\nabla_{\bm{X}}\bm{W}\right).
\end{equation}
By choosing $\bm{W}=(\nabla_{\bm{X}}\bm{V})\F^{-1}\dot{\pl}$ in \eqref{eq Q Lagr}, equation \eqref{eq projector Q}, which is a projection of \eqref{Lagrangian Euler equation}, becomes
\begin{equation}
\label{ODE Euler}
\ddot{\pl}(\bm{X},t) = \F^{-\top}\nabla_{\bm{X}}\widetilde{\Delta}_{\bm{X}}^{-1}\left(\F^{-\top}\colon\nabla_{\bm{X}}\left((\nabla_{\bm{X}}\bm{V})\F^{-1}\dot{\pl}(\bm{X},t)\right)\right).
\end{equation}
Moreover, one has
\begin{equation}
\label{order 1 operator}
\begin{split}
&\F^{-\top}\colon\nabla_{\bm{X}}\left((\nabla_{\bm{X}}\bm{V})\F^{-1}\dot{\pl}\right)
=(\F^{-\top})_{ij}\frac{\partial}{\partial X_j}\left[\left((\nabla_{\bm{X}}\bm{V})\F^{-1}\right)_{ik}V_k\right]\\
&=\frac{\partial X_j}{\partial x_i}\frac{\partial}{\partial X_j}\left((\nabla_{\bm{X}}\bm{V})\F^{-1}\right)_{ik}V_k + \frac{\partial X_j}{\partial x_i}\left((\nabla_{\bm{X}}\bm{V})\F^{-1}\right)_{ik}\frac{\partial V_k}{\partial X_j}.
\end{split}
\end{equation}
The first term in equation \eqref{order 1 operator} vanishes due to the incompressibility constraint, because
\begin{equation*}
\label{vanishing part}
\begin{split}
\frac{\partial X_j}{\partial x_i}\frac{\partial}{\partial X_j}\left((\nabla_{\bm{X}}\bm{V})\F^{-1}\right)_{ik} &= [\operatorname{div}_{\bm{x}}(\nabla_{\bm{x}}\bm{v}(\bm{x})^\top)]_k\rvert_{\bm{x}=\pl(\bm{X})}\\
&=[\nabla_{\bm{x}}\operatorname{div}_{\bm{x}}\bm{v}(\bm{x})]_k\rvert_{\bm{x}=\pl(\bm{X})} = 0.
\end{split}
\end{equation*}
This leaves us with
\begin{equation*}
    \F^{-\top}\colon\nabla_{\bm{X}}\left((\nabla_{\bm{X}}\bm{V})\F^{-1}\dot{\pl}\right) =\frac{\partial X_j}{\partial x_i}\left((\nabla_{\bm{X}}\bm{V})\F^{-1}\right)_{ik}\frac{\partial V_k}{\partial X_j}= \frac{\partial X_j}{\partial x_i}\frac{\partial V_i}{\partial X_\ell}\frac{\partial X_\ell}{\partial x_k}\frac{\partial V_k}{\partial X_j}.
\end{equation*}
The above expression is smooth in $(\pl,\bm{V})$ as a function from $H^s$ to $H^{s-1}$ since it involves first derivatives of $\bm{V}$ and since $\F^{-1}$, $\F^{-\top}$ depend smoothly on $\pl$. Thus, the operator
\begin{equation}
\label{operator Z}
Z(\pl(\cdot,t),\dot{\pl}(\cdot,t)) = \F^{-\top}\nabla_{\bm{X}}\widetilde{\Delta}_{\bm{X}}^{-1}\left(\F^{-\top}\colon\nabla_{\bm{X}}\left((\nabla_{\bm{X}}\bm{V})\F^{-1}\dot{\pl}(\bm{X},t)\right)\right)
\end{equation}
is smooth from $H^s$ to $H^s$, since $\widetilde\Delta_{\bm{X}}^{-1}$ is obtained by
conjugating the inverse spatial Laplacian by $\pl$ --- and therefore depends
smoothly on $\pl$ --- and since $\nabla_{\bm{X}}\widetilde{\Delta}_{\bm{X}}^{-1}$ is a pseudodifferential operator of order minus one (see for example \cite{Taylor81}). Therefore, \eqref{ODE Euler} is a smooth ODE on $\DSM$.
\end{proof}
\begin{theorem}
\label{thm:well posedness}
Let $\Omega_0 \subset \mathbb{R}^3$ be a bounded connected domain with Lipschitz boundary and let $s \geq 3$ be an integer. Then the Cauchy problem associated to the inertial motion \eqref{Lagrangian Euler equation} for an incompressible homogeneous continuum
\begin{equation}
\label{Cauchy problem1}
\begin{cases}
\ddot{\pl}(\cdot,t)= Z(\pl(\cdot,t),\dot{\pl}(\cdot,t))\\
\pl(\cdot,0) = \pl_0(\cdot)\\
\dot{\pl}(\cdot,0)=\bm{V}_0(\cdot),
\end{cases}
\end{equation}
with $Z$ defined as in \eqref{operator Z}, is locally well-posed for initial data $(\pl_0,\bm{V}_0)\in T\DSM$.
\end{theorem}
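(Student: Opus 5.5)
The plan is to reduce the statement to the Picard--Lindelöf (Cauchy--Lipschitz) theorem for smooth vector fields on Hilbert manifolds. By Theorem \ref{orientation-preserving open}, $\DSP$ is an open subset of $H^s(\Omega_0,\mathbb R^3)$. Hence its tangent bundle is trivial: $T\DSP\simeq \DSP\times H^s(\Omega_0,\mathbb R^3)$, which is an open subset of the Hilbert space $H^s\times H^s$.

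I will rewrite \eqref{Cauchy problem1} as the first-order system
\[
\frac{d}{dt}(\pl,\bm V)=\mathcal Z(\pl,\bm V):=(\bm V,\,Z(\pl,\bm V)).
\]
By Proposition \ref{prop:ODE}, $Z$ is smooth from $H^s\times H^s$ to $H^s$ on the relevant domain, so $\mathcal Z$ is a smooth, and in particular locally Lipschitz, vector field. The standard existence-uniqueness theorem for ODEs in Banach spaces (see \cite{lang2002introduction}) then gives the following. For every initial datum $(\pl_0,\bm V_0)$ there are $T>0$ and a unique $C^1$ (indeed smooth) curve $t\mapsto(\pl(t),\bm V(t))$ in $T\DSP$ on $[-T,T]$ solving the system. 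The flow depends smoothly, hence continuously, on the initial data, and $T$ is uniform on a neighbourhood of the datum. This yields existence, uniqueness and continuous dependence, i.e.\ local well-posedness in $T\DSP$.

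The remaining and more delicate step is to show that the solution stays in $T\DSM$ when the datum is in $T\DSM$. In other words, I must check that $\mathcal Z$ is tangent to $T\DSM$, which is a submanifold by the previous theorem. Concretely, I will set $j(t):=J(\pl(t))-1$ and show that $j\equiv 0$. By Lemma \ref{lem:submanifold}, $\dot j=J(\pl)(\operatorname{div}_{\bm x}\bm v)\circ\pl$. Differentiating once more gives an expression involving $\F^{-\top}:\nabla_{\bm X}\ddot{\pl}=\F^{-\top}:\nabla_{\bm X}Z$ together with quadratic terms in $\nabla\bm V$.

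By construction \eqref{lagrangian laplacian pressure}, $\F^{-\top}:\nabla_{\bm X}Z$ equals $\F^{-\top}:\nabla_{\bm X}\big((\nabla_{\bm X}\bm V)\F^{-1}\bm V\big)$. The computation in \eqref{order 1 operator} then shows that the quadratic terms cancel exactly when $\operatorname{div}_{\bm x}\bm v=0$ and $J=1$. So $(j,\dot j)$ satisfies a linear homogeneous ODE in $H^{s-1}\times H^{s-2}$ whose coefficients are bounded along the solution, with zero initial data since $(\pl_0,\bm V_0)\in T\DSM$. A Gronwall argument then gives $j\equiv0$ and $\operatorname{div}\bm v\equiv 0$.

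I expect this invariance step to be the main obstacle. The Laplacian $\widetilde\Delta_{\bm X}^{-1}$ is defined on the moving domain $\Omega_t=\pl(\Omega_0,t)$, so the derivation of $Z$ used incompressibility in an essential way, and off $T\DSM$ the formula for $Z$ must be extended. I would first try to avoid this by working intrinsically. I would argue that $Z(\pl,\bm V)$ is precisely the second fundamental form term, i.e.\ the $\widehat Q$-projection, so that $(\bm V,Z)$ is the geodesic spray of the induced metric \eqref{metric} restricted to the submanifold $\DSM$. I would then apply the ODE theorem directly on the Hilbert manifold $T\DSM$, using its charts from the submersion Lemma \ref{lem:submanifold}; a smooth spray on a Hilbert manifold has a local smooth flow, and invariance is then automatic. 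If smoothness of $Z$ in the chart is in doubt, I would fall back on the extension-plus-Gronwall argument described above.
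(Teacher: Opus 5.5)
Your intrinsic route is the paper's proof. The paper reads Proposition~\ref{prop:ODE} as giving a smooth second-order ODE on $\DSM$ and applies Cauchy--Lipschitz, without ever checking that $(\bm{V},Z)$ is tangent to $T\DSM$. You are right to flag invariance, and identifying $Z$ with the $\widehat{Q}$-part of the acceleration of curves in $\DSM$ is the right way to settle it.

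Your fallback extension-plus-Gronwall argument, however, fails as written.

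\emph{The flaw.} Set $\delta:=\F^{-\top}:\nabla_{\bm{X}}\bm{V}=(\operatorname{div}_{\bm{x}}\bm{v})\circ\pl$. Then $\dot{J}=J\delta$, and along any curve in $\DSP$ one has $\dot{\delta}=-\operatorname{tr}\big((\F^{-1}\nabla_{\bm{X}}\bm{V})^2\big)+\F^{-\top}:\nabla_{\bm{X}}\ddot{\pl}$. Suppose, as you propose, that $\F^{-\top}:\nabla_{\bm{X}}Z=\F^{-\top}:\nabla_{\bm{X}}\big((\nabla_{\bm{X}}\bm{V})\F^{-1}\bm{V}\big)$ also off $T\DSM$. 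Then $\dot{\delta}=\bm{V}\cdot\F^{-\top}\nabla_{\bm{X}}\delta$; in Eulerian form this is $\partial_t\operatorname{div}_{\bm{x}}\bm{v}=0$ on the moving domain $\Omega_t$. The residual does vanish when $\delta=0$, but it is a first-order differential operator acting on $\delta$, not a bounded coefficient. So no Gronwall estimate closes in $H^{s-1}\times H^{s-2}$.

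\emph{The extension itself.} With that choice, \eqref{operator Z} is not even $H^s$-valued off $T\DSM$. The argument of $\widetilde{\Delta}_{\bm{X}}^{-1}$ then contains $(\bm{v}\cdot\nabla_{\bm{x}}\operatorname{div}_{\bm{x}}\bm{v})\circ\pl$, which lies only in $H^{s-2}$. The smoothness in Proposition~\ref{prop:ODE} holds only for the reduced expression obtained after \eqref{order 1 operator}.

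\emph{The fix.} Extend instead by
$Z:=\F^{-\top}\nabla_{\bm{X}}\widetilde{\Delta}_{\bm{X}}^{-1}\operatorname{tr}\big((\nabla_{\bm{X}}\bm{V}\,\F^{-1})^2\big)$.
Then $\F^{-\top}:\nabla_{\bm{X}}Z=\operatorname{tr}\big((\nabla_{\bm{X}}\bm{V}\,\F^{-1})^2\big)$ identically on $T\DSP$. Hence $\dot{\delta}=0$ exactly, so $\delta\equiv\delta(0)=0$ and $\dot{J}=J\delta=0$. This gives $J\equiv 1$ with no Gronwall argument at all.

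On $T\DSM$ the reduced and literal formulas agree. There the same identity is precisely the tangency condition $D^2J(\pl)[\bm{V},\bm{V}]+T_{\pl}J(Z)=0$, obtained by differentiating the constraints $J(\pl)=1$ and $T_{\pl}J(\bm{V})=0$. Combined with uniqueness and the closedness of $T\DSM$ in $T\DSP$, this is the concrete content of your intrinsic argument.
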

\begin{proof}
The result follows directly from Proposition \ref{prop:ODE}. In fact, since \eqref{Lagrangian Euler equation} is a smooth ODE on $\DSM$, the classical Cauchy--Lipschitz theorem for ODEs (see for example \cite{coddington1955theory}, \cite{Robbin1968}) ensures that, given the initial conditions
\[
\pl(\cdot,0) = \pl_0(\cdot) \in \DSM\quad\text{and}\quad\dot{\pl}(\cdot,0)=\bm{V}_0 \in T_{\pl_0}\DSM,
\]
there exists a unique maximal solution
\[
\pl \colon (-T_{-},T_{+}) \to \DSM,
\]
with $T_{-}$ and $T_{+}$ positive or infinite. Moreover, the solution depends smoothly on the initial data.
\end{proof}
Although the equation of inertial motion coincides, as underlined in Remark \ref{rmk:inertial motion}, with the physical equation of motion only for a very specific class of materials, incompressible
perfect fluids, its geometric significance is much broader. Indeed, the inertial equation defines the geodesic flow associated with the kinetic energy metric on the manifold
of admissible configurations, independently of whether the resulting evolution
represents the actual motion of a given material. More precisely, let $\mathcal{M}$ denote either the manifold $\DSP$ of admissible
deformations or its volume-preserving submanifold $\DSM$, and let
$\pl_0\in\mathcal{M}$. For every initial velocity
$\bm{V}_0\in T_{\pl_0}\mathcal{M}$, the well-posedness of the geodesic equation established in Theorem \ref{thm:well posedness} provides a unique maximal solution $\pl(t)$, with $\pl(0)=\pl_0$, and $\dot\pl(0)=\bm{V}_0$. This naturally gives rise to the exponential map
\[
\exp_{\pl_0}(\bm{V}_0)=\pl(1),
\]
whenever the solution exists up to time $t=1$. The exponential map suggests a natural parametrization of a neighborhood of the initial configuration by means of its tangent space. 

In this sense, inertial motions
may provide an intrinsic geometric description of the manifold of admissible
configurations. The domain of validity of this parametrization is
determined by the maximal interval of existence of the corresponding geodesic. In other words, the maximal interval of existence of the solution of the Cauchy problem \eqref{Cauchy problem1} determines the portion of configuration manifold that can be parametrized by following the geodesic starting from the initial deformation $\pl_0$. From this viewpoint, the equation of inertial motion is a fundamental geometric tool for exploring the
local structure of the configuration manifold for a generic continuum.

\section{Examples}\label{sec:examples}

In this section, we wish to highlight the difference between compressible and incompressible geodesic motions by means of a few examples.
Particular attention is devoted to the maximal interval on which these curves remain in the corresponding configuration manifold and to the mechanisms through which admissibility may be lost.
In the first example, an open ball is compressed to a single point, meaning that the geodesic reaches the boundary of the admissible configuration manifold, in finite time.
In the second and third examples we show that geodesic flows with identical initial conditions can exist for all times in the incompressible case, while exiting the manifold in finite time in the compressible case.

\begin{example}
\label{ex:isotropic contraction}
Let the material manifold $\Omega_0\subset\mathbb{R}^3$ be the open ball of radius $R>0$ centered at the origin. Consider the manifold $\DSP$ of orientation-preserving deformations, endowed with the kinetic energy metric \eqref{metric}.
In the compressible case, where no volume-preserving constraint is imposed, the geodesic equation reduces to
\[
\ddot{\pl}(\bm{X},t)=0
\]
for every $\bm{X}\in\Omega_0$. Let the initial configuration be the identity, namely $\pl_0(\bm{X})=\bm{X}$, and choose the initial Lagrangian velocity field as
\[
\bm{V}_0(\bm{X})=-\bm{X},
\]
so that every material point starts moving towards the origin, with a velocity whose magnitude is proportional to its distance from the origin.
The unique solution of the geodesic equation is therefore
\begin{equation}
\label{solution compressible example}
\pl(\bm{X},t)=\pl_0(\bm{X})+t\bm{V}_0(X)=(1-t)\bm{X}.
\end{equation}
The motion consists of a homogeneous isotropic contraction of the body. Indeed, every material point moves along the straight segment joining its initial
position to the origin, with a constant velocity equal to its initial velocity. The motion is purely inertial, since every particle has vanishing acceleration. The deformation gradient is
\[
\F(\bm{X},t)=\nabla_{\bm{X}}\pl(\bm{X},t)=(1-t)\mathsf{I},
\]
so that
\[
J(\bm{X},t)=\det \F(\bm{X},t)=(1-t)^3.
\]
Hence, for every $t<1$, one has $J(\bm{X},t)>0$ and therefore $\pl(\cdot,t)\in\DSP$. At the critical time $t=1$, one has $\pl(\bm{X},1)=\bm{0}$ for every $\bm{X}\in\Omega_0$, that is, every material point occupies the same spatial position, namely the origin. Notice that the image of the body remains a ball throughout the evolution. More
precisely, $\Omega_t=\pl_t(\Omega_0)$ is the ball of radius $(1-t)R$ centered at the origin, that is, the radius decreases linearly until the whole body collapses into a single point at $t=1$. Consequently, $J(\bm{X},1)=0$, so the deformation is no longer locally invertible.
The placement map ceases to be injective, violating the condition of non-interpenetration of matter, and the configuration no longer belongs to the manifold $\DSP$ of admissible deformations. Therefore, although the geodesic equation itself admits the global solution \eqref{solution compressible example} for every $t\in\mathbb R$, the corresponding geodesic on $\mathcal D^s_+$ exists only for $t<1$.
The maximal existence interval is thus determined not by a breakdown of the
ordinary differential equation, but by the loss of admissibility of the deformation: the geodesic reaches
the boundary of the manifold of admissible configurations, where the Jacobian
determinant vanishes.
\end{example}

\begin{example}
\label{ex:incompressible}
For simplicity, we restrict our attention to two dimensions. Let the material manifold $\Omega_0\subset\mathbb{R}^2$ be the open unit disk
\[
\Omega_0 = \left\{
        \bm{X}=(X,Y)\in\mathbb{R}^2
        :
        X^2+Y^2<1
    \right\}.
\]
Consider the manifold $\DSM$ of volume-preserving deformations, endowed with the kinetic energy metric \eqref{metric}. The geodesic equation on this manifold is given by \eqref{Lagrangian Euler equation} with boundary condition \eqref{boundary cond}. Let the initial configuration be the identity, namely
\begin{equation}
\label{initial deformation}
    \pl_0(\bm{X})=\bm{X},
\end{equation}
and choose the initial Lagrangian velocity field as
\begin{equation}
\label{initial velocity}
\bm{V}_0(\bm{X})=\begin{pmatrix}
X\\
-Y
\end{pmatrix},
\end{equation}
so that, at the initial time, the body is stretched in the horizontal
direction and compressed in the vertical direction.
The unique solution of the geodesic equation with these initial conditions is therefore given by
\begin{equation}
\label{eq:extensional-placement}
    \bm{\varphi}(\bm{X},t)
    =
    \begin{pmatrix}
        e^{\varepsilon(t)}X\\
        e^{-\varepsilon(t)}Y
    \end{pmatrix},
\end{equation}
where the scalar function $\varepsilon\colon\mathbb{R}\to\mathbb{R}$ satisfies
\begin{equation}
\label{eq:coefficient-equality}
    \left(\ddot{\varepsilon}(t)
          +\dot{\varepsilon}^2(t)\right)
    e^{2\varepsilon(t)}
    =
    \left(\dot{\varepsilon}^2(t)
          -\ddot{\varepsilon}(t)\right)
    e^{-2\varepsilon(t)},
\end{equation}
with $\varepsilon(0)=0$ and $\dot{\varepsilon}(0)=1$. In fact, \eqref{eq:extensional-placement} clearly satisfies the initial conditions \eqref{initial deformation} and \eqref{initial velocity}. Moreover, the inertial motion equation \eqref{Lagrangian Euler equation} equivalently reads
\begin{equation}
\label{eq:material-pressure-gradient}
    \nabla_{\bm{X}}\hat{p}
    =
    -\rho_0\F^\top\ddot{\bm{\varphi}}.
\end{equation}
Differentiating \eqref{eq:extensional-placement} with respect to time,
we obtain
\[
    \dot{\bm{\varphi}}(\bm{X},t)
    =
    \begin{pmatrix}
        \dot{\varepsilon}(t)e^{\varepsilon(t)}X\\
        -\dot{\varepsilon}(t)e^{-\varepsilon(t)}Y
    \end{pmatrix}.
\]
A second differentiation gives
\begin{equation}
\label{eq:extensional-acceleration}
    \ddot{\bm{\varphi}}(\bm{X},t)
    =
    \begin{pmatrix}
        \left(\ddot{\varepsilon}(t)
              +\dot{\varepsilon}^2(t)\right)
        e^{\varepsilon(t)}X\\
        \left(\dot{\varepsilon}^2(t)
              -\ddot{\varepsilon}(t)\right)
        e^{-\varepsilon(t)}Y
    \end{pmatrix}.
\end{equation}
The deformation gradient associated with \eqref{eq:extensional-placement} is
\begin{equation*}
\label{eq:extensional-F}
    \F(\bm{X},t) =
    \begin{pmatrix}
        e^{\varepsilon(t)} & 0\\
        0 & e^{-\varepsilon(t)}
    \end{pmatrix}.
\end{equation*}
It is independent of the material point $\bm{X}$ and satisfies
\[
J(\bm{X},t) = \det\F(\bm{X},t) = e^{\varepsilon(t)}e^{-\varepsilon(t)} = 1.
\]
Hence the motion is incompressible for every time for which
$\varepsilon(t)$ is defined. The current configuration is
\begin{equation*}
\label{eq:current-ellipse}
    \Omega_t
    =
    \bm{\varphi}_t(\Omega_0)
    =
    \left\{
        (x,y)\in\mathbb{R}^2
        :
        e^{-2\varepsilon(t)}x^2
        +
        e^{2\varepsilon(t)}y^2
        <1
    \right\}.
\end{equation*}
Thus the disk is transformed into an ellipse whose semi-axes are $R_x(t)=e^{\varepsilon(t)}$ and $R_y(t)=e^{-\varepsilon(t)}$. Their product is equal to one, consistently with the preservation of area. Since $\F^\top=\F$, from
\eqref{eq:material-pressure-gradient} and
\eqref{eq:extensional-acceleration} it follows that
\begin{equation}
\label{eq:pressure-gradient-example}
    \nabla_{\bm{X}}\hat{p}
    =
    -\rho_0
    \begin{pmatrix}
        \left(\ddot{\varepsilon}(t) +\dot{\varepsilon}^2(t)\right)
        e^{2\varepsilon(t)}X\\
        \left(\dot{\varepsilon}^2(t)
              -\ddot{\varepsilon}(t)\right)
        e^{-2\varepsilon(t)}Y
    \end{pmatrix}.
\end{equation}
Integrating \eqref{eq:pressure-gradient-example} with respect to the
material variables $X$ and $Y$, we find
\begin{equation}
\label{eq:material-pressure-general}
\hat{p}(\bm{X},t) =-\frac{\rho_0}{2}
    \bigl(\ddot{\varepsilon} +\dot{\varepsilon}^2\bigr)
    e^{2\varepsilon}X^2 -\frac{\rho_0}{2}
    \bigl(\dot{\varepsilon}^2 -\ddot{\varepsilon}\bigr)
    e^{-2\varepsilon}Y^2 +C(t),
\end{equation}
where $C(t)$ is an arbitrary function of time. Here and in what follows, the time dependence of $\varepsilon$ is omitted in order to simplify the notation. The free-boundary pressure condition \eqref{boundary cond} on $\partial\Omega_0 = \left\{ (X,Y)\in\mathbb{R}^2 : X^2+Y^2=1\right\}$ is satisfied if and only if the coefficients of $X^2$ and
$Y^2$ in \eqref{eq:material-pressure-general} are equal and coincide with $-C(t)$, which is precisely condition \eqref{eq:coefficient-equality}. We now want to investigate the maximal interval of existence of the geodesic given by the extensional motion \eqref{eq:extensional-placement}, that is, the maximal interval of existence of the solution $\varepsilon$ of the second order ODE \eqref{eq:coefficient-equality} with initial conditions $\varepsilon(0)=0$ and $\dot{\varepsilon}(0)=1$. Note that \eqref{eq:coefficient-equality} can be written as
\begin{equation*}
\label{eq:epsilon-geodesic-ode}
    \ddot{\varepsilon}
    =
    \dot{\varepsilon}^{\,2}
    \frac{e^{-4\varepsilon}-1}
         {e^{-4\varepsilon}+1}
\end{equation*}
or, equivalently,
\begin{equation*}
\label{eq:epsilon-geodesic-tanh}
    \ddot{\varepsilon}
    =
    -\dot{\varepsilon}^{\,2}\tanh(2\varepsilon).
\end{equation*}
Introducing the new variable $v(t):=\dot{\varepsilon}(t)$, the Cauchy problem reads
\begin{equation}
\label{eq:epsilon-v-system}
    \begin{cases}
        \dot{\varepsilon}=v,\\
        \dot v=-v^2\tanh(2\varepsilon),\\
        \varepsilon(0)=0,\\
        v(0)=1.
    \end{cases}
\end{equation}
The vector field
\[
    \bm{G}(\varepsilon,v)
    :=
    \begin{pmatrix}
        v\\
        -v^2\tanh(2\varepsilon)
    \end{pmatrix}
\]
is smooth on $\mathbb{R}^2$. Therefore, the classical Cauchy--Lipschitz theorem ensures the existence of a unique maximal solution
\[
(\varepsilon,v)\colon (T_-,T_+)\longrightarrow\mathbb{R}^2,
\]
where $-\infty\leq T_-<0<T_+\leq+\infty$. We now prove that the solution is global. First of all, note that, along a solution of \eqref{eq:epsilon-v-system}, one has
\begin{align*}
    \frac{d}{dt}
    \left(
        v^2\cosh(2\varepsilon)
    \right)
    &=
    2v\dot v\cosh(2\varepsilon)
    +
    2v^2\sinh(2\varepsilon)\dot{\varepsilon}
    \\
    &=
    2v
    \left[
        -v^2\tanh(2\varepsilon)
    \right]
    \cosh(2\varepsilon)
    +
    2v^2\sinh(2\varepsilon)v
    \\
    &=
    -2v^3\sinh(2\varepsilon)
    +
    2v^3\sinh(2\varepsilon)=0.
\end{align*}
Consequently, the quantity $v(t)^2\cosh(2\varepsilon(t))$ is constant in time and hence, by the initial conditions,
\begin{equation}
\label{eq:first-integral-epsilon}
    v(t)^2\cosh(2\varepsilon(t))=1
    \qquad
    \text{for every }t\in(T_-,T_+).
\end{equation}
Since $\cosh(2\varepsilon)\geq 1$,
equation \eqref{eq:first-integral-epsilon} implies
\begin{equation}
\label{eq:v-bound}
    |v(t)|\leq 1
    \qquad
    \text{for every }t\in(T_-,T_+).
\end{equation}
Moreover, because $\dot{\varepsilon}=v$ and
$\varepsilon(0)=0$, the fundamental theorem of calculus gives
\[
    \varepsilon(t)
    =
    \int_0^t v(s)\,ds.
\]
Using \eqref{eq:v-bound}, we obtain
\begin{equation}
\label{eq:epsilon-linear-bound}
|\varepsilon(t)|\leq \int_0^t |v(s)|ds
\leq|t|.
\end{equation}
We can now apply the standard continuation criterion for ordinary
differential equations. Suppose, by contradiction, that
$T_+<+\infty$. Choose any $\bar t\in(0,T_+)$. For every $t\in(\bar t,T_+)$, estimates
\eqref{eq:v-bound} and \eqref{eq:epsilon-linear-bound} give
\[
    |\varepsilon(t)|\leq T_+,
    \qquad
    |v(t)|\leq 1.
\]
Therefore,
\[
    \bigl(\varepsilon(t),v(t)\bigr)
    \in
    [-T_+,T_+]\times[-1,1]
    \qquad
    \text{for every }t\in(\bar t,T_+).
\]
Hence the solution remains trapped in a compact
subset of the domain as $t$ approaches $T_+$. By the continuation
theorem (see for example \cite{Hartman64}), it can therefore be prolonged beyond $T_+$, contradicting the maximality of $(T_-,T_+)$. It follows that $T_+=+\infty$. The argument at the left endpoint is analogous.
Therefore, $T_-=-\infty$. We conclude that the solution of
\eqref{eq:epsilon-v-system} is uniquely defined for every
$t\in\mathbb{R}$. Degenerations may occur only asymptotically, at infinite
time, but for every finite time both semi-axes remain strictly
positive.

The numerical solution of the differential problem \eqref{eq:epsilon-v-system} shows that the disk is progressively compressed in one direction, while it is indefinitely stretched in the other, with a monotonically decreasing deformation rate, as shown in Figure~\ref{fig:unica}.
\end{example}

\begin{figure}
\includegraphics[width=\textwidth]{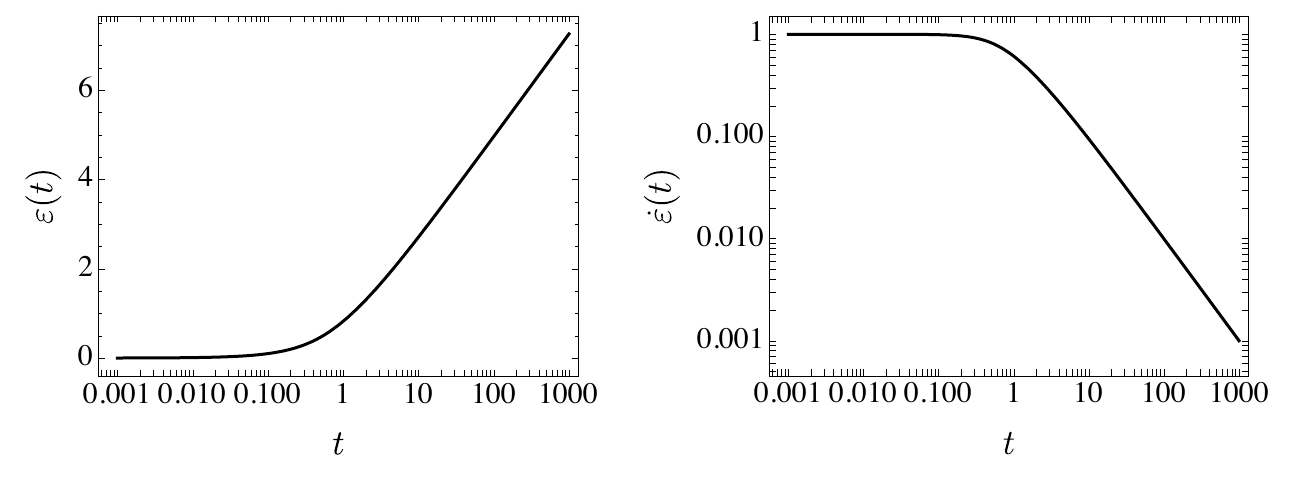}
\caption{Numerical solution of the differential problem \eqref{eq:epsilon-v-system}. The plot axes are chosen to highlight the asymptotic logarithmic growth of $\varepsilon(t)$ and the corresponding $1/t$ decay of $\dot{\varepsilon}(t)$.}\label{fig:unica}
\end{figure}

\begin{example}
Let now the material manifold, the initial deformation and the initial velocity be as in the previous example, and consider the manifold $\DSP$ of orientation-preserving deformations, endowed with the kinetic energy metric \eqref{metric}.
In the compressible case, where no volume-preserving constraint is imposed, the geodesic equation reduces to
\[
\ddot{\pl}(\bm{X},t)=0
\]
for every $\bm{X}\in\Omega_0$. The unique solution of the geodesic equation is therefore
\begin{equation}
\label{eq:compressible-affine-solution}
\begin{split}
    \bm{\varphi}(\bm{X},t)
    &=
    \bm{\varphi}_0(\bm{X})
    +
    t\bm{V}_0(\bm{X})=
    \begin{pmatrix}
        (1+t)X\\
        (1-t)Y
    \end{pmatrix}.
\end{split}
\end{equation}
The deformation gradient is
\begin{equation*}
\label{eq:compressible-F}
    \F(\bm{X},t)
    =
    \begin{pmatrix}
        1+t & 0\\
        0 & 1-t
    \end{pmatrix}.
\end{equation*}
Therefore, the Jacobian determinant is
\begin{equation*}
\label{eq:compressible-J}
    J(\bm{X},t) =
    (1+t)(1-t)
    =
    1-t^2.
\end{equation*}
Hence, for every $-1<t<1$, one has $J(\bm{X},t)>0$ and therefore $\pl(\cdot,t)\in\DSP$. The current configuration is the ellipse
\begin{equation*}
\label{eq:compressible-current-ellipse}
    \Omega_t
    =
    \bm{\varphi}_t(\Omega_0)
    =
    \left\{
        (x,y)\in\mathbb{R}^2
        :
        \frac{x^2}{(1+t)^2}
        +
        \frac{y^2}{(1-t)^2}
        <1
    \right\}.
\end{equation*}
Its horizontal and vertical semi-axes are respectively $R_x(t)=1+t$ and $R_y(t)=1-t$. Thus, for $0<t<1$, the disk is stretched in the horizontal direction and compressed in the vertical direction. Contrary to the
incompressible example, however, these two deformations do not
compensate each other: the area of the ellipse is
\[
    |\Omega_t|
    =
    \pi(1+t)(1-t)
    =
    \pi(1-t^2),
\]
which tends to zero as $t$ tends to $1$. At the critical time $t=1$, one has
\[
\varphi(\bm{X},1)=
    \begin{pmatrix}
        2X\\
        0
    \end{pmatrix}
\]
for every $\bm{X}\in\Omega_0$, that is, the disk collapses in the horizontal segment
\begin{equation*}
\label{eq:compressible-limit-segment}
    \bm{\varphi}_1(\Omega_0)
    =
    \left\{
        (x,0)\in\mathbb{R}^2
        :
        -2<x<2
    \right\}.
\end{equation*}
All the material points lying on the same vertical segment of
the reference disk are mapped onto a single spatial point. The deformation gradient has rank one and the placement map is no longer injective. Unlike the isotropic contraction described in Example \ref{ex:isotropic contraction}, the present degeneration
is anisotropic: only one material direction collapses, while the other
remains nondegenerate. Moreover, comparison with Example \ref{ex:incompressible} shows the effect of the
volume-preserving constraint: the incompressible motion with the same
initial data exists for all times, whereas its unconstrained counterpart
loses admissibility in finite time.
\end{example}

\subsection*{Acknowledgments}

The authors would like to thank Marco Degiovanni and Luis Garc\'ia-Naranjo for useful comments and discussions about the content of this work.

\subsection*{CRediT author statement}

\textbf{Francesca Berlinghieri:} Conceptualization, Formal analysis, Investigation, Writing - Original Draft.
\textbf{Giulio G.\ Giusteri:}
Conceptualization, Supervision, Writing - Review \& Editing.

\subsection*{Declaration of competing interests}
The authors declare no competing interests.

\bibliographystyle{plain}
\bibliography{refs}
\end{document}